\newtheorem{theorem}{Theorem}[section]
\newtheorem*{theorem*}{Theorem}
\newcommand{\ie}{{\it i.e.}}
\newcommand\blfootnote[1]{%
  \begingroup
  \renewcommand{\@makefntext}[1]{\noindent\makebox[1.8em][r]#1}
  \renewcommand\thefootnote{}\footnote{#1}%
  \addtocounter{footnote}{-1}%
  \endgroup
}
\begin{document}

\title{Secure and Efficient General Matrix Multiplication On Cloud Using Homomorphic Encryption}

\author{
\IEEEauthorblockN{1\textsuperscript{st} Yang Gao}
\IEEEauthorblockA{University of Central Florida\\
yang.gao@ucf.edu}\\
\IEEEauthorblockN{4\textsuperscript{th} Wujie Wen}
\IEEEauthorblockA{Lehigh University\\
wuw219@lehigh.edu}
\and
\IEEEauthorblockN{2\textsuperscript{nd} Gang Quan}
\IEEEauthorblockA{Florida International University\\
gaquan@fiu.edu}\\
\IEEEauthorblockN{5\textsuperscript{th} Liqiang Wang}
\IEEEauthorblockA{University of Central Florida\\
liqiang.wang@ucf.edu}
\and
\IEEEauthorblockN{3\textsuperscript{rd} Soamar Homsi}
\IEEEauthorblockA{Air Force Research Laboratory\\
soamar.homsi@us.af.mil}\\
}

\thispagestyle{plain}
\pagestyle{plain}
\RestyleAlgo{ruled}

\maketitle

\begin{abstract}
%\boldmath
Despite the enormous technical and financial advantages of cloud computing, security and privacy have always been the primary concerns for adopting cloud computing facilities, especially for government agencies and commercial sectors with high-security requirements. Homomorphic Encryption (HE) has recently emerged as an effective tool in ensuring privacy and security for sensitive applications by allowing computing on encrypted data. One major obstacle to employing HE-based computation, however, is its excessive computational cost, which can be orders of magnitude higher than its counterpart based on the plaintext. In this paper, we study the problem of how to reduce the HE-based computational cost for general Matrix Multiplication (MM), {\it i.e.}, a fundamental building block for numerous practical applications, by taking advantage of the Single Instruction Multiple Data (SIMD) operations supported by HE schemes. Specifically, we develop a novel element-wise algorithm for general matrix multiplication, based on which we propose two \underline{HE}-based \underline{G}eneral \underline{M}atrix \underline{M}ultiplication (HEGMM) algorithms to reduce the HE computation cost. Our experimental results show that our algorithms can significantly outperform the state-of-the-art approaches of HE-based matrix multiplication.
\end{abstract}

\begin{IEEEkeywords}
Homomorphic Encryption, privacy protection, Matrix Multiplication
\end{IEEEkeywords}

\vspace{-0.2cm}

\section{Introduction}
Cloud computing has become an attractive solution for industry and individuals due to its flexibility, scalability, reliability, sustainability, and affordability \cite{varghese2018next,vasiljeva2017cloud}.
Despite the tremendous technical and business advantages of cloud computing, security has been one of the primary concerns for cloud users, especially for those with high-security requirements~\cite{scale2015state,rajaraman2014cloud}. 
Even though cloud platforms allow their users to have full control over security settings and policies, public cloud infrastructures are commonly shared among different users and applications, making the applications vulnerable to malicious attacks. 

\blfootnote{Source Code: https://github.com/EchizenG/HEGMM}
\blfootnote{Approved for Public Release on 06 Mar 2024. Distribution is Unlimited. Case Number: 2024-0184  (original case number(s):  AFRL-2024-0944)}

\vspace{-0.3cm}
Homomorphic Encryption (HE)~\cite{rivest1978data,gentry2009fully,brakerski2014BGV} has emerged as an effective tool to address the security and privacy concerns associated with outsourcing data and computation to untrusted third parties, such as public cloud service providers. HE maintains data secrecy while in transit and during processing and assures that the decrypted results are identical to the outcome when the same operations are applied to the data in plaintext. HE has raised growing interest from researchers and practitioners of many security- and privacy-sensitive cloud applications in various domains such as health care, finance, and government agencies. One of the grand challenges, however, is how to deal with the tremendous computational cost for HE computations, which can be orders of magnitude higher than that in the plaintext space~\cite{ran2022cryptogcn}. Unless HE computation cost can be effectively reduced, it would be infeasible to apply HE schemes in practical cloud applications.   

In this paper, we study the problem of how to reduce HE computation cost for general matrix multiplications (MM) by taking advantage of the single instruction multiple data (SIMD) scheme for HE operations~\cite{smart2014fully}. The SIMD scheme enables multiple data values to be packed into one ciphertext, and one single HE operation can be performed on all data elements in the ciphertext simultaneously. Accordingly, we develop a novel approach for HE-based MM operations, focusing on source matrices of arbitrary shapes. Specifically, we make the following contributions. 
\vspace{-0.2cm}
\begin{enumerate}
\item We present a novel element-wise method for MM. This method is general and can be applied to source matrices of arbitrary shapes with  a significant performance improvement;  
\item We develop two HE MM algorithms, with the second one improving the first one significantly.
Our HE MM algorithms pack matrix elements judiciously in encrypted message ``slots'' and perform pertinent operations by taking advantage of the SIMD structure in HE schemes to reduce the number of primitive HE operations, such as HE multiplications, rotations, and additions, which are computationally expensive, and therefore can significantly reduce the computational cost; 
\item We perform a rigorous analysis for the logical correctness of the algorithms and their complexities;   
\item We implement our algorithms using a Python HE library, Pyfhel~\cite{ibarrondo2021pyfhel}. Extensive experimental results show that our proposed algorithms can significantly outperform the state-of-the-art approaches.
\end{enumerate}

\section{Background and Related Work}
In this section, we briefly introduce the relevant background of HE and discuss the related work. 
\begin{table*}[htbp]
  \centering
  \caption{Comparison of Computational cost for HE vs. Plaintext Operations.}
    \begin{tabular}{c|ccccccc}
    \textbf{Operations} & \multicolumn{1}{l}{\textbf{Encryption (ms)}} &  \multicolumn{1}{l}{\textbf{decryption (ms)}} & \multicolumn{1}{l}{\textbf{Message Size (MB)}} & \multicolumn{1}{l}{\textbf{Addition (ms)}} & \multicolumn{2}{c}{\textbf{Multiplication (ms)}$^\dagger$} & \multicolumn{1}{l}{\textbf{Rotation (ms)}}\\
    \hline
    \textbf{HE} & 5.50 & 2.57 & 0.5 & 0.550 & 20.874(CC) & 4.138(CP) & 5.350 \\
    \textbf{Plaintext}  & - & - & - & 0.009 &  0.035 & 0.035 & 0.130 \\
    \hline
    \textbf{Ratio} & - & - & - & 61.1 & 596.4(CC) & 118.23(CP) & 41.15\\
    \end{tabular}%
  \label{tab:cost}%
  \begin{tablenotes}
    \footnotesize
    \item[\dagger] $\dagger$ \emph{CC} is the Multiplication between two ciphertexts while \emph{CP} is the Multiplication between ciphertext and plaintext.
  \end{tablenotes}
\end{table*}

\subsection{Homomorphic Encryption (HE)}

Homomorphic encryption (e.g. BGV~\cite{brakerski2014BGV}, BFV~\cite{fan2012BFV,Bfv12}, and CKKS~\cite{cheon2017CKKS}) enables computations to be performed based on encrypted data, with results still in encrypted form. As such, HE represents a promising tool to greatly enhance data privacy and security, especially when outsourcing computations to the public cloud. In the meantime, HE can be extremely computationally intensive~\cite{ames2015secure}, and improving its computation efficiency is key to making this technology practical for real applications. 

When performing HE matrix multiplication on the cloud, source matrices are first encrypted by clients and transferred to the cloud, and the results are transferred back to clients for decryption. Encrypting each individual element of a matrix into one cyphertext can lead to excessive encryption, decryption, and communication costs, in addition to a large number of HE operations. Table~\ref{tab:cost} shows our profiling results on encryption/decryption latency, message size, and computational costs with different HE operations (More detailed experimental settings are discussed in Sec.~\ref{Exp-setup}).  

To this end, Gentry and Halevi \cite{smart2014fully} proposed an efficient key generation technique that enables SIMD operations in HE. By encrypting multiple data items into one ciphertext, one single operation can be applied to all encrypted elements in the same ciphertext simultaneously, and thus, space and computing resources can be used more efficiently. 

As an example, BFV \cite{fan2012BFV,Bfv12} can support a number of primitive HE operations such as HE-Add (Addition), HE-Mult (Multiplication), HE-CMult (Constant Multiplication), and HE-Rot (Rotation). Given ciphertexts $ct_x=Enc(x_0,x_1,...,x_n)$, $ct_y=Enc(y_0,y_1,...,y_n)$ and a plaintext $pt=(p_0,p_1,...,p_n)$, we have

\begin{itemize}
    \item HE-Add: $ct_x+ct_y=Enc(x_0+y_0,x_1+y_1,...,x_n+y_n)$
    \item HE-Mult: $ct_x\times ct_y=Enc(x_0\times y_0,x_1\times y_1,...,x_n\times y_n)$
    \item HE-CMult: $ct_x\times pt=Enc(x_0\times p_0,x_1\times p_1,...,x_n\times p_n)$
    \item HE-Rot: $Rot(ct_x, i)= Enc(x_i, x_{i+1},.., x_n, x_0,.., x_{i-1})$
\end{itemize}

The HE operations are computationally intensive and can consume 
excessive computational time. In addition, HE operations also introduce \emph{noises} when performed on encrypted data~\cite{nocker2023he}, which must be well under control for the results to be decrypted successfully. Several HE operations, especially HE-Mult, can be extremely time-consuming (approximately 600$\times$ higher than its counterpart as shown in Table~\ref{tab:cost}) and introduce much larger noise~\cite{reagen2021cheetah}. Therefore, reducing the number of HE operations (especially the HE-Mult operations) becomes critical in designing practical applications, such as matrix multiplication, under the HE framework. 

\subsection{Related Work}

There are numerous research efforts on improving the computational efficiency of MM (e.g.~\cite{masliah2019algorithms, nagasaka2018high,jiang2020novel, liu2014efficient, nagasaka2018high, 8374488, zhang2020sparch}). However, none of them can be readily adapted to optimize the computation efficiency of MM in the context of HE computation. 

A naive method for HE MM is to encrypt each row/column in each matrix and then compute it using the traditional MM method. For the HE MM of $\mathcal{A}_{m\times l} \times \mathcal{B}_{l\times n}$, this would result in excessive storage requirements and computation times: $m\times n$ encrypted messages and totally $m\times l \times n$ HE-Mult operations. Another simple and intuitive approach (e.g.~\cite{lu2016using}) is to transform the MM problem into the matrix-vector multiplication problem and then adopt the SIMD scheme~\cite{halevi2014algorithms,smart2014fully} to perform the calculation. However, this requires $m+n$ ciphertexts and $m \times n$ homomorphic multiplication operations, which are still very costly.  

Duong et al.~\cite{Duong2017TMMP} and Mishra et al.~\cite{Mishra2017ISPE} presented approaches to pack the source matrix into a single polynomial, and then perform HE MM based on secure computation of inner product of encrypted vectors. It works for one single HE MM with well-defined dimensions but becomes problematic when multiple successive HE MMs are required in a cloud center. 

Jiang et al. \cite{jiang2018secure} proposed an intriguing HE MM approach for \emph{square matrix} with $O(d)$ computational complexity. They expanded their HE MM algorithm to handle rectangle MM ($\mathcal{A}_{l \times d} \times \mathcal{B}_{d\times d}$) with $l \le d$ and $d\mod l=0$. Source matrices can be enlarged to suit shape requirements for MM with variable shapes, although this may increase processing time and resource utilization. Huang et al.~\cite{Huang2023TDSC} advocated using blocking to better handle rectangular MM with source matrices as \emph{block} matrices with square matrices. This method is appealing for big matrices that cannot fit in one ciphertext. However, it is limited to square or two source rectangular matrices with integer multiple columns and rows.

Rathee et al. \cite{rathee2018faster} proposed to encrypt source matrices into the two-dimensional \emph{hypercube} structure~\cite{halevi2014algorithms} and then transform the MM problem to a series of matrix-vector multiplication problems. Huang et al. \cite{huang2022secure} extended this approach to make it applicable to general MM. 
% The computational complexity of this approach is similar to our first algorithm HEGMM presented later in this paper, even though we developed our algorithm from an entirely different perspective. In addition, 
As shown in section~\ref{subsec:HEGMM-Enhanced}, we have developed a more effective algorithm with higher computational efficiency. 

\section{Approaches}
When performing HE matrix multiplication in the SIMD manner, we need to make sure that two operands are aligned and located at the same location, {\it i.e.}, the same slot in the two encoded ciphertexts. Rearranging individual slots in an encrypted message can be costly. Therefore, a key to the success of reducing the computational complexity of the HE MM is how to perform the MM using element-by-element additions and multiplication operations. In what follows, we first introduce a novel algorithm to calculate MM with arbitrary dimensions using element-wise additions and multiplications. We then discuss in more detail how we implement the HE MM algorithm on packed ciphertexts in the SIMD manner and its enhanced version.

\subsection{The Matrix Multiplication Method using Element-Wise Computations}
\label{subsec:Element-WiseMM}

Consider an MM problem, $\mathcal{C}_{m\times n} = \mathcal{A}_{m\times l} \times \mathcal{B}_{l\times n}$, where $m, l,$ and $n \in \mathbb{Z}^+$. Our goal is to develop an algorithm such that $\mathcal{C}_{m\times n} = \sum_i \mathcal{A}_i \odot \mathcal{B}_i$, where $\mathcal{A}_i$, $\mathcal{B}_i$ are certain transformations of $\mathcal{A}_{m\times l}$, $\mathcal{B}_{l\times n}$, respectively, and $\odot$ represents the element-wise multiplication. For ease of our presentation, we define four matrix transformation operators as follows:    
\begin{eqnarray}
  \sigma(\mathcal{A})_{i,j} &=& \mathcal{A}_{i,[i+j]_l}, \quad  0 \leq i <m, 0 \leq j <l \label{def:sigma} \\ 
  \tau(\mathcal{B})_{i,j} &=& \mathcal{B}_{[i+j]_l,j}, \quad  0 \leq i <l, 0 \leq j <n \label{def:tau} \\ 
  \epsilon^k_{m\times n}(\mathcal{A})_{i,j} &=& \mathcal{A}_{i,[j+k]_l}, \quad  0 \leq i <m, 0 \leq j <n \label{def:eps} \\ 
  \omega^k_{m\times n}(\mathcal{B})_{i,j} &=& \mathcal{B}_{[i+k]_l,j}, \quad  0 \leq i <m, 0 \leq j <n \label{def:omega}
\end{eqnarray}
\noindent 
where $[x]_y$ denotes $x \bmod y$. 

The two transformation operators, $\sigma$ and $\tau$, are similar to those introduced in~\cite{jiang2018secure}, but \emph{more general and applicable to an arbitrary shape matrix} instead of a square matrix alone. Essentially, a $\sigma$ transformation rotates each row of a matrix horizontally by its corresponding row index (for example, each element in the $2_{nd}$ row is cyclically rotated 2 positions to the left), and a $\tau$ transformation rotates a column by its corresponding column index. Figures~\ref{Fig.op.1} and \ref{Fig.op.2} illustrate examples of $\sigma$ and $\tau$ transformations. 

We define two \emph{new} transformation operators, $\epsilon^k_{m\times n}$ and $\omega^k_{m\times n}$, with respect to a matrix of arbitrary shape. Given $\mathcal{C}_{m\times n} = \mathcal{A}_{m\times l} \times \mathcal{B}_{l\times n}$, operator $\epsilon^k_{m\times n}(\mathcal{A})$
generates a matrix with size of $m \times n$ from $\mathcal{A}_{m\times l}$ (duplicating or cropping columns when necessary), by shifting matrix $\mathcal{A}_{m\times l}$ to the left for $k$ columns. Similarly, $\omega^k_{m\times n}(\mathcal{B})$ generates a matrix with size of $m \times n$ from $\mathcal{B}_{l\times n}$ (duplicating or cropping rows when necessary), by shifting matrix $\mathcal{B}_{l\times n}$ upward for $k$ rows. Figures~\ref{Fig.op.3} and \ref{Fig.op.4} illustrate the transformation operators  $\epsilon^0_{3\times 5}(\mathcal{A})$, $\epsilon^1_{3\times 5}(\mathcal{A})$, $\omega^0_{3\times 5}(\mathcal{B})$, and  $\omega^1_{3\times 5}(\mathcal{B})$, respectively.

With the operators defined above, we can perform a general MM using the element-wise operations as follows: 

\begin{equation} \label{eqn:elementwise}
\mathcal{A}_{m\times l} \times \mathcal{B}_{l\times n} = \sum_{k=0}^{l-1}(\epsilon^k_{m\times n} \circ \sigma(\mathcal{A}))\odot (\omega^k_{m\times n} \circ \tau(\mathcal{B})),
\end{equation}
where $\circ$ represents the composition operation. Note that the multiplication ({\it i.e.,} $\odot$) in Equation (\ref{eqn:elementwise}) is element-wise and applied to the entire operands. Figure~\ref{Fig:mm-example-1} shows an example of how an MM can be conducted based on Equation~(\ref{eqn:elementwise}). Given two source matrices, i.e., $\mathcal{A}_{5\times 3}\times \mathcal{B}_{3\times 4}$, with $m=5, l=3$, and $n=4$, $\sigma$ and $\tau$ transformations are conducted on $\mathcal{A}$ and $\mathcal{B}$, respectively. Then three iterations of $\epsilon$ and $\omega$ transformations are performed to obtain three partial products, which are accumulated to get the final product. We have the following proof sketch to show that the above method indeed produces the correct MM results for arbitrary matrices. 

\begin{equation} \label{form:proof}
\begin{split}
    &\sum_{k=0}^{l-1}(\epsilon^k_{m\times n} \circ \sigma(\mathcal{A}))_{i,j}\cdot (\omega^k_{m\times n} \circ \tau(\mathcal{B}))_{i,j} \\
    = &\sum_{k=0}^{l-1}\sigma(\mathcal{A})_{i,[j+k]_l}\cdot \tau(\mathcal{B})_{[i+k]_l,j} \\  
    = &\sum_{k=0}^{l-1}\mathcal{A}_{i,[i+j+k]_l}\cdot \mathcal{B}_{[i+j+k]_l,j}  \\ 
    = &\sum_{k=0}^{l-1}\mathcal{A}_{i,k}\cdot \mathcal{B}_{k,j}  = (\mathcal{A}\cdot \mathcal{B})_{i,j}
\end{split}
\end{equation}

\begin{figure}[h]
{
\centering
\subfigure[$\sigma$ operator: rotating $i_{th}$ row left by $i$ slots.]{
\label{Fig.op.1}
\includegraphics[width=0.38\textwidth]{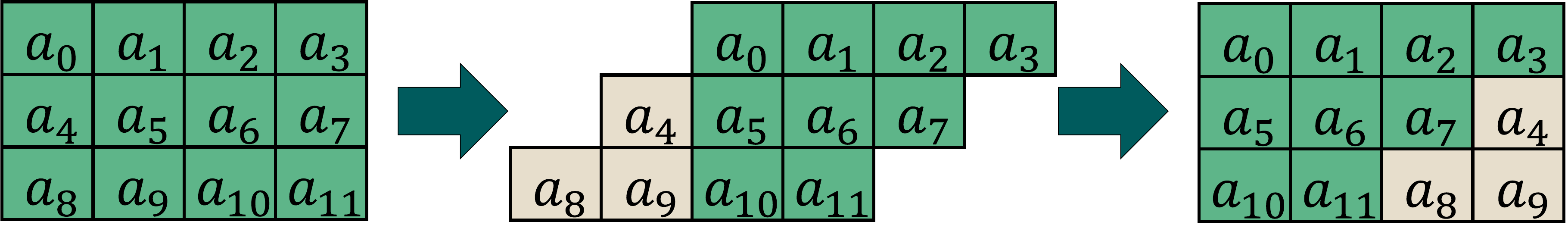}}

\subfigure[$\tau$ operator: rotating $j_{th}$ column upward by $j$ slots.]{
\label{Fig.op.2}
\includegraphics[width=0.28\textwidth]{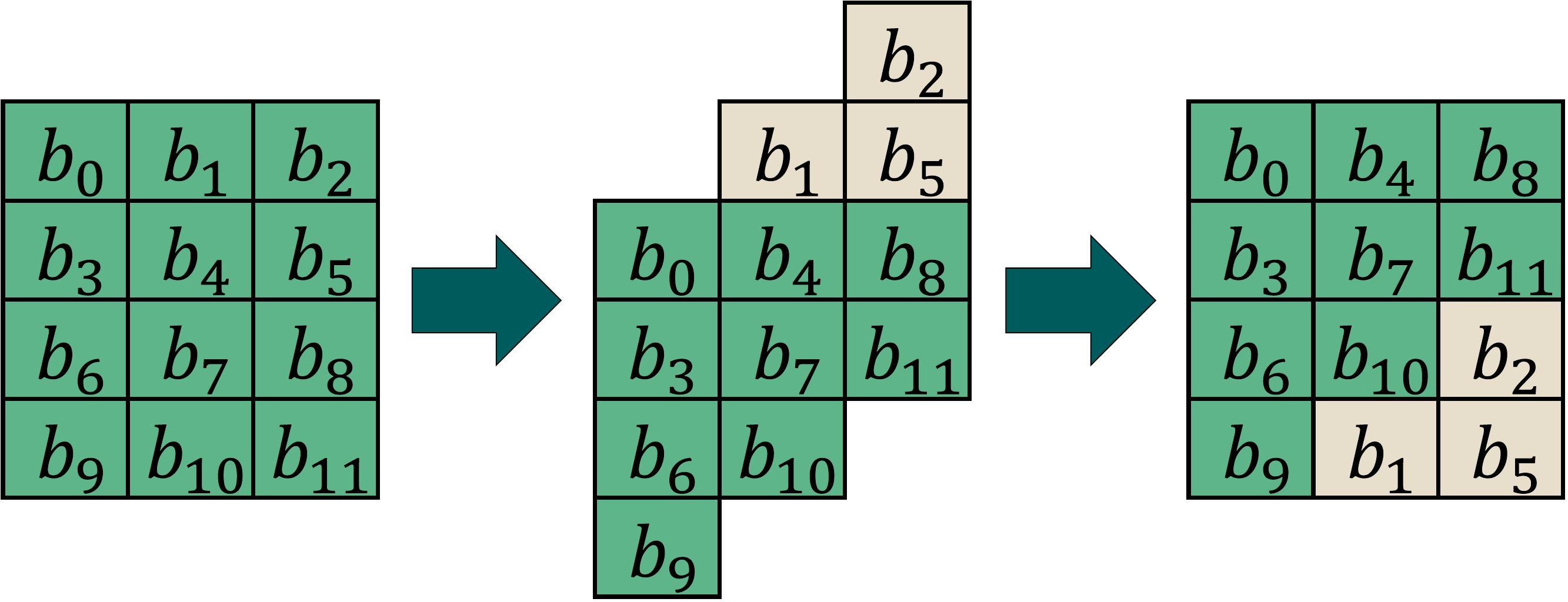}}

\subfigure[$\epsilon^0_{3\times 5}(\mathcal{A}_{3\times2})$ and $\epsilon^1_{3\times 5}(\mathcal{A}_{3\times2})$ with  $\mathcal{C}_{3\times 5} = \mathcal{A}_{3\times 2} \times \mathcal{B}_{2\times5}$.]{
\label{Fig.op.3}
\includegraphics[width=0.42\textwidth]{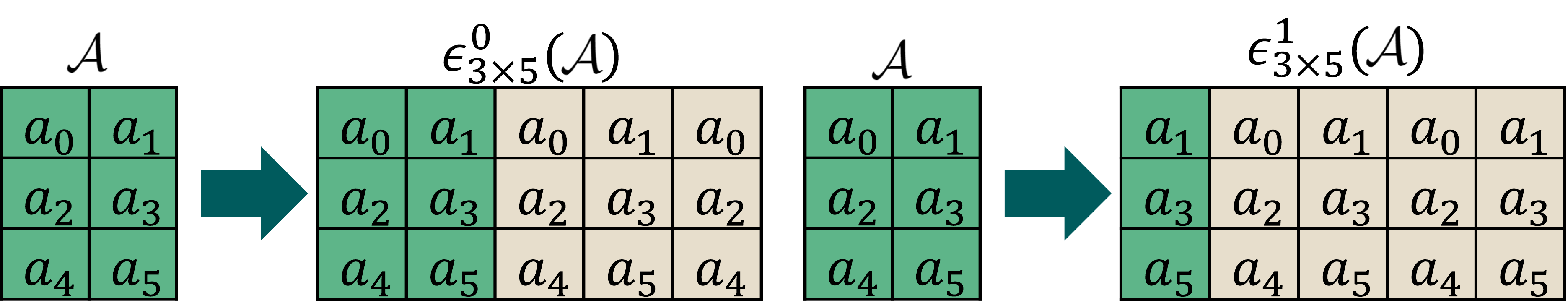}}

\subfigure[$\omega^0_{3\times 5}(\mathcal{B}_{2\times5})$  and $\omega^1_{3\times 5}(\mathcal{B}_{2\times5})$ with  $\mathcal{C}_{3\times 5} = \mathcal{A}_{3\times 2} \times \mathcal{B}_{2\times5}$.]{
\label{Fig.op.4}
\includegraphics[width=0.47\textwidth]{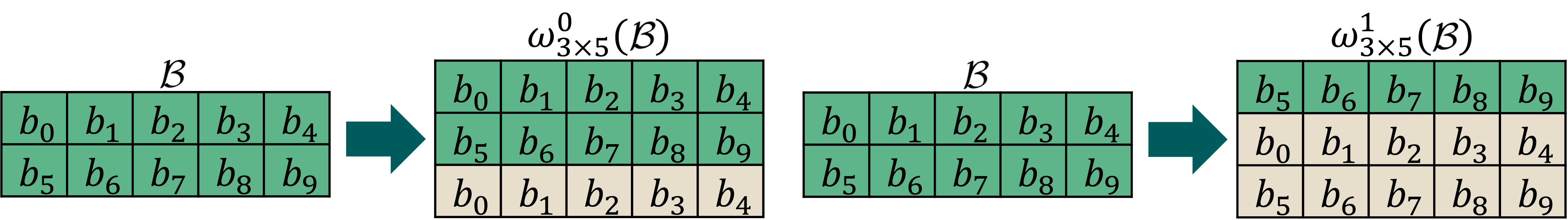}}

\caption{The illustration of $\sigma$, $\tau$, $\epsilon$, and $\omega$ transformation operators}
\label{Fig.op}
}

\setlength{\belowdisplayskip}{10pt}
\end{figure}

\begin{figure}[h]
\centering
\includegraphics[width=0.35\textwidth]{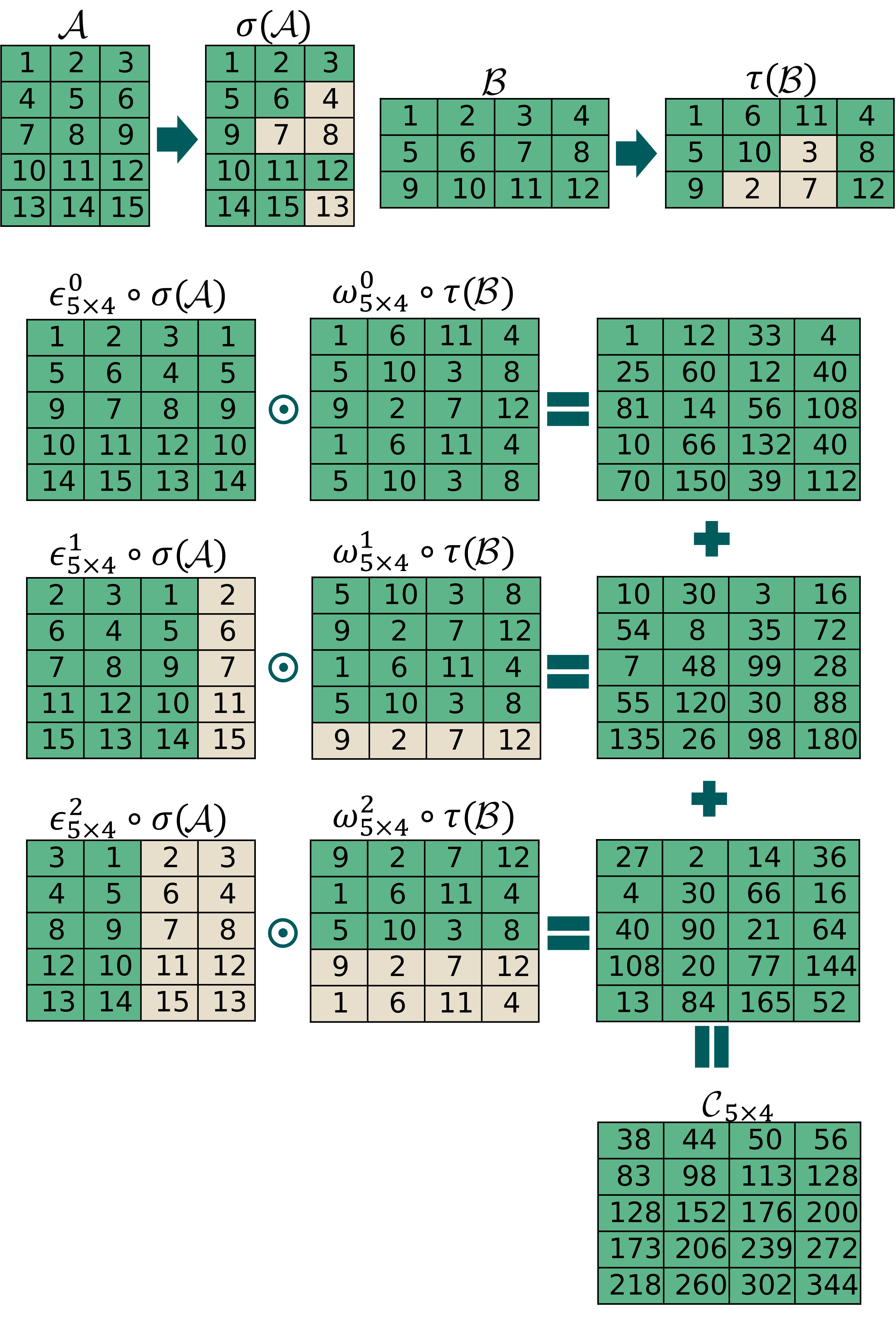}
\caption{An illustration example of the element-wise MM for $\mathcal{A}_{5\times 3}\times \mathcal{B}_{3\times 4}$ with $m=5, l=3$, and $n=4$, $\sigma$ and $\tau$ transformations are first conducted on $\mathcal{A}$ and $\mathcal{B}$, respectively. Then three iterations of $\epsilon$ and $\omega$ transformations are performed to obtain three partial products, which are accumulated to get the final product.}

\label{Fig:mm-example-1}
\end{figure}

\iftrue

\subsection{The \underline{HE}-based \underline{G}eneral \underline{M}atrix \underline{M}ultiplication (HEGMM)}
\label{subsec:HEGMM}

With the element-wise matrix multiplication method introduced above, we are now ready to present our approach for HE matrix multiplication in the SIMD manner. As mentioned before, 

it is critical to minimize the number of HE operations (such as those, especially the HE-Mult, as shown in Table~\ref{tab:cost} ) and thus reduce the computational cost. % when implementing the HE MM for practical applications. 
In this subsection, we first introduce how transformations, such as $\sigma, \tau, \epsilon^k_{m\times n}$, and $\omega^k_{m\times n}$, are performed using the primitive HE operations. We then present our first algorithm for HE MM based on the element-wise matrix multiplication strategy presented above.

\subsubsection{Linear Transformation}
\label{subsec:Ltranform}

To perform MM under the HE framework, two-dimensional matrices need to be flattened (either with column-major or row-major order) into one-dimensional ciphertexts, and all operations are performed on the ciphertexts. Therefore, a critical challenge to implement the computational strategy in Equation~(\ref{eqn:elementwise}) is how to efficiently conduct $\sigma, \tau, \epsilon^k_{m\times n}$, and $\omega^k_{m\times n}$, $k=\{0, ..., (l-1)\}$  transformation operations. 
Note that an arbitrary linear transformation over a vector \textit{\textbf{m}}, \ie, $L: \mathcal{R}_{x} \rightarrow \mathcal{R}_{y}$, can be expressed as 
$L: \textit{\textbf{m}} \rightarrow \textit{U}\cdot \textit{\textbf{m}}$, where $\textbf{U} \in \mathcal{R}_{y\times x}$ is the transformation matrix. 
As shown by Halevi and Shoup \cite{halevi2014algorithms}, matrix-vector multiplications can be calculated using the combination of rotation and element-wise multiplication operations. Specifically, for $0 \le z < 
x$, let the $z\text{-}th$ \textit{diagonal} \textit{vector} of \textbf{U} be 

\begin{equation}
\nonumber
   \textit{\textbf{u}}_z=
   \begin{cases}
   \underset{\left | x \right | }{(\underbrace{U_{0,z},U_{1,z+1},...,U_{x-z-1,x-1},0,...,0})} & \text{ if } z \ge 0 \\
   \underset{\left | x \right | }{(\underbrace{0,...,0,U_{z,0},U_{z+1,1},...,U_{y-1,y-z-1}})} & \text{ if } z<0
   \end{cases}
\end{equation}

where $x$ and $y$ are the matrix dimensions and $z$ is the index of diagonal vector.

Then we have

\begin{equation} \label{eqn:lintrans}
   \textbf{U}\cdot \textit{\textbf{m}}=
   \sum_{-y \le z < x}(\textit{\textbf{u}}_z\odot HE\text{-}Rot(\textit{\textbf{m}};z)),
\end{equation}
where $\odot$ denotes the component-wise multiplication.

According to Equation (\ref{eqn:lintrans}), we can construct the transformations defined in Equations (\ref{def:sigma})-(\ref{def:omega}) with flattened matrix $\Tilde{\mathcal{A}}$ and $\Tilde{\mathcal{B}}$ such that
{
\begin{eqnarray}
  \sigma(\Tilde{\mathcal{A}}) &=& \textbf{U}^\sigma \cdot \Tilde{\mathcal{A}} , \label{eqn:sigma}\\
  \tau(\Tilde{\mathcal{B}}) &=& \textbf{U}^\tau \cdot \Tilde{\mathcal{B}}, \label{eqn:tau} \\
  \epsilon^k_{m\times n}(\Tilde{\mathcal{A}}) &=& \textbf{U}^{\epsilon^k_{m\times n}} \cdot \Tilde{\mathcal{A}}, \label{eqn:epsilon} \\
  \omega^k_{m\times n}(\Tilde{\mathcal{B}}) &=& \textbf{U}^{\omega^k_{m\times n}} \cdot \Tilde{\mathcal{B}} \label{eqn:omega}.
\end{eqnarray}
}
Let $\Tilde{\mathcal{A}}$ and $\Tilde{\mathcal{B}}$ be source matrices flattened in the \emph{column-major} order. By generalizing the location change patterns for the operators, we can define $\textbf{U}^\sigma$, $\textbf{U}^\tau$, $\textbf{U}^{\epsilon^k_{m\times n}}$ and $\textbf{U}^{\omega^k_{m\times n}}$ as follows:
\begin{eqnarray} 
    \textbf{U}^\sigma_{i+j\cdot m,h} &=& 
        \begin{cases}
         1 & \text{ if } h= i+ [i+j]_l\cdot m, \\
          0 & \text{ otherwise; } 
        \end{cases} \label{eqn:Usigma}\\
    \textbf{U}^\tau_{i+j\cdot l, h} &=&
        \begin{cases}
          1& \text{ if } h = [i+j]_l + j\cdot l, \\
          0& \text{ otherwise; }
        \end{cases} \label{eqn:Utau}\\
    \textbf{U}^{\epsilon^k_{m\times n}}_{i,j} &=& 
        \begin{cases}
         1 & \text{ if } j=[k\cdot m+i]_{m\cdot l}\\
         0 & \text{ otherwise; } 
        \end{cases} \label{eqn:Uepsc}\\
    \textbf{U}^{\omega^k_{m\times n}}_{i,j} &=&
        \begin{cases}
          1& \text{ if } j=[k+[i]_m]_{l}+\left \lfloor i/m \right \rfloor \cdot l\\
          0& \text{ otherwise; }
        \end{cases} \label{eqn:Uomgc}
\end{eqnarray}
\noindent For the sake of clarity, scopes of $i$, $j$ and $h$ in Equation (\ref{eqn:Usigma})-(\ref{eqn:Uomgc}) are listed below.

\begin{table}[htbp]
  \centering
  % \caption{The ranges of $i$, $j$ and $h$ in Equations (\ref{eqn:Usigma})-(\ref{eqn:Uomgc})}
    \begin{tabular}{c|ccc}
          & \textbf{$i$} & \textbf{$j$} & \textbf{$h$} \\
    \hline
    {$\textbf{U}^\sigma$} & $[0,m)$ & $[0,l)$ & $[0,ml)$ \\
    {$\textbf{U}^\tau$} & $[0,l)$ & $[0,n)$ & $[0,nl)$ \\
    {$\textbf{U}^{\epsilon^k_{m\times n}}$} & $[0,mn)$ & $[0,ml)$ & N/A\\
    {$\textbf{U}^{\omega^k_{m\times n}}$} & $[0,mn)$ & $[0,nl)$ & N/A\\
    \end{tabular}%
  \label{tab:ijrange}%
\end{table}%

When $\Tilde{\mathcal{A}}$ and $\Tilde{\mathcal{B}}$ are matrices flattened in the \emph{row-major} order, similar transformation matrices can be constructed. We  omit it due to page limit.

Note that, from equation (\ref{eqn:lintrans}), the $\sigma, \tau, \epsilon^k_{m\times n}$, and $\omega^k_{m\times n}$ operations can be realized using a sequence of HE-Rot, HE-CMult, and HE-Add operations.
Figure~\ref{Fig.plt-ex} shows examples of transformations $\epsilon^1_{5\times 3}(\mathcal{A})$ and $\omega^1_{3\times 5}(\mathcal{B})$ as well as the associated matrices $\textbf{U}^{\epsilon^1_{5\times 3}}$ and $\textbf{U}^{\omega^1_{3\times 5}}$ for matrix $\mathcal{A}_{5\times 3}$ and $\mathcal{B}_{3\times 5}$, respectively.
In the meantime, equation (\ref{eqn:lintrans}) clearly shows that the computational cost depends heavily on how many diagonal vectors (i.e., $\textit{\textbf{u}}_z$ in equation (\ref{eqn:lintrans})) in the corresponding transformation matrices, i.e., $\textbf{U}^\sigma$, $\textbf{U}^\tau$, $\textbf{U}^{\epsilon^k_{m\times n}}$,  and $\textbf{U}^{\omega^k_{m\times n}}$, are non-zeros. The more the non-zero diagonal vectors are, the higher the computation costs become. To this end, we have the following theorems that reveal important properties related to non-zero diagonal vectors in these transformation matrices.

\begin{theorem}
% \footnote{Due to the page limit, the proofs of other theorems are omitted from this manuscript. Full proofs will appear as an appendix to its full version.}
\label{thrm:sigma-diag}
Let $\sigma(\mathcal{A}) = \textbf{U}^\sigma \mathcal{A}$ for $\mathcal{A}$ with a dimension of $m\times l$. There are at most $2\cdot \min(m,l)-1$ non-zero diagonals in $\textbf{U}^\sigma$ no matter whether the matrix is flattened with a column-major or row-major order. 
\end{theorem}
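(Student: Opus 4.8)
The plan is to read off the non-zero diagonals of $\textbf{U}^\sigma$ directly from its explicit description in (\ref{eqn:Usigma}) and count the distinct diagonal indices. Recall that, in the sense of (\ref{eqn:lintrans}), the diagonal index of a non-zero entry $\textbf{U}^\sigma_{r,c}$ is the signed offset $c-r$, and the number of non-zero diagonals is the number of distinct such offsets over all entries equal to $1$. For the column-major flattening, (\ref{eqn:Usigma}) tells us the $1$'s sit at row $r=i+j\cdot m$ and column $c=i+[i+j]_l\cdot m$ with $0\le i<m$ and $0\le j<l$, so the offset is $c-r=([i+j]_l-j)\cdot m$. Thus everything reduces to counting the distinct values taken by $([i+j]_l-j)\,m$ as $(i,j)$ ranges over $[0,m)\times[0,l)$.

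First I would simplify the inner quantity $d:=[i+j]_l-j$. Since $[i+j]_l\in[0,l)$ and $j\in[0,l)$, we have $d\in(-l,l)$; moreover $d\equiv i \pmod l$. Hence $d$ can only be one of the two residue representatives of $i$ modulo $l$ lying in $(-l,l)$, namely $d=[i]_l$ or $d=[i]_l-l$, and when $[i]_l=0$ only $d=0$ is possible. I would then verify that, for $[i]_l\ne 0$, both representatives actually occur as $j$ varies: writing $d=i-l\lfloor (i+j)/l\rfloor$, the floor takes two consecutive values precisely because the length-$l$ block $\{i,\dots,i+l-1\}$ straddles a multiple of $l$ whenever $i$ is not itself a multiple of $l$.

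With this in hand the counting is routine. Let $R=\{[i]_l:0\le i<m\}$; then $|R|=\min(m,l)$ and $0\in R$. The set of offsets is $\{0\}\cup\{\,[i]_l\,m:[i]_l\in R\setminus\{0\}\,\}\cup\{\,([i]_l-l)\,m:[i]_l\in R\setminus\{0\}\,\}$, i.e.\ one ``main'' offset $0$, together with $|R\setminus\{0\}|=\min(m,l)-1$ strictly positive offsets and $\min(m,l)-1$ strictly negative offsets. These three groups are pairwise disjoint by sign, and within each group distinct residues give distinct offsets, so the total number of non-zero diagonals is $1+2(\min(m,l)-1)=2\min(m,l)-1$, matching (indeed attaining) the claimed bound.

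Finally, for the row-major flattening I would repeat the same computation with the flat index $r=i\cdot l+j$, which yields offset $c-r=[i+j]_l-j=d$ (the factor $m$ simply disappears); the identical residue analysis gives the same count. The only delicate point, and the step I would write out most carefully, is the modular bookkeeping in the second paragraph — confirming that the offset depends on $i$ only through $[i]_l$, that exactly the two sign-representatives are realized, and that $|R|=\min(m,l)$ — since everything else is disjoint-union counting.
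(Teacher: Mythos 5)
Your proof is correct, and at its core it follows the same overall strategy as the paper's: read the positions of the $1$'s off Equation~(\ref{eqn:Usigma}), express the diagonal offset $c-r$ of each non-zero entry, and count how many distinct values that offset can take. Where the two differ is in how the counting is organized, and the difference is worth noting. The paper works with the form $(i-\left\lfloor (i+j)/l \right\rfloor\cdot l)\cdot m$, invokes the sandwich $\left\lfloor i/l \right\rfloor \le \left\lfloor (i+j)/l \right\rfloor \le \left\lfloor i/l \right\rfloor+1$, and then splits into the two cases $m<l$ and $m\ge l$, obtaining the bounds $2m-1$ and $2l-1$ separately. You instead observe that the offset equals $d\cdot m$ with $d=[i+j]_l-j$, that $d\equiv i \pmod{l}$ and $d\in(-l,l)$, so $d$ is forced to be $[i]_l$ or $[i]_l-l$ (and $0$ alone when $[i]_l=0$); counting residue classes, of which there are $\min(m,l)$, then yields $1+2(\min(m,l)-1)$ in one stroke. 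This buys three things the paper's write-up does not give: the case analysis on $m$ versus $l$ disappears, you establish attainment (the count is exactly $2\min(m,l)-1$, not merely at most), and the row-major case is handled by the identical computation with the factor $m$ dropped, whereas the paper only asserts that a ``similar proof can be obtained.'' The paper's floor-function bookkeeping, on the other hand, has the advantage of being the same machinery reused in the proofs of Theorems~\ref{thrm:eps-diag} and~\ref{thrm:omg-diag}, so the four appendix proofs read uniformly; your residue argument is self-contained but specific to the structure of $\sigma$ (and, symmetrically, $\tau$).
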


\begin{theorem}
\label{thrm:tau-diag}
Let $\tau(\mathcal{B}) = \textbf{U}^\tau \mathcal{B}$ for $\mathcal{B}$ with a dimension of $l\times n$. There are at most $2\cdot \min(n,l)-1$ non-zero diagonals in $\textbf{U}^\tau$ no matter if the matrix is flattened with a column-major or row-major order. 
\end{theorem}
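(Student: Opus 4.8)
The plan is to reduce the count of nonzero diagonals of $\textbf{U}^\tau$ to a short modular-arithmetic computation, mirroring the argument behind Theorem~\ref{thrm:sigma-diag}. Recall from Equation~(\ref{eqn:lintrans}) that a nonzero entry $\textbf{U}^\tau_{r,h}=1$ lies on the diagonal indexed by $z\equiv h-r \pmod{nl}$ (the rotation amount carrying input slot $h$ onto output slot $r$), so the number of nonzero diagonals equals the number of distinct residues $[h-r]_{nl}$ taken over all $1$-entries of $\textbf{U}^\tau$. The whole proof therefore amounts to enumerating these residues and bounding their count by $2\min(n,l)-1$, separately for each flattening order.

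First I would treat the column-major case using Equation~(\ref{eqn:Utau}): the $1$-entries occur at $r=i+j\cdot l$ and $h=[i+j]_l+j\cdot l$ with $i\in[0,l)$, $j\in[0,n)$, so $z = h-r = [i+j]_l - i$. The crucial observations are that $z\equiv j \pmod l$ and that $z\in\{-(l-1),\dots,l-1\}$. Fixing the residue $t=[j]_l$ and letting $i$ range, $z$ equals $t$ when $i+t<l$ and $t-l$ when $i+t\ge l$; hence $t=0$ yields the single value $0$, while each $t\ge 1$ yields the two values $t$ and $t-l$. Since $[j]_l$ runs over exactly $\min(n,l)$ residues as $j$ ranges over $[0,n)$, this produces at most $1+2(\min(n,l)-1)=2\min(n,l)-1$ integers in $\{-(l-1),\dots,l-1\}$; for $n\ge 2$ these lie in a window of width $2l-2<nl$ and are therefore pairwise distinct modulo $nl$, while the degenerate case $n=1$ gives the single diagonal $z=0$. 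This establishes the bound, and in fact shows it is tight, for column-major flattening.

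For row-major flattening I would repeat the computation with $r=i\cdot n+j$ and $h=[i+j]_l\cdot n+j$, giving $z=([i+j]_l-i)\cdot n$. Now the two representatives $t$ and $t-l$ collapse, since $(t-l)n = tn-nl\equiv tn \pmod{nl}$, so only the $\min(n,l)$ values $[j]_l\cdot n$ survive; as $\min(n,l)\le 2\min(n,l)-1$, the stated bound holds a fortiori. Alternatively, the two orderings can be unified by the identity $\tau(\mathcal{B})=(\sigma(\mathcal{B}^{T}))^{T}$ together with the fact that column-major flattening of an $l\times n$ matrix coincides with row-major flattening of its $n\times l$ transpose; this realizes $\textbf{U}^\tau$ as a $\textbf{U}^\sigma$ for the $n\times l$ matrix $\mathcal{B}^{T}$ and lets the bound be inherited directly from Theorem~\ref{thrm:sigma-diag} with $\min(n,l)$ in the role of $\min(m,l)$.

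I expect the main obstacle to be the modular bookkeeping across the two orderings: one must verify carefully that in the column-major case the representatives $t$ and $t-l$ remain distinct modulo $nl$ (so the count is genuinely $2\min(n,l)-1$ and does not collapse), whereas in the row-major case they deliberately coincide, and that in both cases $[j]_l$ sweeps precisely $\min(n,l)$ residues. The small cases ($n=1$, and the dependence of $\min(n,l)$ on whether $n<l$ or $n\ge l$) are where sign and boundary errors are most likely, so I would record the range $z\in\{-(l-1),\dots,l-1\}$ explicitly and deduce distinctness modulo $nl$ from that range rather than by case inspection.
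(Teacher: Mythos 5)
Your proof is correct, and its column-major core is essentially the paper's own computation: from Equation~(\ref{eqn:Utau}) you form the diagonal index $z=h-r=[i+j]_l-i$, exactly the quantity $j-\left\lfloor\frac{i+j}{l}\right\rfloor\cdot l$ the paper analyzes. Where you diverge is in the bookkeeping and in coverage. The paper splits into the cases $n<l$ and $n\ge l$ and bounds $\left\lfloor\frac{i+j}{l}\right\rfloor$ by floor inequalities; you unify both cases by fixing the residue $t=[j]_l$, noting $z\in\{t,t-l\}$ with $t=0$ degenerate, and observing that $[j]_l$ sweeps exactly $\min(n,l)$ residues --- equivalent arithmetic, but cleaner, and your window argument ($z\in\{-(l-1),\dots,l-1\}$, width $<nl$) makes the distinctness of the $2\min(n,l)-1$ values explicit where the paper leaves it implicit. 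You also do two things the paper does not: you carry out the row-major case concretely (the paper waves it off as ``similar proof''), and you give the reduction $\tau(\mathcal{B})=(\sigma(\mathcal{B}^{T}))^{T}$ combined with the fact that column-major flattening of the $l\times n$ matrix $\mathcal{B}$ coincides with row-major flattening of the $n\times l$ matrix $\mathcal{B}^{T}$, so the bound is inherited from Theorem~\ref{thrm:sigma-diag} with $n$ in the role of $m$; that reduction is the slickest route and would let the paper prove Theorems~\ref{thrm:sigma-diag} and~\ref{thrm:tau-diag} as one statement. One caveat on your row-major aside: the claimed collapse of $tn$ and $(t-l)n$ to a single diagonal holds only if diagonals are counted modulo $nl$, whereas the paper's formalism in Equation~(\ref{eqn:lintrans}) indexes diagonals by \emph{signed} integers $z$, with distinct vectors $\textit{\textbf{u}}_z$ for $z\ge 0$ and $z<0$ --- the operationally correct convention when the flattened matrix occupies $nl$ slots of a larger ciphertext, since HE-Rot by $tn$ and by $tn-nl$ are then genuinely different operations. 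Under that convention the row-major count is also up to $2\min(n,l)-1$ rather than $\min(n,l)$; since the theorem asserts only an upper bound, this mismatch does not affect the correctness of your proof, but the ``tighter'' row-major count should not be read as a computational saving.
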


\begin{theorem}
\label{thrm:eps-diag}
Let $\epsilon^k_{m\times n}(\mathcal{A}) =\textbf{U}^{\epsilon^k_{m\times n}} \mathcal{A}$ be the linear transformation $\epsilon_{m\times n}: \mathcal{R}_{m\times l} \rightarrow \mathcal{R}_{m\times n}$ with matrix $\mathcal{A}$ having a dimension of $m\times l$. 
There are at most $\left \lfloor \frac{n}{l} \right \rfloor +1$ non-zero diagonal vectors in $\textbf{U}^{\epsilon^k_{m\times n}}$ when the matrix is flattened with the \textbf{column-major} order; 
There are at most $(\left \lfloor \frac{n}{l} \right \rfloor +2)\cdot m$ non-zero diagonal vectors in $\textbf{U}^{\epsilon^k_{m\times n}}$ when matrix $\mathcal{A}$ is flattened with the \textbf{row-major} order. 
Specifically, when $n=l$, there are no more than 2 non-zero diagonals in $\textbf{U}^{\epsilon^k_{m\times n}}$, no matter if the matrix is flattened in column-major or row-major order.
\end{theorem}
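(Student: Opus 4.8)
The plan is to count the non-zero diagonals of $\textbf{U}^{\epsilon^k_{m\times n}}$ directly from its explicit entry pattern, treating the column-major and row-major layouts separately. The starting observation, for the column-major layout in Equation~(\ref{eqn:Uepsc}), is that each row $i\in[0,mn)$ of $\textbf{U}^{\epsilon^k_{m\times n}}$ has \emph{exactly one} non-zero entry, sitting in column $j=[km+i]_{ml}$. Hence the non-zero entries are in bijection with the rows, and by the definition of the diagonal vectors $\textbf{u}_z$ (an entry $U_{a,b}$ lies on diagonal $z=b-a$), the entry in row $i$ lies on the diagonal indexed by $z(i)=[km+i]_{ml}-i$. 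Counting non-zero diagonals therefore reduces to counting the distinct values taken by $z(i)$ as $i$ runs over $[0,mn)$.

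First I would simplify $z(i)$. Writing $[km+i]_{ml}=km+i-ml\cdot q(i)$ with $q(i)=\lfloor (km+i)/(ml)\rfloor$, the row index $i$ cancels and we obtain $z(i)=km-ml\cdot q(i)$. Thus $z(i)$ depends on $i$ only through the ``wrap count'' $q(i)$, and since $z$ is a strictly decreasing affine function of $q$, distinct values of $q(i)$ correspond bijectively to distinct diagonals. So the number of non-zero diagonals equals the number of distinct values of $q(i)$ for $i\in[0,mn)$. As $q$ is non-decreasing in $i$ and moves in unit steps, this number is $q(mn-1)-q(0)+1$. Using $0\le k\le l-1$ gives $q(0)=\lfloor km/(ml)\rfloor=0$ and $q(mn-1)=\lfloor (km+mn-1)/(ml)\rfloor$, and a short floor estimate bounds the latter by $\lfloor n/l\rfloor$, yielding the claimed $\lfloor n/l\rfloor+1$. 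Specializing to $n=l$ forces $km+ml-1<2ml$, so $q(mn-1)\le 1$, giving the stated bound of $2$.

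For the row-major layout I would repeat the same steps with the corresponding transformation matrix (built, as the paper notes, by the analogous construction). The crucial difference is that, after flattening output slot $s=j+i\cdot n$ to input slot $[j+k]_l+i\cdot l$, the diagonal index becomes $z=\big(k-l\lfloor (j+k)/l\rfloor\big)+i\,(l-n)$, which now depends on the output row $i\in[0,m)$ through the term $i(l-n)$, not only on the column wrap $\lfloor (j+k)/l\rfloor$. For each fixed $i$ the column wrap takes at most $\lfloor n/l\rfloor+2$ values, and the $m$ distinct offsets $i(l-n)$ shift these into disjoint diagonal bands; summing over the $m$ rows gives the bound $(\lfloor n/l\rfloor+2)\cdot m$.

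The main obstacle is the discrete bookkeeping rather than any conceptual difficulty. In the column-major case the delicate point is the partial final block: I must verify that the diagonal contributed by the last, possibly incomplete, wrap (the wrap-around diagonal $z=km-ml\cdot q(mn-1)$) is counted exactly once, and that the floor estimate for $q(mn-1)$ is sharp enough to land on $\lfloor n/l\rfloor$ for every admissible $k\in[0,l)$. In the row-major case the analogous subtlety is to confirm that the per-row diagonal sets indexed by different $i$ do not overlap, so that the total is genuinely the sum over rows and the factor $m$ is correct; this is precisely where the $l-n$ spacing between consecutive rows must be checked against the width of each per-row band.
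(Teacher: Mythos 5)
Your strategy is the same as the paper's: for the column-major layout both arguments note that row $i$ of $\textbf{U}^{\epsilon^k_{m\times n}}$ has its single $1$ in column $[km+i]_{ml}$, so the diagonal index is $km-ml\,q(i)$ with $q(i)=\left\lfloor(km+i)/(ml)\right\rfloor$, and the count reduces to counting the distinct values of $q(i)$; for the row-major layout both split $i\in[0,mn)$ into $m$ blocks of length $n$ and count values of the inner floor term per block. Your row-major half is sound (and the disjointness of the per-row bands that worries you is irrelevant for an upper bound, since overlaps can only make the union smaller), and your column-major specialization to $n=l$ is correct.

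The genuine gap is the step you defer in the column-major case: the claim that $q(mn-1)=\left\lfloor(km+mn-1)/(ml)\right\rfloor\le\lfloor n/l\rfloor$ for every $k\in[0,l)$ is false, so no ``short floor estimate'' can close it. Take $m=1$, $l=3$, $n=5$, $k=2$: then $q(mn-1)=\lfloor 6/3\rfloor=2>1=\lfloor n/l\rfloor$, and $\textbf{U}^{\epsilon^2_{1\times 5}}$ really has three non-zero diagonals, $z\in\{2,-1,-4\}$ (rows $0,\dots,4$ have their $1$'s in columns $2,0,1,2,0$), exceeding the claimed bound $\lfloor 5/3\rfloor+1=2$. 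In general, for $k=l-1$ the exact count is $q(mn-1)+1=\left\lfloor(m(n-1)-1)/(ml)\right\rfloor+2$, which equals $\lfloor n/l\rfloor+2$ whenever $[n]_l\ge 2$; the stated bound $\lfloor n/l\rfloor+1$ holds only when $[n]_l\le 1$ (which includes $l\mid n$ and hence the $n=l$ special case). To be clear, this is not a defect you introduced: the paper's own proof makes the identical mistake through the chain $\max\bigl((km+i)/(ml)\bigr)<\frac{l-1+n}{l}\le\left\lfloor\frac{l-1}{l}\right\rfloor+\left\lfloor\frac{n}{l}\right\rfloor+1$, whose second inequality fails precisely in this regime (for $l=3$, $n=5$ it asserts $7/3\le 2$). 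So the column-major bound in Theorem~\ref{thrm:eps-diag} is itself off by one when $[n]_l\ge 2$; what your argument (and the paper's) actually establishes is the bound $\lfloor n/l\rfloor+2$. Since the row-major bound $(\lfloor n/l\rfloor+2)\cdot m$ is unaffected, this correction shifts both sides of the paper's later cost comparisons by a constant and does not change its qualitative conclusions, but your proof as written cannot be completed to the stated $\lfloor n/l\rfloor+1$.
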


\begin{theorem}
\label{thrm:omg-diag}
Let $\omega^k_{m\times n}(\mathcal{B}) = \textbf{U}^{\omega^k_{m\times n}} \mathcal{B}$ be the linear transformation $\omega_{m\times n}: \mathcal{R}_{l\times n} \rightarrow \mathcal{R}_{m\times n}$ with matrix $\mathcal{B}$ having a dimension of $l\times n$. 
There are at most $(\left \lfloor \frac{m}{l} \right \rfloor+2)\cdot n$ non-zero diagonal vectors in $\textbf{U}^{\omega^k_{m\times n}}$ when the matrix is flattened with \textbf{column-major} order; There are
at most $\left \lfloor \frac{m}{l} \right \rfloor +1$ non-zero diagonal vectors in $\textbf{U}^{\omega^k_{m\times n}}$ when matrix $\mathcal{B}$ is flattened with \textbf{row-major} order. 
Specifically, when $m=l$, there are no more than 2 non-zero diagonals in $\textbf{U}^{\omega^k_{m\times n}}$, no matter if the matrix is flattened in column-major or row-major order.
 \end{theorem}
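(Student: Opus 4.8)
The plan is to avoid a second combinatorial count from scratch and instead derive Theorem~\ref{thrm:omg-diag} from the already-established Theorem~\ref{thrm:eps-diag} by exploiting a transpose duality between the $\omega$ and $\epsilon$ operators. The guiding observation is that shifting the rows of $\mathcal{B}$ upward is the same operation, read through a transpose, as shifting the columns of $\mathcal{B}^{\top}$ leftward, and that transposition is exactly what interchanges the column-major and row-major flattenings. If this correspondence can be made precise at the level of the transformation matrices, then the two diagonal bounds for $\omega$ fall out of the two bounds for $\epsilon$ with the roles of $m$ and $n$ exchanged.

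First I would establish the identity
\[
\bigl(\omega^k_{m\times n}(\mathcal{B})\bigr)^{\top} = \epsilon^k_{n\times m}(\mathcal{B}^{\top}),
\]
which is immediate from the definitions~(\ref{def:eps}) and~(\ref{def:omega}): the left side has $(j,i)$-entry $\mathcal{B}_{[i+k]_l,j}=(\mathcal{B}^{\top})_{j,[i+k]_l}$, and this is precisely $\epsilon^k_{n\times m}(\mathcal{B}^{\top})_{j,i}$ since $\mathcal{B}^{\top}$ has dimension $n\times l$. Next I would record the elementary but crucial fact that the column-major flattening of any matrix coincides, entry for entry, with the row-major flattening of its transpose. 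Consequently the linear map $\textbf{U}^{\omega^k_{m\times n}}$ acting on the column-major flattening of $\mathcal{B}$ is, as a matrix on flattened vectors, identical to $\textbf{U}^{\epsilon^k_{n\times m}}$ acting on the row-major flattening of $\mathcal{B}^{\top}$; likewise the row-major version of $\textbf{U}^{\omega^k_{m\times n}}$ equals the column-major version of $\textbf{U}^{\epsilon^k_{n\times m}}$. Since identical matrices have identical diagonal structure, the number of non-zero diagonals is preserved under this dictionary.

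With the dictionary in place, the bounds follow by invoking Theorem~\ref{thrm:eps-diag} on the $n\times m$ instance. The column-major count for $\textbf{U}^{\omega^k_{m\times n}}$ equals the row-major count for $\textbf{U}^{\epsilon^k_{n\times m}}$, which Theorem~\ref{thrm:eps-diag} (with $m$ and $n$ swapped) bounds by $(\lfloor m/l\rfloor+2)\cdot n$; the row-major count for $\textbf{U}^{\omega^k_{m\times n}}$ equals the column-major count for $\textbf{U}^{\epsilon^k_{n\times m}}$, bounded by $\lfloor m/l\rfloor+1$. The special case $m=l$ maps to the special case $n=l$ of Theorem~\ref{thrm:eps-diag}, giving at most $2$ non-zero diagonals in either ordering.

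For completeness, and in case a self-contained argument is preferred over the reduction, I would also keep a direct counting route using the explicit formula~(\ref{eqn:Uomgc}). There each output index $i$ has a single non-zero column, and in column-major order its offset is $g([i]_m)+\lfloor i/m\rfloor\,(l-m)$ with $g(a)=[a+k]_l-a$, whereas in row-major order the offset collapses to $g(r)\cdot n$. One then counts the distinct values of $g$, which is piecewise constant in its argument and drops by $l$ at each multiple-of-$l$ wraparound, and multiplies by the $n$ column shifts only in the column-major case (the shift term $l-m$ vanishes when $m=l$, which is what removes the factor of $n$ there). I expect the main obstacle to lie exactly here: pinning down the precise constant in the distinct-offset count (the $+1$ versus $+2$, and the boundary and wraparound cases) and matching each count to the correct flattening order. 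This is precisely the delicate bookkeeping that the transpose reduction lets me offload onto Theorem~\ref{thrm:eps-diag}, so I would present the duality argument as the primary proof.
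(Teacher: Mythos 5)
Your proposal is correct, and it takes a genuinely different route from the paper. Your key lemma, the transpose identity $\bigl(\omega^k_{m\times n}(\mathcal{B})\bigr)^{\top}=\epsilon^k_{n\times m}(\mathcal{B}^{\top})$, is immediate from definitions~(\ref{def:eps}) and~(\ref{def:omega}), and since the column-major flattening of a matrix is entry-for-entry the row-major flattening of its transpose, the two linear maps agree on every input vector, so the transformation matrices are literally equal and their non-zero-diagonal counts coincide; invoking Theorem~\ref{thrm:eps-diag} with the roles of $m$ and $n$ exchanged then yields exactly the three claims, the special case $m=l$ mapping to the special case $n=l$. The paper instead argues by direct counting: for the column-major order it counts the distinct values of $j-i$ in the explicit formula~(\ref{eqn:Uomgc}), and for the row-major order it uses your duality only at the level of formulas (it notes that the permutation matrix is~(\ref{eqn:Uepsc}) applied to an $n\times l$ matrix) and then repeats, with $m$ and $n$ swapped, the same arithmetic as in the proof of Theorem~\ref{thrm:eps-diag}. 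Your reduction eliminates that duplicated bookkeeping and makes the $m\leftrightarrow n$ symmetry explicit, at the price of being self-contained only modulo Theorem~\ref{thrm:eps-diag}; note the reduction is non-circular precisely because the paper's proof of Theorem~\ref{thrm:eps-diag} uses formula~(\ref{eqn:Uomgc}) but never Theorem~\ref{thrm:omg-diag} itself, and you should state that dependency check. One caveat worth recording: because your proof inherits Theorem~\ref{thrm:eps-diag} verbatim, it also inherits the weak point of its counting, namely the step $\frac{l-1+n}{l}\le\lfloor\frac{l-1}{l}\rfloor+\lfloor\frac{n}{l}\rfloor+1$, which holds only when $n\bmod l\le 1$ (for instance $l=3$, $n=5$, $k=2$ gives three distinct offsets rather than the claimed $\lfloor n/l\rfloor+1=2$); the paper's own proof of Theorem~\ref{thrm:omg-diag} repeats the identical step with $m$ in place of $n$, so both routes carry the same off-by-one exposure, but yours at least localizes it to a single counting argument rather than two.
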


\begin{figure}[h]
{
\centering
\subfigure[The process of linear transformation for $\epsilon^1_{5\times 3}$ transformation on matrix $\mathcal{A}_{5\times 3}$. $\textit{\textbf{u}}_5$ is the vector with 10 1's and 5 0's while $\textit{\textbf{u}}_{\text{-}10}$ with 10 0's and 5 1's. Therefore, according to Equation~\ref{eqn:lintrans}, it rotates flattened $\Tilde{\mathcal{A}}$ 5 slots and times $\textit{\textbf{u}}_5$. Then it rotates flattened $\Tilde{\mathcal{A}}$ 10 slots reversely and times $\textit{\textbf{u}}_{\text{-}10}$. Finally, add all above partial products together.]{
\includegraphics[width=0.47\textwidth]{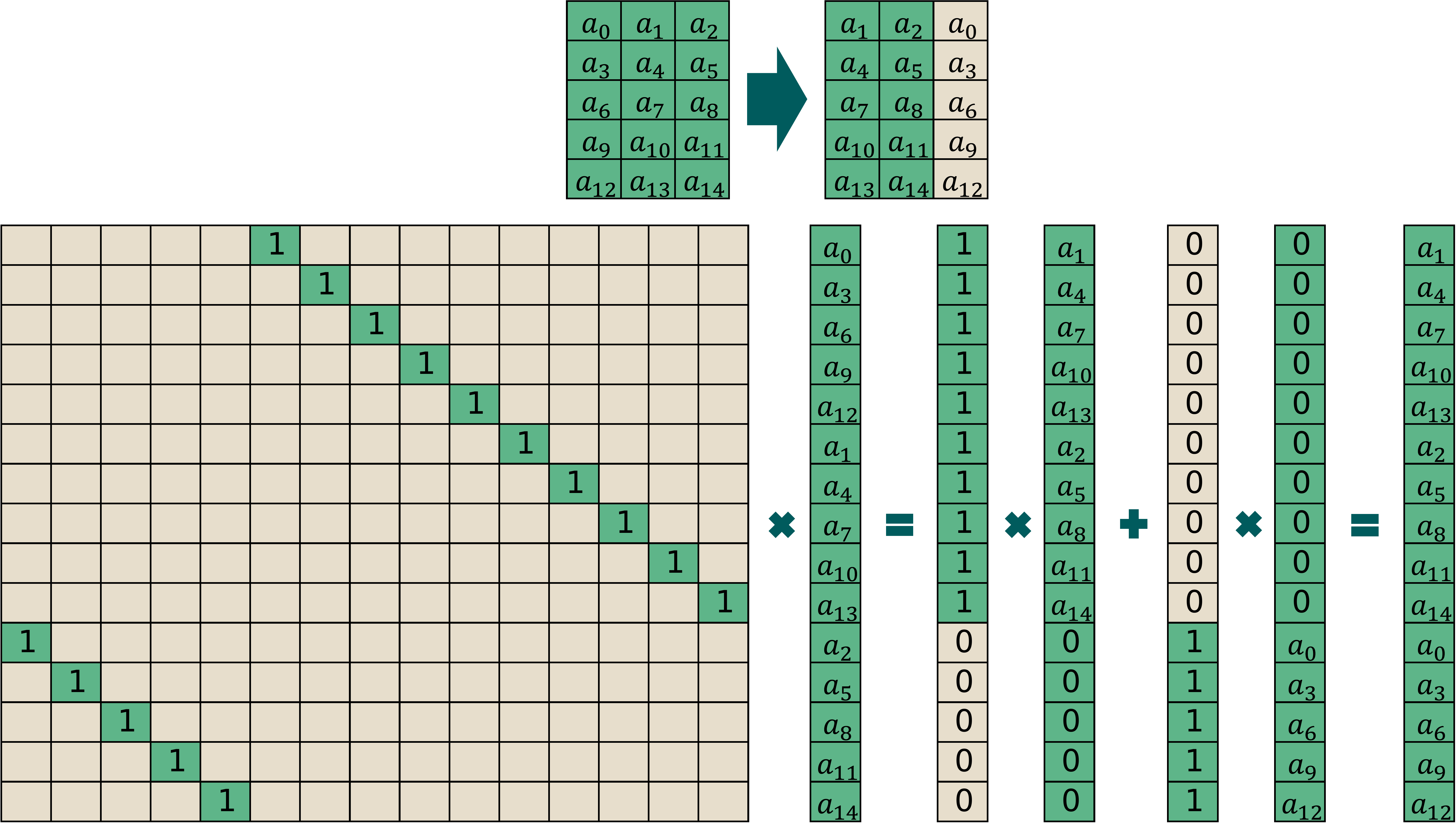}}

\centering
\subfigure[The process of linear transformation for $\omega^1_{3\times 5}$ transformation on matrix $\mathcal{B}_{3\times 5}$. $\textit{\textbf{u}}_1$ is the vector with ten 1's and five 0's while $\textit{\textbf{u}}_{\text{-}2}$ with 10 0's and 5 1's. Therefore, according to Equation~\ref{eqn:lintrans}, it rotates flattened $\Tilde{\mathcal{B}}$ 1 slots and times $\textit{\textbf{u}}_1$. Then it rotates flattened $\Tilde{\mathcal{A}}$ 2 slots reversely and times $\textit{\textbf{u}}_{\text{-}2}$. Finally, add all above partial products together.]{
\includegraphics[width=0.47\textwidth]{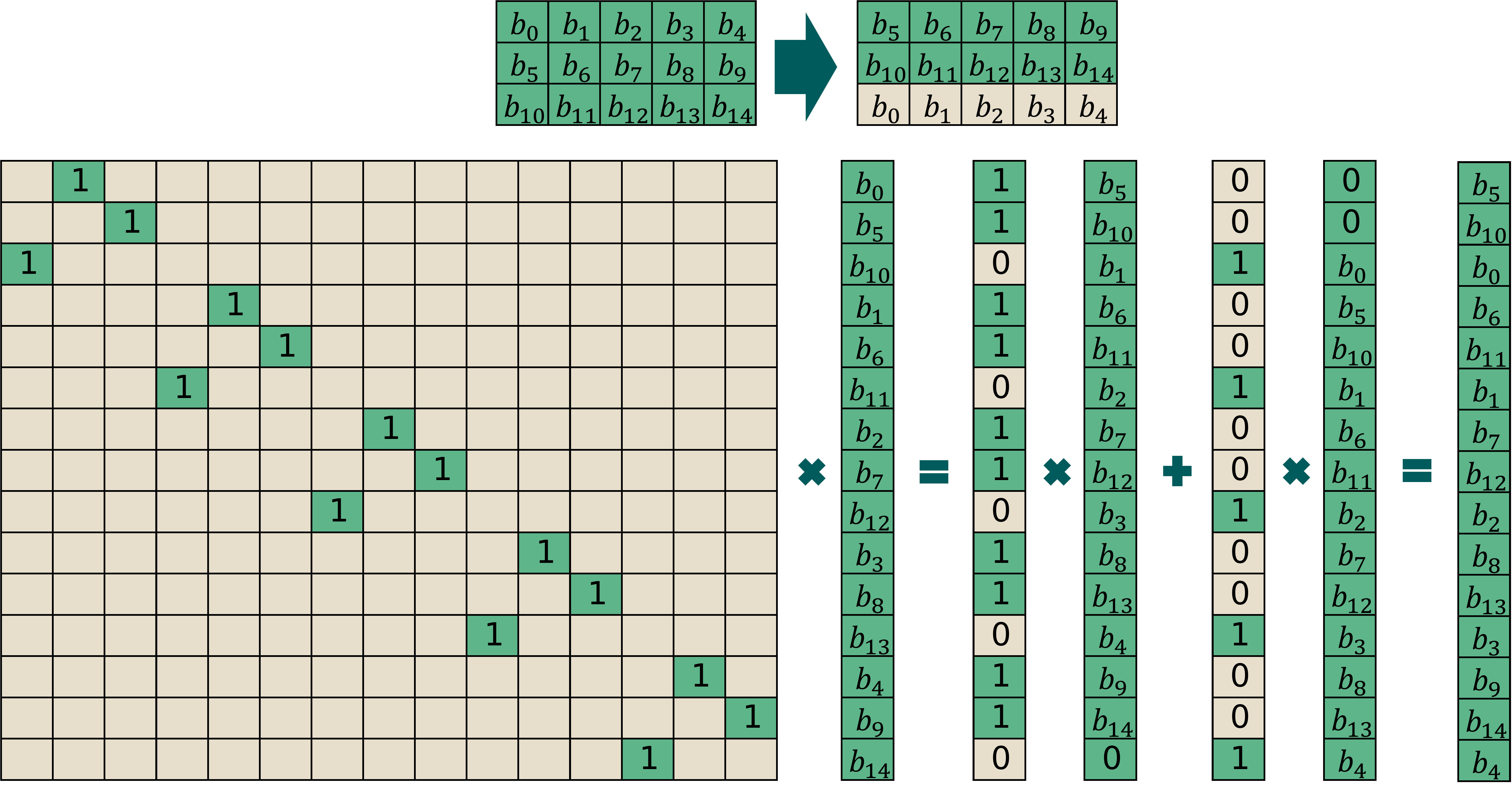}}
}
\caption{The permutation matrices $\textbf{U}^{\epsilon^1_{5\times 3}}$ and $\textbf{U}^{\omega^1_{3\times 5}}$ and linear transformations of $\epsilon_{5\times 3}^1(\mathcal{A}_{5\times 3})$ and $\omega_{3\times 5}^1(\mathcal{B}_{5\times 3})$.}
\label{Fig.plt-ex}

\setlength{\belowdisplayskip}{10pt}
\end{figure}

The proofs for Theorems~\ref{thrm:sigma-diag}-\ref{thrm:omg-diag} can be found in the appendix~\ref{apx:proof}. 

According to Theorems~\ref{thrm:sigma-diag} and ~\ref{thrm:tau-diag}, the numbers of non-zero diagonal vectors in $\textbf{U}^\sigma$ and $\textbf{U}^\tau$ depend solely on the dimensions of corresponding matrices and are independent of how the matrices are flattened. However, as shown in Theorems~\ref{thrm:eps-diag} and ~\ref{thrm:omg-diag}, the numbers of non-zero diagonal vectors in $\textbf{U}^{\epsilon^k_{m\times n}}$ and $\textbf{U}^{\omega^k_{m\times n}}$ depend on not only the dimensions of matrices but also the way they are flattened. 
\begin{center}
% \begin{minipage}{.9\linewidth}
\begin{algorithm}[hbt!]
\caption{HEGMM: HE-based General Matrix Multiplication}
\label{alg:HEGMM}
\LinesNumbered
\KwIn{matrix $\mathcal{A}_{m\times l}$ and matrix $\mathcal{B}_{l\times n}$}
\KwOut{$\mathcal{C}_{m\times n}$}
\textbf{[Step1]}

$ct.{\mathcal{{A}}}^{(0)}  \gets \sigma({\mathcal{{A}}})$

$ct.{\mathcal{B}}^{(0)}  \gets \tau({\mathcal{B}})$

\textbf{[Step2]}

\For{$k=0$ to $l-1$}
{
    $ct.\mathcal{{A}}^{(k)}  \gets {\epsilon^k_{m\times n}}(ct.\mathcal{{A}}^{(0)})$
    
    $ct.\mathcal{B}^{(k)}  \gets {\omega^k_{m\times n}}(ct.\mathcal{B}^{(0)})$
    
    $ct.\mathcal{C}  \gets ct.\mathcal{C}+ct.\mathcal{{A}}^{(k)} \odot ct.\mathcal{B}^{(k)}$
}

\textbf{[Step3]}

$\mathcal{C}_{m\times n} \gets ct.\mathcal{C}$

\Return{ $\mathcal{C}_{m\times n}$ }
\end{algorithm} 
% \end{minipage}
\end{center}

\subsubsection{The HEGMM Algorithm}
HEGMM is a straightforward implementation of Equation (\ref{eqn:elementwise}). We first conduct $\sigma$ and $\tau$ transformations (lines 2-3) on the source matrices of $\mathcal{A}$ and $\mathcal{B}$. 
We then go through a loop (lines 5-9) that apply $\epsilon^k_{m\times n}$ and $\tau^k_{m\times n}$ transformations and element-wise multiplication and addition to calculate and accumulate the partial product. The final result can be obtained by decrypting the sum of the product (line 11).  
 
The computational complexity of Algorithm~\ref{alg:HEGMM} mainly comes from the required HE operations associated with the $\sigma, \tau, \epsilon^k_{m\times n}$, and $\omega^k_{m\times n}$ operations. Assuming $\mathcal{A}$ and $\mathcal{B}$ are encrypted, from Theorem~\ref{thrm:sigma-diag} and Theorem~\ref{thrm:tau-diag}, we know that there are $2\min(m,l)-1$ (resp. $2\min(n,l)-1$) non-diagonals for the $\sigma$ (resp. $\tau$) operation. Therefore, according to equation~(\ref{eqn:lintrans}), the $\sigma$ (resp. $\tau$) operation requires $2\min(m,l)-1$ (resp. $2\min(n,l)-1$) HE-CMult, HR-Rot, and HR-Add operations. These computational costs have nothing to do with how the matrices are flattened (e.g., in column-major or in row-major order), and they also become trivial if they are performed on $\mathcal{A}$ and $\mathcal{B}$ in plaintext. However, the $\epsilon$ and $\omega$ operations, which must be performed multiple times in the cloud, require HE operations depending on not only the dimensions of matrices but also the way they are flattened. As a result, the computational complexities can be dramatically different under different scenarios, as shown in Theorems~\ref{thrm:eps-diag} and ~\ref{thrm:omg-diag}. In the next sub-section, we show how we can take advantage of this fact to reduce the computational cost effectively. 

\subsection{The Enhanced HEGMM Algorithm}
\label{subsec:HEGMM-Enhanced}

In this section, we introduce a more elaborated approach for HEGMM that can be more computationally efficient. The fundamental principle we rely on to develop this algorithm is presented in Theorem~\ref{thrm:eps-diag} and ~\ref{thrm:omg-diag}. For the HE matrix multiplication of $\mathcal{A}_{m\times l} \times \mathcal{B}_{l\times n}$, the proposed new algorithm can significantly improve the computation efficiency when $m=\min\{m,l,n\}$ and/or $n=\min\{m,l,n\}$. If not, we can always resort to Algorithm~\ref{alg:HEGMM} to find the solution. Therefore, in what follows, we first discuss the new approach based on two cases: (\emph{i}) $m=\min\{m,l,n\}$; and (\emph{ii}) $n=\min\{m,l,n\}$. We then present the algorithm and related discussions in detail.

\subsubsection{$m=\min\{m,l,n\}$}

\begin{figure}[h]
\centering
\includegraphics[width=0.49\textwidth]{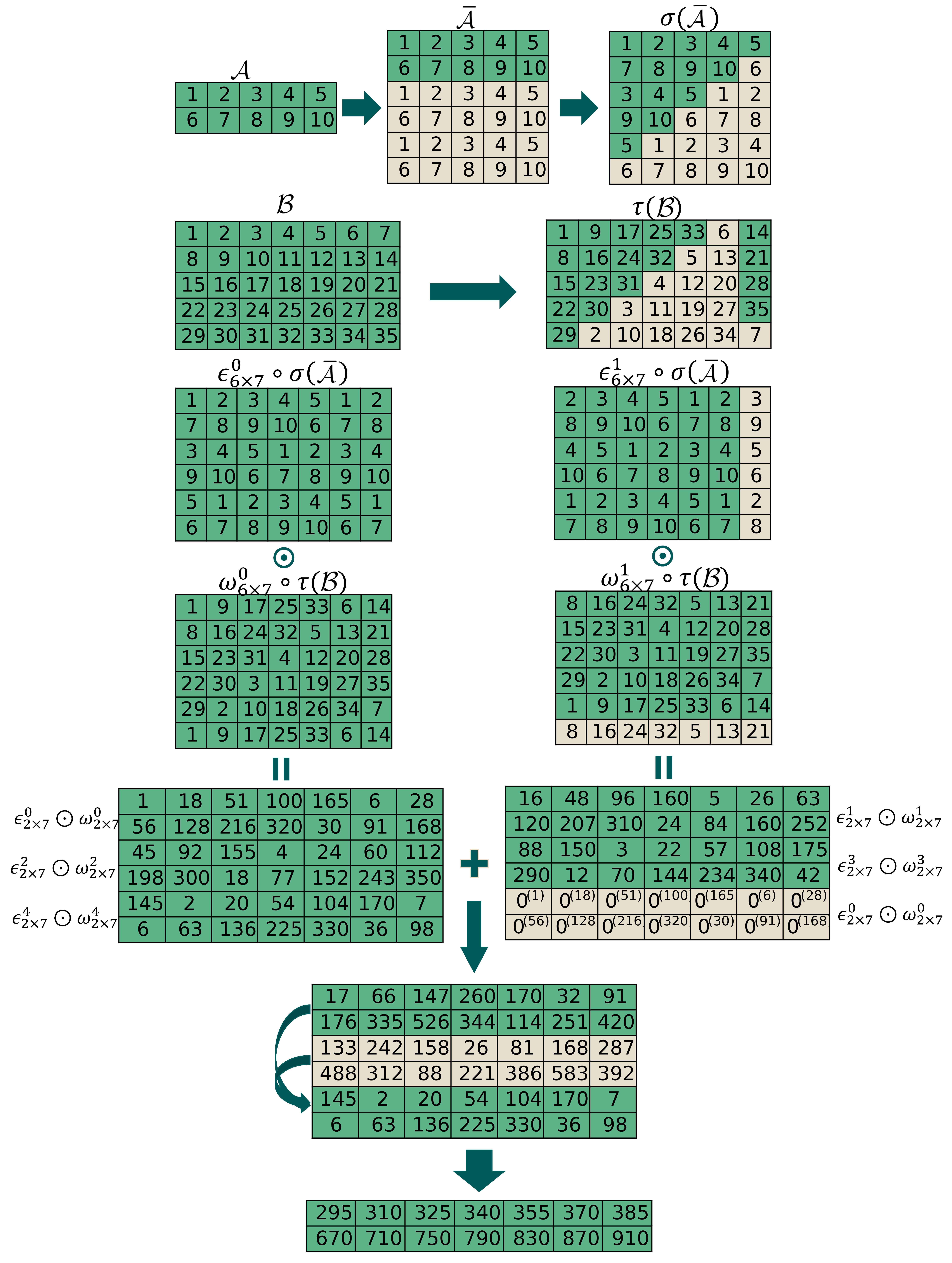}
\caption{An illustration example of the Enhanced HEGMM Algorithm for multiplying two matrices $\mathcal{A}_{2\times 5}$ and $\mathcal{B}_{5\times 7}$, with $m=2$, $l=5$ and $n=7$. $\Bar{\mathcal{A}}$ is the matrix by duplicating $\mathcal{A}$ 3 times, i.e., $t=\left \lceil 5/2 \right \rceil=3$ and $\mathcal{B}_{5\times 7}$ remains unchanged. The partial products are accumulated to obtain the final product. Note $\epsilon^0_{2\times 7}(\sigma(\mathcal{A}))\odot \omega^0_{2\times 7}(\tau(\mathcal{B}))$ is generated twice, and the duplicated partial products should be excluded from the final results. }

\label{Fig.example-dupA}
\end{figure}

For the HE MM of $\mathcal{A}_{m\times l} \times \mathcal{B}_{l\times n}$, Algorithm~\ref{alg:HEGMM} needs to perform $l$ iterations, with each iteration including one $\epsilon$ transformation, one $\omega$ transformation, one HE-Add, and one HE-Mult operation. Assuming the matrix is flattened with the \textbf{\emph{column-major}} order, according to Theorem~\ref{thrm:eps-diag} and ~\ref{thrm:omg-diag}, one $\epsilon$ transformation and one $\omega$ transformation would result in no more than $((\left \lfloor \frac{n}{l} \right \rfloor +1) + 2n)$ non-zero diagonals in corresponding transformation matrices, with each non-zero diagonal requiring one HE-Add, one HE-Rot, and one HE-CMult operations. However, if we can expand matrix $\mathcal{A}_{m\times l}$ to $\bar{\mathcal{A}}_{l\times l}$, then the number of non-zero diagonals becomes no more than $((\left \lfloor \frac{n}{l} \right \rfloor +1) + 2)$ instead. Since $n \geq 1$ and
\[ ((\left \lfloor \frac{n}{l} \right \rfloor +1) + 2n) \geq ((\left \lfloor \frac{n}{l} \right \rfloor +1) + 2),\]
the number of non-zero diagonals and, thus, the computational cost can be dramatically reduced. 

Note that, if we assume the matrix is flattened with the \textbf{\emph{row-major}} order, one $\epsilon$ transformation and one $\omega$ transformation would result in no more than $((\left \lfloor \frac{n}{l} \right \rfloor +2)m + 1)$ non-zero diagonals in corresponding transformation matrices. When expanding matrix $\mathcal{A}_{m\times l}$ to $\bar{\mathcal{A}}_{l\times l}$, the total number of non-zero diagonals in the corresponding transformation matrices becomes $((\left \lfloor \frac{n}{l} \right \rfloor +2)l + 1)$, according to Theorems~\ref{thrm:eps-diag} and ~\ref{thrm:omg-diag}. It becomes obvious that using the \textbf{\emph{column-major}} order is a better choice than using the \textbf{\emph{row-major}} order in this case, since when $m > 1$, we have
\[((\left \lfloor \frac{n}{l} \right \rfloor +2)m + 1) \geq ((\left \lfloor \frac{n}{l} \right \rfloor +1) + 2).\]

The question now becomes how to expand $\mathcal{A}_{m\times l}$ to $\bar{\mathcal{A}}_{l\times l}$ and maintain the logical correctness of the MM result. One intuitive approach is to expand $\mathcal{A}_{m\times l}$ by filling zeroes to the newly added elements, i.e., the elements in rows from row $m$ to $(l-m-1)$. The final product, as a sub matrix, can be extracted from the product of $\bar{\mathcal{A}}_{l\times l} \times \mathcal{B}_{l\times n}$ easily. Note that expanding the matrix dimension does not increase the computational complexity in the SIMD scheme as long as the result matrix can fit in one message.  

Rather than simply filling zeroes, we can expand $\mathcal{A}_{m\times l}$ by duplicating the rows of $\mathcal{A}_{m\times l}$ repeatedly. This helps to reduce the number of iterations (lines 5-9) in 
Algorithm~\ref{alg:HEGMM}, thanks to our observations that are formulated in the following theorem.  
\begin{theorem} \label{thrm:dupA}
Let $\mathcal{A}_{m\times l}$ and $\mathcal{B}_{l\times n}$ with $m < l$, and let $\bar{A}$ be matrix expanded with $t=\left \lceil \frac{l}{m} \right \rceil$ copies of $\mathcal{A}$ vertically, i.e., $\bar{\mathcal{A}} = \{\bar{A_0}; \bar{A_1}; ...; \bar{A}_{(t-1)}\}^T$ 
with $\bar{A_0}=\bar{A_1}=...= \bar{A}_{(t-1)}=\mathcal{A}_{m\times l}$. Then
\begin{itemize}
\item $\epsilon^k_{tm\times n} ( \sigma(\bar{\mathcal{A}}))\odot \omega^k_{tm\times n} ( \tau(\mathcal{B}))$ contains $t$ items of $\epsilon^p_{m\times n} ( \sigma(\mathcal{A}))\odot \omega^p_{m\times n} ( \tau(\mathcal{B}))$, with $p \in \{[k]_l,[k+m]_l,..., [k+(t-1)m]_l\}$.
\item $\epsilon^k_{tm\times n} ( \sigma(\bar{\mathcal{A}}))\odot \omega^k_{tm\times n} ( \tau(\mathcal{B}))$, $k=0,1,...,(m-1)$ contains all items of $\epsilon^p_{m\times n} (\sigma(\mathcal{A}))\odot \omega^p_{m\times n} (\tau(\mathcal{B}))$, with $p\in \{0, 1, ..., (l-1)\}$.
\end{itemize}
\end{theorem}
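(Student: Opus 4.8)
The plan is to reduce both bullets to a single explicit formula for the $(i,j)$ entry of $\epsilon^k_{tm\times n}(\sigma(\bar{\mathcal{A}})) \odot \omega^k_{tm\times n}(\tau(\mathcal{B}))$, obtained by chaining the definitions in (\ref{def:sigma})--(\ref{def:omega}). The only structural fact about $\bar{\mathcal{A}}$ I need is that stacking $t$ copies makes its rows $m$-periodic, i.e. $\bar{\mathcal{A}}_{i,c} = \mathcal{A}_{[i]_m, c}$ for $0 \le i < tm$. First I would compute $\sigma(\bar{\mathcal{A}})_{i,c} = \bar{\mathcal{A}}_{i,[i+c]_l} = \mathcal{A}_{[i]_m, [i+c]_l}$ and then $\epsilon^k_{tm\times n}(\sigma(\bar{\mathcal{A}}))_{i,j} = \sigma(\bar{\mathcal{A}})_{i,[j+k]_l} = \mathcal{A}_{[i]_m, [i+j+k]_l}$; symmetrically, $\omega^k_{tm\times n}(\tau(\mathcal{B}))_{i,j} = \tau(\mathcal{B})_{[i+k]_l, j} = \mathcal{B}_{[i+j+k]_l, j}$. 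Here I would note that although the row index $i$ now ranges over $[0,tm)$, the reduction $[\cdot]_l$ keeps every access into $\tau(\mathcal{B})$ within its $l$ valid rows (and into $\sigma(\bar{\mathcal{A}})$ within its $l$ valid columns), so both composed operators are well defined on the enlarged $tm\times n$ target. Multiplying entrywise gives the identity
\[\big[\epsilon^k_{tm\times n}(\sigma(\bar{\mathcal{A}})) \odot \omega^k_{tm\times n}(\tau(\mathcal{B}))\big]_{i,j} = \mathcal{A}_{[i]_m,[i+j+k]_l}\cdot \mathcal{B}_{[i+j+k]_l, j}.\]

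Next I would split the row index as $i = qm + r$ with $0 \le r < m$ and $0 \le q < t$, so that $[i]_m = r$ and the entry becomes $\mathcal{A}_{r,[r+j+(k+qm)]_l}\cdot \mathcal{B}_{[r+j+(k+qm)]_l,j}$. Comparing with the unexpanded term, whose $(r,j)$ entry is $\mathcal{A}_{r,[r+j+p]_l}\cdot\mathcal{B}_{[r+j+p]_l,j}$, this is exactly $\epsilon^p_{m\times n}(\sigma(\mathcal{A}))\odot\omega^p_{m\times n}(\tau(\mathcal{B}))$ evaluated at $(r,j)$ with shift $p=[k+qm]_l$. Thus the $q$-th row block of the enlarged product coincides with the original partial product of index $[k+qm]_l$; letting $q$ run over $0,\dots,t-1$ yields the first bullet.

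For the second bullet I would argue combinatorially about the index set. The map $(k,q)\mapsto k+qm$ is a bijection from $\{0,\dots,m-1\}\times\{0,\dots,t-1\}$ onto $\{0,1,\dots,tm-1\}$, by uniqueness of division by $m$. Since $t=\lceil l/m\rceil$ gives $tm\ge l$, the consecutive block $\{0,\dots,tm-1\}$ has length at least $l$, so reducing it modulo $l$ hits every residue in $\{0,\dots,l-1\}$. Hence as $k$ ranges over $0,\dots,m-1$ the collection of shifts $[k+qm]_l$ produced by the first bullet covers all of $\{0,\dots,l-1\}$, which is the claim; when $m\nmid l$ we have $tm>l$ and some shifts recur, matching the remark that duplicated partial products must be discarded.

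The routine part is the index chasing; the one step deserving care --- and the main obstacle --- is recognizing that $\sigma$ rotates each row by its \emph{absolute} index rather than by its residue $[i]_m$, so that the $q$-th copy of $\mathcal{A}$ contributes an extra additive shift $qm$ inside the modular argument. It is precisely this extra $qm$, absorbed into the effective shift $p=[k+qm]_l$, that makes a single pass with index $k$ over the enlarged matrix reproduce $t$ distinct original partial products at once. I would verify this interaction explicitly before invoking the bijection, since it is the mechanism that collapses the $l$-iteration loop of Algorithm~\ref{alg:HEGMM} into an $m$-iteration loop.
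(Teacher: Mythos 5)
Your proposal is correct and follows essentially the same route as the paper's proof: the paper also chases the indices through definitions (\ref{def:sigma})--(\ref{def:omega}) to show that the rows $hm+i$ (your $qm+r$) of the expanded product reproduce the original partial product with effective shift $p=[k+hm]_l$, and then proves the second bullet by observing that $tm\ge l$ lets every $p\in\{0,\dots,l-1\}$ be written as $[k+hm]_l$ with $0\le k<m$, $0\le h<t$. Your only (minor) refinements are stating the row-periodicity $\bar{\mathcal{A}}_{i,c}=\mathcal{A}_{[i]_m,c}$ up front and making the covering argument explicit as a bijection $(k,q)\mapsto k+qm$ onto $\{0,\dots,tm-1\}$, which is the same argument phrased more precisely.
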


According to Theorem~\ref{thrm:dupA}, after expanding $\mathcal{A}_{m\times l}$ with $t$ copies of 
$\mathcal{A}_{m\times l}$ vertically to form $\bar{\mathcal{A}}_{tm\times l}$, each iteration of Algorithm~\ref{alg:HEGMM} can now produce $t$ partial products $\epsilon^p_{m\times n}({\mathcal{A}})\odot \omega^p_{m\times n}({\mathcal{B}})$. As a result, the required HE computations can be greatly reduced, which can be better illustrated using the example in Figure ~\ref{Fig.example-dupA}.  

Figure~\ref{Fig.example-dupA} shows two source matrices $\mathcal{A}_{2\times 5}$ and $\mathcal{B}_{5\times 7}$, with $m=2$, $l=5$ and $n=7$. $\Bar{\mathcal{A}}$ is the matrix by duplicating $\mathcal{A}$ three times, i.e., $t=\left \lceil 5/2 \right \rceil=3$. Note that, each  
$\epsilon_{6\times 7}(\sigma(\bar{\mathcal{A}}))\odot \omega_{6\times 7}(\tau(\mathcal{B}))$ contains
three copies of $\epsilon_{2\times 7}(\sigma(\mathcal{A}))\odot \omega_{2\times 7}(\tau(\mathcal{B}))$, as shown in the figure: $\epsilon^0_{6\times 7}(\sigma(\bar{\mathcal{A}}))\odot \omega^0_{6\times 7}(\tau(\mathcal{B}))$ contains $\epsilon^0_{2\times 7}(\sigma(\mathcal{A}))\odot \omega^0_{2\times 7}(\tau(\mathcal{B}))$, $\epsilon^2_{2\times 7}(\sigma(\mathcal{A}))\odot \omega^2_{2\times 7}(\tau(\mathcal{B}))$, and $\epsilon^4_{2\times 7}(\sigma(\mathcal{A})\odot \omega^4_{2\times 7}(\tau(\mathcal{B}))$. We then need to add all the partial products together to get the final result.

As such, by duplicating $\mathcal{A}_{m\times l}$ into $\bar{\mathcal{A}}_{tm\times l}$, we can reduce not only the HE operations associated with the $\epsilon$ and $\omega$ operations but also the HE-Mult operations (i.e., at most $m$ HE-Mult operations according to Theorem~\ref{thrm:dupA}) for partial production calculations, which is highly costly. Even though extra HE rotations are needed to extract the partial results, the computation cost is much smaller than that of HE-Mult as shown in Table~\ref{tab:cost}. It is worth mentioning that, while one $\epsilon^k_{m\times n}(\bar{\mathcal{A}})\odot \omega^k_{m\times n}({\mathcal{B}})$ helps to produce multiple copies of 
$\epsilon^p_{m\times n}({\mathcal{A}})\odot \omega^p_{m\times n}({\mathcal{B}})$, as shown in Figure~\ref{Fig.example-dupA}, some of them may be produced repeatedly. These redundant copies should be identified, which can be easily identified according to Theorem~\ref{thrm:dupA}, and excluded from the final results. 

\subsubsection{$n=\min\{m,l,n\}$}
\label{subsec:HEGMM-dupB}

We can employ the same analysis flow as above. There are two major differences compared with the case of $m=\min\{m,l,n\}$. (\emph{i}) We duplicate matrix $\mathcal{B}$ \emph{horizontally} to expand $\mathcal{B}$ instead of $\mathcal{A}$; (\emph{ii}) The \textbf{\emph{row-major}} order is a better option than the \textbf{\emph{column-major}} order in this case. 

When $n=\min\{m,l,n\}$, if the matrix is flattened with the \textbf{\emph{row-major}} order, according to Theorem~\ref{thrm:eps-diag} and ~\ref{thrm:omg-diag}, one $\epsilon$ transformation and one $\omega$ transformation would result in no more than $(2m+\left \lfloor \frac{m}{l} \right \rfloor +1)$ non-zero diagonals in corresponding transformation matrices. When expanding $\mathcal{B}_{l\times n}$ to $\bar{\mathcal{B}}_{l\times l}$, the total number of non-zero diagonals in corresponding transformation matrices is reduced to $(2+\left \lfloor \frac{m}{l} \right \rfloor +1)$ instead. However, if the \textbf{\emph{column-major order}} is used, the total number of non-zero diagonals after expanding is $(1 + (2 + \left \lfloor \frac{m}{l} \right \rfloor)\cdot n)$, which makes the \textbf{\emph{row-major}} order a better option to flatten the matrices. 

Similarly, when we expand $\mathcal{B}$ by duplicating $\mathcal{B}_{l\times n}$, we can generate multiple partial products, i.e., $\epsilon^p_{m\times n}({\mathcal{A}})\odot \omega^p_{m\times n}({\mathcal{B}})$, using one HE-Mult operation, as supported by the following theorem. The proof is quite similar to that for Theorem~\ref{thrm:dupA} and thus omitted due to page limit. 

\begin{theorem} \label{thrm:dupB}
Let $\mathcal{A}_{m\times l}$ and $\mathcal{B}_{l\times n}$ with $n < l$, and let $\bar{\mathcal{B}}$ be matrix expanded with $t=\left \lceil \frac{l}{n} \right \rceil$ copies of $\mathcal{B}$ horizontally, i.e., $\bar{\mathcal{B}} = \{\mathcal{B}; \mathcal{B}; ...; \mathcal{B}\}$. Then 

\begin{itemize}
\item {$\epsilon^k_{m\times tn}( \sigma(\mathcal{A}))\odot \omega^k_{m\times tn}( \tau(\bar{\mathcal{B}}))$ contains $t$ items of $\epsilon^p_{m\times n}( \sigma(\mathcal{A}))\odot \omega^p_{m\times n}( \tau(\mathcal{B}))$, with $p=[k]_l, [k+n]_l,..., [k+(t-1)n]_l$;}

\item {$\epsilon^k_{m\times tn}( \sigma(\mathcal{A}))\odot \omega^k_{m\times tn}( \tau(\bar{\mathcal{B}})$, $k=0,1,...,(n-1)$ contains all items of $\epsilon^p_{m\times n}( \sigma(\mathcal{A}))\odot \omega^p_{m\times n}( \tau(\mathcal{B}))$, with $p=0, 1, ..., (l-1)$.}
\end{itemize}
\end{theorem}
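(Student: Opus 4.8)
The plan is to mirror the argument used for Theorem~\ref{thrm:dupA}, exploiting the left--right symmetry between duplicating $\mathcal{A}$ vertically and duplicating $\mathcal{B}$ horizontally, with the element-wise identity of Equation~(\ref{form:proof}) as the workhorse. First I would record the single structural fact that distinguishes this case: since $\bar{\mathcal{B}}$ is $\mathcal{B}$ replicated $t$ times horizontally, its entries satisfy $\bar{\mathcal{B}}_{r,c} = \mathcal{B}_{r,[c]_n}$ for $0\le r<l$ and $0\le c<tn$. Everything else in the derivation is formally identical to the computation already carried out in Equation~(\ref{form:proof}), now with output width $tn$ in place of $n$.

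Next I would compute the element-wise partial product of the enlarged problem directly. Repeating the two reductions of Equation~(\ref{form:proof}) gives, for $0\le i<m$ and $0\le j<tn$,
\[
(\epsilon^k_{m\times tn}\circ\sigma(\mathcal{A}))_{i,j}\cdot(\omega^k_{m\times tn}\circ\tau(\bar{\mathcal{B}}))_{i,j}
= \mathcal{A}_{i,[i+j+k]_l}\cdot \bar{\mathcal{B}}_{[i+j+k]_l,\,j}
= \mathcal{A}_{i,[i+j+k]_l}\cdot \mathcal{B}_{[i+j+k]_l,\,[j]_n},
\]
where the last equality uses the replication identity. Writing $j=qn+s$ with $0\le q<t$ and $0\le s<n$, so that $[j]_n=s$, this entry becomes $\mathcal{A}_{i,[i+s+(k+qn)]_l}\cdot\mathcal{B}_{[i+s+(k+qn)]_l,\,s}$, which is exactly the $(i,s)$ entry of $\epsilon^p_{m\times n}(\sigma(\mathcal{A}))\odot\omega^p_{m\times n}(\tau(\mathcal{B}))$ with $p=[k+qn]_l$. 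Letting $q$ run over $0,\dots,t-1$ then shows that the enlarged partial product packs, block of columns by block of columns, precisely the $t$ original partial products indexed by $p\in\{[k]_l,[k+n]_l,\dots,[k+(t-1)n]_l\}$, which establishes the first bullet.

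For the second bullet I would argue by counting the attainable indices. As $k$ ranges over $0,\dots,n-1$ and $q$ over $0,\dots,t-1$, the map $(k,q)\mapsto k+qn$ is a bijection onto $\{0,1,\dots,tn-1\}$ (unique base-$n$ decomposition). Since $t=\lceil l/n\rceil$ forces $tn\ge l$, the integers $0,1,\dots,l-1$ all lie in this range, so their residues modulo $l$ already exhaust $\{0,1,\dots,l-1\}$; hence the set $\{[k+qn]_l\}$ covers every $p\in\{0,\dots,l-1\}$, giving the second bullet.

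The routine but error-prone part is the modular bookkeeping, and the subtlety worth flagging is that $\tau(\bar{\mathcal{B}})$ is \emph{not} merely the column-replication of $\tau(\mathcal{B})$: the $\tau$ row-shift depends on the global column index $j=qn+s$, so the block in columns $[qn,(q+1)n)$ carries an extra row offset of $qn$ relative to $\tau(\mathcal{B})$. The crux is that this offset combines with the $\omega^k$ shift to yield precisely $[k+qn]_l$, so the blocks realize \emph{distinct} original indices $p$ rather than $t$ identical copies. Once this composition is tracked carefully, the coverage count closes the argument; as in Theorem~\ref{thrm:dupA}, any index $p$ produced more than once must be identified as redundant and discarded from the accumulation.
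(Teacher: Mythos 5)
Your proposal is correct and follows essentially the same route as the paper's own proof: the paper likewise splits the widened product into column blocks of width $n$ (your $q$ is its block index $h$), chases the modular indices through the definitions to show the block with index $q$ equals the original partial product with $p=[k+qn]_l$, and proves coverage from $tn\ge l$. The only cosmetic difference is that you reduce the product entries in one chain (making the replication identity $\bar{\mathcal{B}}_{r,c}=\mathcal{B}_{r,[c]_n}$ explicit), whereas the paper reduces the $\epsilon\circ\sigma$ and $\omega\circ\tau$ factors separately; you also silently fix the paper's typo by letting $k$ range over $0,\dots,n-1$ rather than $0\le k<m$.
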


\begin{figure}[h]
\centering
\includegraphics[width=0.27\textwidth]{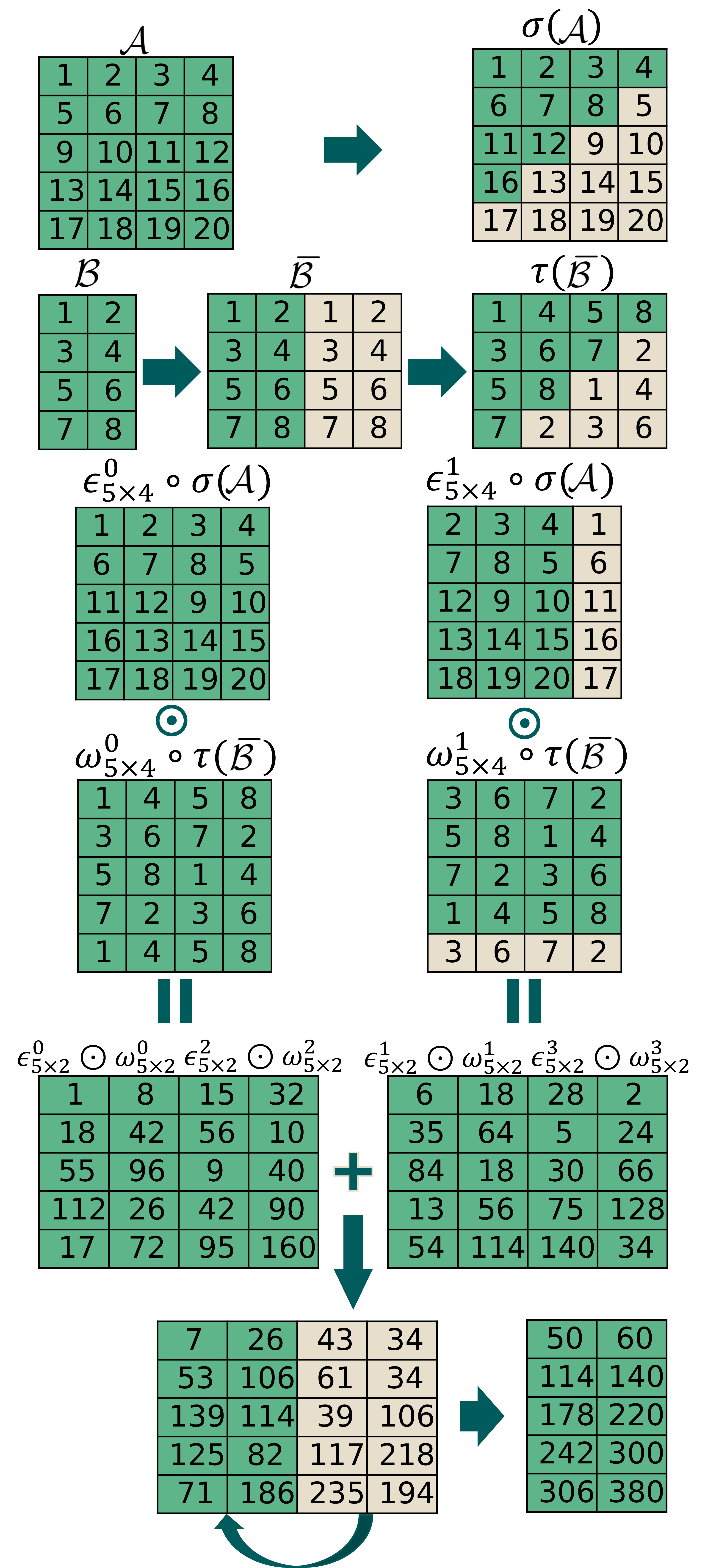}
\caption{This is an illustrative example of the enhanced HE MM algorithm for multiplying two matrices $\mathcal{A}_{5\times 4}$ and $\mathcal{B}_{4\times 2}$, with $m=5$, $l=4$ and $n=2$. $\Bar{\mathcal{B}}$ is the matrix by duplicating $\mathcal{B}$ horizontally for two times, i.e., $t=\left \lceil 4/2 \right \rceil=2$ and $\mathcal{A}_{5\times 4}$ remains unchanged. The partial products are accumulated to obtain the final product.}
\label{Fig.example-dupB}

\setlength{\belowdisplayskip}{10pt}
\end{figure}

Figure~\ref{Fig.example-dupB} shows an illustrative example of HE MM with two source matrices $\mathcal{A}_{5\times 4}$ and $\mathcal{B}_{4\times 2}$, with $m=5$, $l=4$ and $n=2$. $\Bar{\mathcal{B}}$ is the matrix by duplicating $\mathcal{B}$ horizontally for two times, i.e., $t=\left \lceil 4/2 \right \rceil=2$. Note that, each  
$\epsilon_{5\times 4}(\sigma(\mathcal{A}))\odot \omega_{5\times 4}(\tau(\Bar{\mathcal{B}}))$ using one HE-Mult operation can produce 
two copies of $\epsilon_{5\times 2}(\sigma(\mathcal{A}))\odot \omega_{5\times 2}(\tau(\mathcal{B}))$, as shown in the figure: $\epsilon^0_{5\times 4}(\sigma(\mathcal{A}))\odot \omega^0_{5\times 4}(\tau(\bar{\mathcal{B}}))$ contains $\epsilon^0_{5\times 2}(\sigma(\mathcal{A}))\odot \omega^0_{5\times 2}(\tau(\mathcal{B}))$ and 
$\epsilon^2_{5\times 2}(\sigma(\mathcal{A}))\odot \omega^2_{5\times 2}(\tau(\mathcal{B}))$. $\epsilon^1_{5\times 4}(\sigma(\mathcal{A}))\odot \omega^1_{5\times 4}(\sigma(\bar{\mathcal{B}}))$ contains $\epsilon^1_{5\times 2}(\sigma(\mathcal{A}))\odot \omega^1_{5\times 2}(\tau(\mathcal{B}))$ and 
$\epsilon^3_{5\times 2}(\sigma(\mathcal{A}))\odot \omega^3_{5\times 2}(\tau(\mathcal{B}))$. We then need to add all the partial products together to get the final result. As such, we only need to perform at most $n$ HE-Mult operations according to Theorem~\ref{thrm:dupB} to obtain all the partial products. Redundant copies may also be generated during this process, which should be identified according to Theorem~\ref{thrm:dupB} and excluded from the final results. 

\begin{center}
% \begin{minipage}{.9\linewidth}
\begin{algorithm}[hbt!]
\caption{HEGMM-Enhanced}
\label{alg:HEGMM-enhanced}
\LinesNumbered
\KwIn{matrix $\mathcal{A}_{m\times l}$, $\mathcal{B}_{l\times n}$}
\KwOut{matrix $\mathcal{C}_{m\times n} =  \mathcal{A}_{m\times l} \times \mathcal{B}_{l\times n}$}

\setcounter{AlgoLine}{0}
% \textbf{[Step0-1]}

$p\gets \min(m,l,n)$

$t\gets \left \lceil l/p \right \rceil$

\tcp{Determine $M$ and $N$ by shape}

\uIf{p=m}
    {
        $\Bar{\mathcal{A}}_{M\times l} \gets \underset{t}{[\underbrace{\mathcal{A};\mathcal{A};\mathcal{A};...;\mathcal{A}}]^T}$
        
        $\Bar{\mathcal{B}}_{l\times N} \gets \mathcal{B}$
        
        $M=t \times m, N=n$
    }
\uElseIf{p=n}
    {
        $\Bar{\mathcal{A}}_{M\times l} \gets \mathcal{A}$
        
        $\Bar{\mathcal{B}}_{l\times N} \gets \underset{t}{[\underbrace{\mathcal{B};\mathcal{B};\mathcal{B};...;\mathcal{B}}]}$

        $M=m, N=t \times n$
    }
\Else
    {
        $\Bar{\mathcal{A}}_{M\times l} \gets \mathcal{A}$
        
        $\Bar{\mathcal{B}}_{l\times N} \gets \mathcal{B}$ 

        $M=m, N=n$
     }

\tcp{Prepossessing on client}
$ct.s\Bar{\mathcal{A}}  \gets \epsilon^0_{M\times \max(l,N)}(\sigma(\Bar{\mathcal{A}}))$ 

$ct.t\Bar{\mathcal{B}}  \gets \omega_{\max(l,M)\times N}^0(\tau(\Bar{\mathcal{B}}))$ 

\tcp{Multiplication on Cloud}
$ct.\mathcal{C}_{m\times n} \gets ct.s\Bar{\mathcal{A}} \odot ct.t\Bar{\mathcal{B}}$

\For{$k=0,1,...,(p-1)$}
    {
        $ct.\Bar{\mathcal{A}}^{(k)} \gets \epsilon_{M\times N}^k(ct.s\Bar{\mathcal{A}})$ 
        
        $ct.\Bar{\mathcal{B}}^{(k)} \gets \omega_{M\times N}^k(ct.t\Bar{\mathcal{B}})$ 

        $ct.\mathcal{C}_{temp} \gets ct.\Bar{\mathcal{A}}^{(k)} \odot ct.\Bar{\mathcal{B}}^{(k)}$ 
        
        \tcp{\small \textit{$ct.\mathcal{C}_{temp}$ contains of $t$ items of $\epsilon^{k+i\cdot p}_{m\times n}(\sigma(\mathcal{A}))\odot \omega^{k+i\cdot p}_{m\times n}(\tau(\mathcal{B})) (0\le i<t)$.}}
        
        \For{$i=0,1,...,(t-1)$}
            {
                $j = [k+i\cdot p]_l$
            
            \uIf{$\epsilon^j_{m\times n}(\sigma(\mathcal{A}))\odot \omega^j_{m\times n}(\tau(\mathcal{B})) \in ct.\mathcal{C}_{temp}$ has not been accumulated in $ct.\mathcal{C}_{m\times n}$}
                {
                $ct.\mathcal{C}_{m\times n} \gets ct.\mathcal{C}_{m\times n}  + \epsilon^j_{m\times n}(\sigma(\mathcal{A}))\odot \omega^j_{m\times n}(\tau(\mathcal{B}))$   
                }
            }
     }

\tcp{Return encrypted result to client}
\Return{$ct.\mathcal{C}_{m\times n}$}
\end{algorithm} 
% \end{minipage}
\end{center}

The overall algorithm for the enhanced HE-based General MM, named \emph{HEGMM-En}, is presented in Algorithm~\ref{alg:HEGMM-enhanced}. Note that, when $m < l$ and $n<l$, we can choose to duplicate either $\mathcal{A}$ or $\mathcal{B}$. In Algorithm~\ref{alg:HEGMM-enhanced}, we choose the smaller of $m$ and $n$ and expand either $\mathcal{A}$ or $\mathcal{B}$ accordingly (lines 3 - 10). When $l=\min\{m,l,n\}$, we make no change of $\mathcal{A}$ and $\mathcal{B}$ (lines 11 - 15). After initializing several relevant variables (lines 16 - 18), Algorithm~\ref{alg:HEGMM-enhanced} goes through a loop to compute and accumulate the partial products (lines 19-28). To be more specific, we first conduct $\epsilon$ and $\omega$ transformations based on the expanded matrix ($\mathcal{A}$ or $\mathcal{B}$) (lines 20 - 21), which are combined together into $\mathcal{C}_{temp}$ using the element-wise HE multiplication (line 22). The algorithm then extracts the possible $t$ copies of  
$\epsilon_{m\times n}(\sigma(\mathcal{A}))\odot \omega_{m\times n}(\tau(\mathcal{B}))$ from $\mathcal{C}_{temp}$ and accumulates them to $\mathcal{C}_{m\times n}$, according to Theorems~\ref{thrm:dupA} and ~\ref{thrm:dupB}, and the redundant copies are excluded from the $\mathcal{C}_{m\times n}$. 

Note that, compared with Algorithm~\ref{alg:HEGMM}, Algorithm~\ref{alg:HEGMM-enhanced} only needs to perform $p=\min\{m,l,n\}$ loops (line 19) instead of $l$. We assume that the proper order is adopted when flattening the matrix: When $p = \min\{m,l,n\} = m$, ${\mathcal{A}}$ is expanded and the column-major order is adopted to flatten matrices; When $p = \min\{m,l,n\} = n$, ${\mathcal{B}}$ is expanded and the row-major order is adopted to flatten matrices; When $p = \min\{m,l,n\} = l$, neither ${\mathcal{A}}$ and ${\mathcal{B}}$ is expanded, and either major order can be adopted to flatten matrices. 

\else

\subsubsection{Algorithm Details}
%HEGMM-Enhanced 
\begin{center}
\begin{minipage}{.9\linewidth}
\begin{algorithm}[H]
\caption{HEGMM-Enhanced}
\LinesNumbered
\KwIn{matrix $\mathcal{A}_{m\times l}$, $\mathcal{B}_{l\times n}$}
\KwOut{matrix $\mathcal{C}_{m\times n} =  \mathcal{A}_{m\times l} \times \mathcal{B}_{l\times n}$}

\setcounter{AlgoLine}{0}
\textbf{[Step0-1]}

$p\gets \min(m,l,n)$

$t\gets \left \lceil l/p \right \rceil$

\textbf{[Step0-2]}

\uIf{p=m}
    {
        $\Bar{\mathcal{A}}_{M\times N} \gets \underset{t}{[\underbrace{\mathcal{A};\mathcal{A};\mathcal{A};...;\mathcal{A}}]^T}$
        
        $\Bar{\mathcal{B}}_{M\times N} \gets \mathcal{B}$
    }
\uElseIf{p=n}
    {
        $\Bar{\mathcal{A}}_{M\times N} \gets \mathcal{A}$
        
        $\Bar{\mathcal{B}}_{M\times N} \gets \underset{t}{[\underbrace{\mathcal{B};\mathcal{B};\mathcal{B};...;\mathcal{B}}]}$
    }
\Else
    {
        $\Bar{\mathcal{A}}_{M\times N} \gets \mathcal{A}$
        
        $\Bar{\mathcal{B}}_{M\times N} \gets \mathcal{B}$ 
    }

\textbf{[Step1]}

$ct.\Bar{\mathcal{A}}^{(0)}  \gets \epsilon^0_{M\times \max(l,N)}(\sigma(\Bar{\mathcal{A}}))$ //dimension? 

$ct.\Bar{\mathcal{B}}^{(0)}  \gets \omega_{\max(l,M)\times N}^0(\tau(\Bar{\mathcal{B}}))$ //dimensinos?

\textbf{[Step2]}

\For{k=0,1,...,$p-1$}
    {
        $ct.\Bar{\mathcal{A}}^{(k)} \gets \epsilon_{M\times N}^k(ct.\Bar{\mathcal{A}}^{(0)})$ //dimensions?
        
        $ct.\Bar{\mathcal{B}}^{(k)} \gets \omega_{M\times N}^k(ct.\Bar{\mathcal{B}}^{(0)})$ //dimensinos?
        
        \uIf{$t\cdot (k+1) \le l$}
        {
            $ct.{\mathcal{C}} \gets ct.{\mathcal{C}}+ ct.\Bar{\mathcal{A}}^{(k)} \odot ct.\Bar{\mathcal{B}}^{(k)}$
        }
        \Else
        {
           $mask=\underset{(t-1)\cdot m\cdot n}{[\underbrace{1,1,...,1}},\underset{m\cdot n}{\underbrace{0,...,0}]}$
        
            $ct.{\mathcal{C}} \gets ct.{\mathcal{C}}+ ct.\Bar{\mathcal{A}}^{(k)} \odot ct.\Bar{\mathcal{B}}^{(k)} \odot mask$
        }
    }

\For{i=0,1,...,$t-1$}
    {
        $ct.\mathcal{C} \gets ct.\mathcal{C}+He\text{-}Rot(ct.\mathcal{C}, i\cdot p)$
    }

\textbf{[Step3]}

$\mathcal{C}_{m\times n} \gets ct.\mathcal{C}$

\Return{$\mathcal{C}_{m\times n}$}
\label{alg:HEGMM-enhanced}
\end{algorithm} 
\end{minipage}
\end{center}

The core idea is compromising space for time. However, the space is limited. The number of slots in a ciphertext ($len$) must be no smaller than the largest number of slots required by $\mathcal{A}_{m\times l}$, $\mathcal{B}_{l\times n}$, and $\mathcal{C}_{m\times n}$. That is,
\begin{equation*}
\label{eqn:ciphsize}
    len \geq
    \begin{cases}
      \max(m,l,n)\cdot m\cdot {\left \lceil l/m \right \rceil} & \text{ if }m=\min(m,l,n)\\
      \max(m,l,n)\cdot n\cdot {\left \lceil l/n \right \rceil} & \text{ if }n=\min(m,l,n)\\
      m\cdot n & \text{ if }l=\min(m,l,n)
    \end{cases}
\end{equation*}

\fi

\section{Experiments}

In this section, we evaluate the performance of the two algorithms developed in this paper, i.e., HEGMM and HEGMM-Enhanced, and compare them with the state-of-the-art schemes for HE-based matrix multiplication.

\subsection{Experimental platform}
\label{Exp-setup}
We implemented HEGMM and HEGMM-Enhanced using a Python HE library, named Pyfhel~\cite{ibarrondo2021pyfhel} with BFV scheme \cite{Bfv12,fan2012BFV}. We set the HE scheme based on the RLWE (Ring Learning With Errors)~\cite{lyubashevsky2010ideal} assumption over the cyclotomic ring $R_q = \mathbb{Z}_q[X]/(X^N + 1)$ with $N=2^{12}$. Thus each ciphertext can hold up to $N= 2^{12}$ slots for plaintext values, the largest square matrix that can be accommodated in one ciphertext is thus $64 \times 64$. 

In our experiments, we studied the following approaches. 
% We did not implement the algorithm in~\cite{huang2022secure} as, even though it is presented from a different perspective, it is very similar to Algorithm~\ref{alg:HEGMM}. 
\begin{itemize}
\item \textbf{E2DM-S}, which is presented in~\cite{jiang2018secure} on square matrix multiplication. For a general MM $\mathcal{A}_{m\times l} \times \mathcal{B}_{l\times n}$, we can transform $\mathcal{A}_{m\times l}$ and $\mathcal{B}_{l\times n}$ to two square matrices, $\mathcal{A'}_{d\times d}$ and $\mathcal{B'}_{d\times d}$ with $d=\max\{m,l,n\}$ and use this algorithm to calculate the result;   

\item \textbf{E2DM-R}, which is presented in~\cite{jiang2018secure} on rectangular matrix multiplication $\mathcal{A}_{r\times d} \times \mathcal{B}_{d\times d}$. For a general MM $\mathcal{A}_{m\times l} \times \mathcal{B}_{l\times n}$, we can expand $\mathcal{A}_{m\times l}$ and/or $\mathcal{B}_{l\times n}$ accordingly and use this algorithm to calculate the result;

\item \textbf{Huang-MM}, which is introduced in \cite{huang2022secure} and implemented with Pyfhel \cite{ibarrondo2021pyfhel}. 

\item \textbf{HEGMM}, which is shown in Algorithm~\ref{alg:HEGMM}. 

\item \textbf{HEGMM-En}, which is shown in Algorithm~\ref{alg:HEGMM-enhanced}. 

\end{itemize}

We randomly generated 2000 pairs of matrices, with column and row numbers evenly distributed with $[1,64]$, as the test cases. Note that, even though Huang \textit{et al.} \cite{huang2022secure}, HEGMM and HEGMM-Enhanced can handle MM with column or row numbers exceeding 64, as long as the total element is no more than $2^{12}$, we limited the largest dimension size to 64 so that \textbf{E2DM-S} and \textbf{E2DM-R} can always apply. We assume that $\sigma$ and $\tau$ transformations of E2DM and HEGMM are performed on plaintext, and to be fair, we assume the portion of Huang \textit{et al.}'s algorithm, i.e., extracting diagonal vector from matrix, is also performed on plaintext. All experiments were conducted on a server with Intel Xeon Silver 4114 with 10 cores at 2.2GHz.

\subsection{Computational time evaluations}

To better understand the performance of the five different algorithms listed above, we categorize the test cases into 5 groups: (1) $m=\min\{m,l,n\}$; (2) $l=\min\{m,l,n\}$; (3) $n=\min\{m,l,n\}$; (4) $l \bmod m =0$; (5) $m=l=n$ (the square matrix). Note that cases in (4) and (5) are the most favorable ones for \textbf{E2DM-R} and \textbf{E2DM-S}, respectively. For test cases in each group, execution times were collected for the five different approaches. We use the better ones by \textbf{E2DM-S} and \textbf{E2DM-R} as the performance that can achieved by \textbf{E2DM}. We then calculate the speedups that can be achieved using \textbf{HEGMM}, \textbf{HEGMM-En} over \textbf{E2DM} and \textbf{Huang \textit{et al.}}. 

The average, median, and maximum speedup for each group, as well as the overall results, are listed in Tables~\ref{tab:hegmmavgmed} and ~\ref{tab:hegmmenavgmed}.  

\begin{table*}[htbp]
  \centering
  \begin{threeparttable}
  \caption{The performance comparison of HEGMM, E2DM \cite{jiang2018secure} and Huang \textit{et al.} \cite{huang2022secure} in different scenarios.}
    \begin{tabular}{c|rr|rr|rr|rr|rr|rr}
     & \multicolumn{2}{c|}{$m=min(m,l,n)$} & \multicolumn{2}{c|}{$l=min(m,l,n)$} & \multicolumn{2}{c|}{$n=min(m,l,n)$} & \multicolumn{2}{c|}{$l \mod m=0$} & \multicolumn{2}{c|}{$square$} & \multicolumn{2}{c}{overall} \\
         & \multicolumn{1}{c}{s-up1$^\dagger$} & \multicolumn{1}{c|}{s-up2} & \multicolumn{1}{c}{s-up1} & \multicolumn{1}{c|}{s-up2} & \multicolumn{1}{c}{s-up1} & \multicolumn{1}{c|}{s-up2} & \multicolumn{1}{c}{s-up1} & \multicolumn{1}{c|}{s-up2} & \multicolumn{1}{c}{s-up1} & \multicolumn{1}{c|}{s-up2} & \multicolumn{1}{c}{s-up1} & \multicolumn{1}{c}{s-up2}\\
    \hline
    Average & 0.90 & 2.21 & 10.74 & 1.99 & 1.83 & 2.29 & 1.49 & 2.12 & 1.00 & 1.67 & 3.30 & 1.93\\
    Median & 0.58 & 2.21 & 3.61 & 1.95 & 1.20 & 2.35 & 0.52 & 2.10 & 1.00 & 1.72 & 1.04 & 2.04 \\
    Max  & 39.50 & 3.25 & 154.12 & 4.96 & 136.82 & 3.28 & 39.50 & 3.07 & 1.02 & 2.36 & 154.12 & 4.96 \\
    \end{tabular}%
    
  \label{tab:hegmmavgmed}%
  \begin{tablenotes}
    \footnotesize
    \item$[\dagger] \emph{s-up1}$ is the speedup achieved by 
    \textbf{HEGMM} over the best of \textbf{E2DM} \cite{jiang2018secure}; \emph{s-up2} is the speedup achieved by \textbf{HEGMM} over \textbf{Huang \textit{et al.} \cite{huang2022secure}};
  \end{tablenotes}
  
  \end{threeparttable}
\end{table*}%

\begin{table*}[htbp]
  \centering
  \begin{threeparttable}
  \caption{The performance comparison of HEGMM-En, E2DM \cite{jiang2018secure} and Huang \textit{et al.} \cite{huang2022secure} in different scenarios.}
    \begin{tabular}{c|rr|rr|rr|rr|rr|rr}
     & \multicolumn{2}{c|}{$m=min(m,l,n)$} & \multicolumn{2}{c|}{$l=min(m,l,n)$} & \multicolumn{2}{c|}{$n=min(m,l,n)$} & \multicolumn{2}{c|}{$l \mod m=0$} & \multicolumn{2}{c|}{$square$} & \multicolumn{2}{c}{overall} \\
         & \multicolumn{1}{c}{s-up3$^\ddagger$} & \multicolumn{1}{c|}{s-up4} & \multicolumn{1}{c}{s-up3} & \multicolumn{1}{c|}{s-up4} & \multicolumn{1}{c}{s-up3} & \multicolumn{1}{c|}{s-up4} & \multicolumn{1}{c}{s-up3} & \multicolumn{1}{c|}{s-up4} & \multicolumn{1}{c}{s-up3} & \multicolumn{1}{c|}{s-up4} & \multicolumn{1}{c}{s-up3} & \multicolumn{1}{c}{s-up4}\\
    \hline
    Average & 1.69 & 6.60 & 10.32 & 2.01 & 4.06 & 6.55 & 2.74 & 7.28 & 0.99 & 1.66 & 4.13 & 4.50\\
    Median & 1.13 & 4.88 & 3.70 & 1.97 & 2.56 & 4.83 & 1.37 & 6.23 & 1.00 & 1.72 & 1.38 & 2.48 \\
    Max  & 33.28 & 23.29 & 132.42 & 4.26 & 113.31 & 23.68 & 33.28 & 21.75 & 1.01 & 2.36 & 132.42 & 23.68 \\
    \end{tabular}%
    
  \label{tab:hegmmenavgmed}%
  \begin{tablenotes}
    \footnotesize
    \item$[\ddagger] \emph{s-up3}$ is the speedup achieved by \textbf{HEGMM-En} over the best of \textbf{E2DM} \cite{jiang2018secure}; \emph{s-up4} is the speedup achieved by \textbf{HEGMM-En} over \textbf{Huang \textit{et al.} \cite{huang2022secure}};
  \end{tablenotes}
  
  \end{threeparttable}
\end{table*}%

As in Table~\ref{tab:hegmmavgmed}, \textbf{HEGMM} outperforms \textbf{Huang \textit{el al.}} in all groups, with a speedup of $1.93$ on average and the maximum of over $4.96$. Compared with \textbf{E2DM}, \textbf{HEGMM} can achieve better performance in all cases other than if the matrices are square or when $m=min(m,l,n)$. As shown in Table~\ref{tab:hegmmavgmed}, \textbf{HEGMM} can achieve a speedup of $3.3$ on average with the maximum of over $154.12$ over the best of \textbf{E2DM}. 
When source matrices are square, \textbf{HEGMM} is equivalent to \textbf{E2DM} with slight overhead for taking care of generality of matrices. When $m=min(m,l,n)$, the time complexity of \textbf{E2DM-R} is $\mathcal{O}(m)$ while \textbf{HEGMM} is $\mathcal{O}(l)$. Therefore, \textbf{E2DM-R} can potentially achieve better performance, especially when $m <<l$. 

The enhanced algorithm, i.e., \textbf{HEGMM-En}, can significantly outperform the rest of the approaches for arbitrary HE MM, as shown in Table~\ref{tab:hegmmenavgmed}. This is because 
\textbf{HEGMM-En} can reduce HE-Mult operations significantly by properly duplicating the source matrices. Specifically, \textbf{HEGMM-En} can achieve an average speedup of $4.13$ with the maximum of $132.42$ over best of \textbf{E2DM}, and an average speedup of $4.50$ with the maximum of $23.68$ over the \textbf{Huang \textit{et al.}}. For square matrices, \textbf{HEGMM-En} is equivalent to \textbf{E2DM} and requires slightly more time than due to the overhead for taking care of the generality of matrices. 

We also use Figure~\ref{Fig.expbar} to compare the performance of these approaches from a different perspective. Specifically, Figure~\ref{Fig.expbar} shows the number of test cases that can achieve speedups between $(0,1]$, $[1,2]$, and $(2, +\infty)$ by \textbf{HEGMM}, \textbf{HEGMM-En} and \textbf{Huang \textit{et al.}} over the best results by \textbf{E2DM-S} and \textbf{E2DM-R}. In a total of 2000 test cases, there were 1324 cases that \textbf{HEGMM} outperform both \textbf{E2DM-S} and \textbf{E2DM-R}, while it is 1805 for \textbf{HEGMM-En}, which indicates that \textbf{HEGMM-En} performs significantly better than \textbf{HEGMM}. For \textbf{Huang \textit{et al.}}, only 610 samples outperform \textbf{E2DM}. 
Overall, the experimental findings indicate that the algorithms \textbf{HEGMM} and \textbf{HEGMM-EN} exhibit a significant performance superiority compared to current methodologies in 66.2\% and 90.2\% of the samples, respectively.

\begin{figure}[h]
\centering
\includegraphics[width=0.40\textwidth]{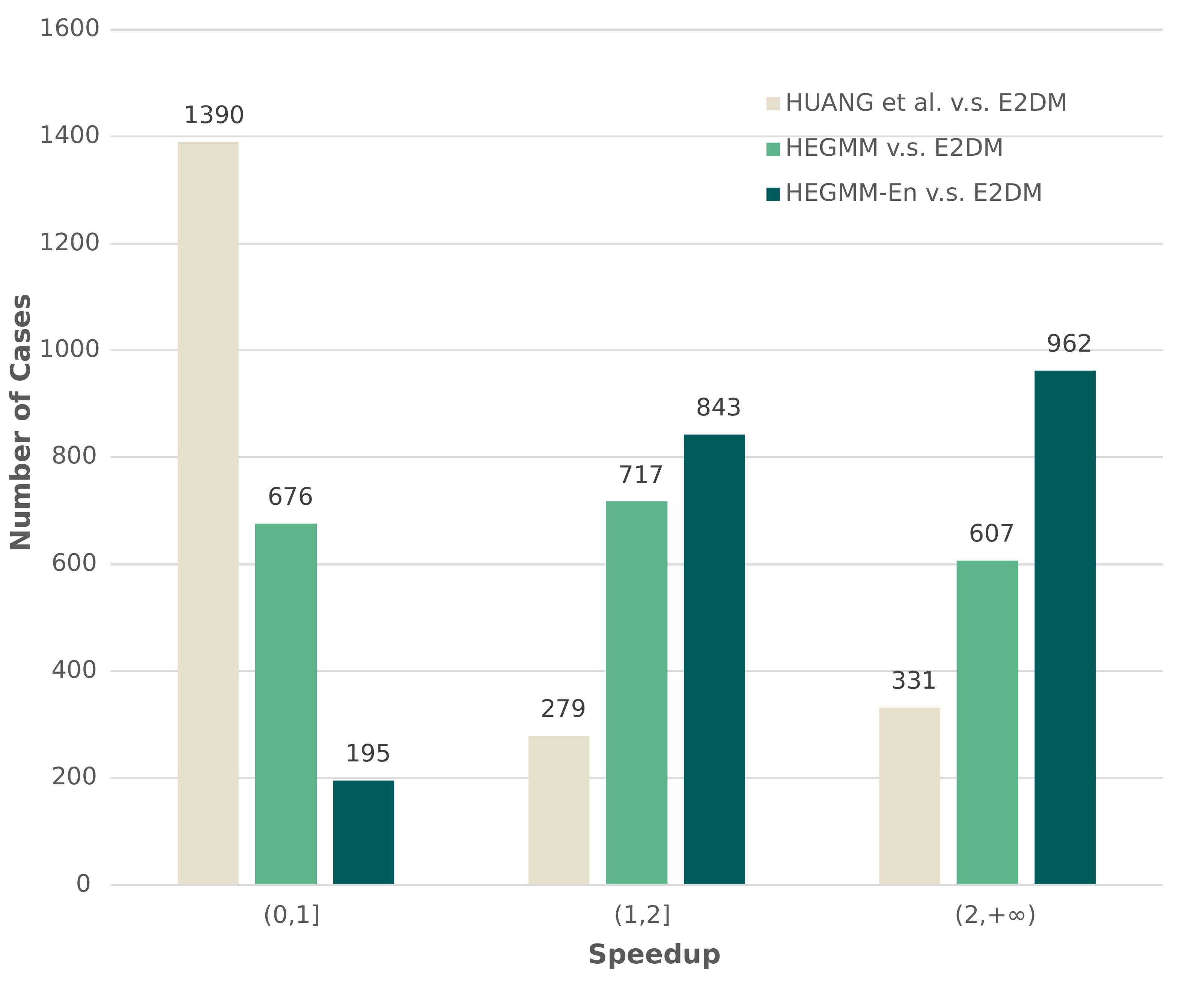}
\caption{The statistics of the speedups for the algorithms HEGMM, HEGMM-En, E2DM \cite{jiang2018secure}, and Huang \textit{et al.} \cite{huang2022secure}.}
\label{Fig.expbar}
\end{figure}

\begin{figure}[h]
\centering
\includegraphics[width=0.40\textwidth]{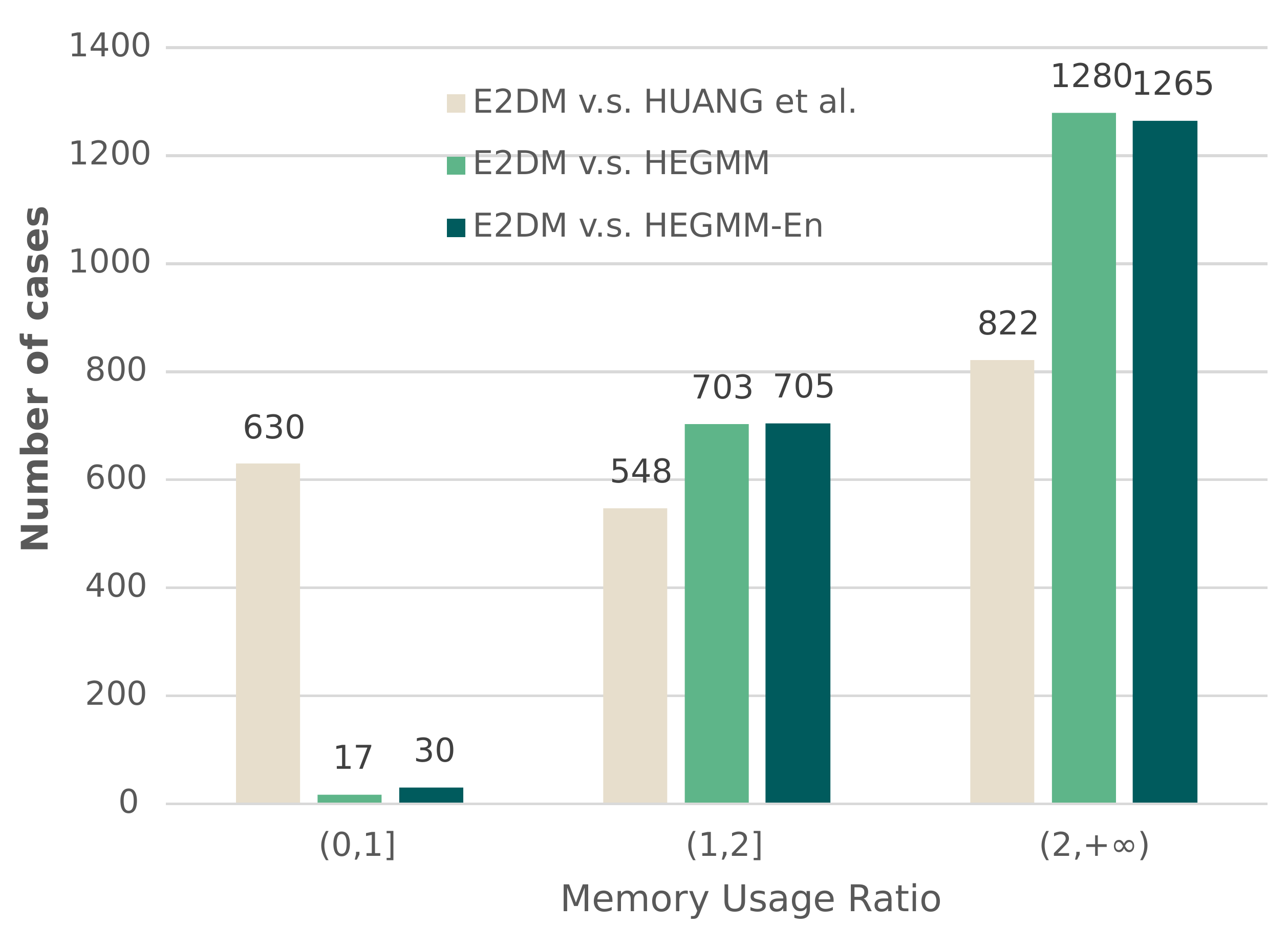}
\caption{The statistics of memory usage ratio for the algorithms HEGMM, HEGMM-En, E2DM \cite{jiang2018secure}, and Huang \textit{et al.} \cite{huang2022secure}.}

\label{Fig.membar}
\end{figure}

\subsection{Memory evaluations}

HE computations may demand not only excessive computation time but also memory usage as well. We are therefore interested in studying the memory usage of these approaches. We collected the memory usage for each algorithm during its runtime for our test cases with results normalized against the memory usage by \textbf{E2DM} and presented in Figure~\ref{Fig.membar}, where a total of 2000 experimental sets were conducted. In comparison to \textbf{E2DM}, both \textbf{HEGMM} and \textbf{HEGMM-En} tend to consume less memory. As shown in Figure~\ref{Fig.membar}, less than 17 (resp. 30) out of the total 2000 test results show that \textbf{E2DM} consumes less memory than \textbf{HEGMM} (resp. \textbf{HEGMM-EN}). In contrast, 630 test cases using \textbf{Huang \textit{et al.}} have higher memory usage compared to \textbf{E2DM}. Overall, the experimental results clearly demonstrate 
the advantage of memory usage efficiency
of \textbf{HEGMM} and \textbf{HEGMM-En} over the existing approaches.

\subsection{Evaluation of large matrix multiplication}

Our test cases above are limited to the maximum matrix dimension of 64x64, the largest one that can fit into one ciphertext in our setting. When matrix sizes exceed this limit, we can resort to the traditional \emph{blocking} algorithm, i.e., by dividing a large matrix into a series of smaller blocks, to perform the MM calculation. We want to study the performance of our proposed approaches when incorporated into MM blocking algorithms
for $\mathcal{A}_{100\times 100}\times \mathcal{B}_{100\times 100}$.

Partitioning large source matrices properly based on different MM algorithms is an interesting problem but beyond the scope of this paper. In our experiments, we hire two intuitive partition methods: \textbf{P1:} partitioning the matrix ${100\times 100}$ to four equal-size square matrices of ${50\times 50}$; \textbf{P2:} partitioning the matrix ${100\times 100}$ to four sub matrices of ${64\times 64}$, ${64\times 36}$, ${36\times 64}$, and ${36\times 36}$.  

Different HE MM algorithms were employed for blocking MMs. We ran the experiments 10 times, and the average results were collected and shown in Table~\ref{tab:blcres}. As expected, for \textbf{P1} when all matrices are square, \textbf{E2DM-S}, \textbf{E2DM-R}, \textbf{HEGMM} and \textbf{HEGMM-EN} perform quite similarly, while \textbf{HEGMM} and \textbf{HEGMM-EN} take a little longer due to overhead in dealing with the generality of the matrices. \textbf{Huang \textit{et al.}} shows a much slower performance than the others. We believe this is because that \textbf{Huang \textit{et al.}} approach requires duplicating diagonals of a source matrix with the complexity of $O(logN)$, with $N$ the size of the matrix. The duplication operation involves expensive \emph{HE-CMult} and \emph{HE-Rot} operations. This is particularly computationally expensive when $N$ is not a power of two. In contrast, the time complexity of same step in \textbf{E2DM} and \textbf{HEGMM} is $O(2)$ for \textbf{P1}.

For $\textbf{P2}$, \textbf{HEGMM}, \textbf{HEGMM-EN}, and \textbf{Huang \textit{et al.}} can perform better because they can take advantage of the irregular shapes of the matrices. In particular, \textbf{HEGMM-EN} (resp. \textbf{HEGMM}) has a complexity of $O(\min(m,l,n))$ (resp. $O(l)$). In contrast, \textbf{E2DM-S} runs much longer because it needs to expand matrices ${64\times 36}$ and ${36\times 64}$ to form ${64\times 64}$ matrix. 
\textbf{E2DM-R} is incapable of processing matrices with such irregular shapes, as it has a tendency to enlarge matrix of $36 \times 64$ to $72 \times 72$, which is larger than the ciphertext size.

\setlength{\tabcolsep}{1mm}{
% Table generated by Excel2LaTeX from sheet 'Sheet1'
\begin{table}[htbp]
  \centering
  \caption{Time evaluation of the blocking algorithm}
    \begin{tabular}{c|ccccc}
    \textbf{Partition} & \multicolumn{1}{c}{\textbf{E2DM-S}} & \multicolumn{1}{c}{\textbf{E2DM-R}} & \multicolumn{1}{c}{\textbf{Huang \textit{et al.}}} & \multicolumn{1}{c}{\textbf{HEGMM}} & \multicolumn{1}{c}{\textbf{HEGMM-EN}}\\
    \hline
    \textbf{P1}   & 39.06s & \textbf{39.01s} & 74.34s & 39.12s & 39.15s\\
    \textbf{P2}   & 29.76s & N/A  & 37.51s & \textbf{26.17s} & 26.23s \\
    \end{tabular}%
  \label{tab:blcres}%
\end{table}%
}

\section{Conclusions}
HE has great potential for security and privacy protection when outsourcing data processing to the cloud. However, the excessive computational overhead associated with the HE operations makes it prohibitive for many practical cloud applications. We study how to reduce the HE computational cost for general MM operation, an essential building block in many computational fields. We present two HE MM algorithms, with one improving another, to reduce the computational complexity of MM by taking advantage of the SIMD structure in the HE scheme. We also conduct rigorous analytical studies on the correctness and computational complexity of these two algorithms. Experiment results show that our proposed approach can significantly outperform the existing methods. In our future research, we plan to investigate how to reduce the HE computational cost for sparse matrix multiplication.

\section*{Acknowledgements}
This work was supported in part by the Air Force Office of Scientific Research (AFOSR) and the Air Force Research Laboratory / Information Directorate (AFRL/RI), Rome, NY under the 2021 Summer Faculty Fellowship Program, and Information Directorate Internship Program, respectively.  The views and conclusions contained herein are those of the authors and should not be interpreted as necessarily representing the official policies or endorsements, either expressed or implied, of the Air Force Research Laboratory or the U.S. Government. Approved for Public Release on 06 Mar 2024. Distribution is Unlimited. Case Number: 2024-0184  (original case number(s):  AFRL-2024-0944

\bibliographystyle{abbrv}
\bibliography{refer.bib}

% Generated by IEEEtran.bst, version: 1.14 (2015/08/26)
\begin{thebibliography}{10}
\providecommand{\url}[1]{#1}
\csname url@samestyle\endcsname
\providecommand{\newblock}{\relax}
\providecommand{\bibinfo}[2]{#2}
\providecommand{\BIBentrySTDinterwordspacing}{\spaceskip=0pt\relax}
\providecommand{\BIBentryALTinterwordstretchfactor}{4}
\providecommand{\BIBentryALTinterwordspacing}{\spaceskip=\fontdimen2\font plus
\BIBentryALTinterwordstretchfactor\fontdimen3\font minus \fontdimen4\font\relax}
\providecommand{\BIBforeignlanguage}[2]{{%
\expandafter\ifx\csname l@#1\endcsname\relax
\typeout{** WARNING: IEEEtran.bst: No hyphenation pattern has been}%
\typeout{** loaded for the language `#1'. Using the pattern for}%
\typeout{** the default language instead.}%
\else
\language=\csname l@#1\endcsname
\fi
#2}}
\providecommand{\BIBdecl}{\relax}
\BIBdecl

\bibitem{Duong2017TMMP}
\BIBentryALTinterwordspacing
D.~H. Duong, P.~K. Mishra, and M.~Yasuda, ``Efficient secure matrix multiplication over lwe-based homomorphic encryption,'' \emph{Tatra Mountains Mathematical Publications}, vol.~67, no.~1, pp. 69--83, 2017. [Online]. Available: \url{https://doi.org/10.1515/tmmp-2016-0031}
\BIBentrySTDinterwordspacing

\bibitem{Mishra2017ISPE}
P.~K. Mishra, D.~H. Duong, and M.~Yasuda, ``Enhancement for secure multiple matrix multiplications over ring-lwe homomorphic encryption,'' in \emph{Information Security Practice and Experience}, J.~K. Liu and P.~Samarati, Eds.\hskip 1em plus 0.5em minus 0.4em\relax Springer, 2017, pp. 320--330.

\bibitem{varghese2018next}
B.~Varghese and R.~Buyya, ``Next generation cloud computing: New trends and research directions,'' \emph{Future Generation Computer Systems}, vol.~79, pp. 849--861, 2018.

\bibitem{cheon2018multi}
J.~H. Cheon, A.~Kim, and D.~Yhee, ``Multi-dimensional packing for heaan for approximate matrix arithmetics,'' \emph{Cryptology ePrint Archive}, 2018.

\bibitem{atallah2002secure}
M.~J. Atallah, K.~N. Pantazopoulos, J.~R. Rice, and E.~E. Spafford, ``Secure outsourcing of scientific computations,'' in \emph{Advances in Computers}.\hskip 1em plus 0.5em minus 0.4em\relax Elsevier, 2002, vol.~54, pp. 215--272.

\bibitem{lei2014achieving}
X.~Lei, X.~Liao, T.~Huang, and F.~Heriniaina, ``Achieving security, robust cheating resistance, and high-efficiency for outsourcing large matrix multiplication computation to a malicious cloud,'' \emph{Information sciences}, vol. 280, pp. 205--217, 2014.

\bibitem{fu2017secure}
S.~Fu, Y.~Yu, and M.~Xu, ``A secure algorithm for outsourcing matrix multiplication computation in the cloud,'' in \emph{Proceedings of the Fifth ACM international workshop on security in cloud computing}, 2017, pp. 27--33.

\bibitem{zhang2019practical}
S.~Zhang, C.~Tian, H.~Zhang, J.~Yu, and F.~Li, ``Practical and secure outsourcing algorithms of matrix operations based on a novel matrix encryption method,'' \emph{IEEE Access}, vol.~7, pp. 53\,823--53\,838, 2019.

\bibitem{mishra2021fast}
P.~K. Mishra, D.~Rathee, D.~H. Duong, and M.~Yasuda, ``Fast secure matrix multiplications over ring-based homomorphic encryption,'' \emph{Information Security Journal: A Global Perspective}, vol.~30, no.~4, pp. 219--234, 2021.

\bibitem{wang2019secure}
S.~Wang and H.~Huang, ``Secure outsourced computation of multiple matrix multiplication based on fully homomorphic encryption,'' \emph{KSII Transactions on Internet and Information Systems (TIIS)}, vol.~13, no.~11, pp. 5616--5630, 2019.

\bibitem{cheon2018homomorphic}
J.~H. Cheon and A.~Kim, ``Homomorphic encryption for approximate matrix arithmetic,'' \emph{Cryptology ePrint Archive}, 2018.

\bibitem{tian2014somewhat}
Y.~Tian, M.~Al-Rodhaan, B.~Song, A.~Al-Dhelaan, and T.~H. Ma, ``Somewhat homomorphic cryptography for matrix multiplication using gpu acceleration,'' in \emph{2014 International Symposium on Biometrics and Security Technologies (ISBAST)}.\hskip 1em plus 0.5em minus 0.4em\relax IEEE, 2014, pp. 166--170.

\bibitem{hesamifard2018privacy}
E.~Hesamifard, H.~Takabi, M.~Ghasemi, and R.~N. Wright, ``Privacy-preserving machine learning as a service.'' \emph{Proc. Priv. Enhancing Technol.}, vol. 2018, no.~3, pp. 123--142, 2018.

\bibitem{hiromasa2016packing}
R.~Hiromasa, M.~Abe, and T.~Okamoto, ``Packing messages and optimizing bootstrapping in gsw-fhe,'' \emph{IEICE TRANSACTIONS on Fundamentals of Electronics, Communications and Computer Sciences}, vol.~99, no.~1, pp. 73--82, 2016.

\bibitem{scale2015state}
R.~Scale, ``State of the cloud report,'' Tech. Rep, Tech. Rep., 2015.

\bibitem{kushilevitz1997replication}
E.~Kushilevitz and R.~Ostrovsky, ``Replication is not needed: Single database, computationally-private information retrieval,'' in \emph{Proceedings 38th annual symposium on foundations of computer science}.\hskip 1em plus 0.5em minus 0.4em\relax IEEE, 1997, pp. 364--373.

\bibitem{benaloh1987verifiable}
J.~D.~C. Benaloh, \emph{Verifiable secret-ballot elections}.\hskip 1em plus 0.5em minus 0.4em\relax Yale University, 1987.

\bibitem{rivest1978method}
R.~L. Rivest, A.~Shamir, and L.~Adleman, ``A method for obtaining digital signatures and public-key cryptosystems,'' \emph{Communications of the ACM}, vol.~21, no.~2, pp. 120--126, 1978.

\bibitem{goldwasser1982probabilistic}
S.~Goldwasser and S.~Micali, ``Probabilistic encryption \& how to play mental poker keeping secret all partial information,'' in \emph{Proceedings of the fourteenth annual ACM symposium on Theory of computing}, 1982, pp. 365--377.

\bibitem{elgamal1985public}
T.~ElGamal, ``A public key cryptosystem and a signature scheme based on discrete logarithms,'' \emph{IEEE transactions on information theory}, vol.~31, no.~4, pp. 469--472, 1985.

\bibitem{benaloh1994receipt}
J.~Benaloh and D.~Tuinstra, ``Receipt-free secret-ballot elections,'' in \emph{Proceedings of the twenty-sixth annual ACM symposium on Theory of computing}, 1994, pp. 544--553.

\bibitem{naccache1998new}
D.~Naccache and J.~Stern, ``A new public key cryptosystem based on higher residues,'' in \emph{Proceedings of the 5th ACM Conference on Computer and Communications Security}, 1998, pp. 59--66.

\bibitem{okamoto1998new}
T.~Okamoto and S.~Uchiyama, ``A new public-key cryptosystem as secure as factoring,'' in \emph{International conference on the theory and applications of cryptographic techniques}.\hskip 1em plus 0.5em minus 0.4em\relax Springer, 1998, pp. 308--318.

\bibitem{paillier1999public}
P.~Paillier, ``Public-key cryptosystems based on composite degree residuosity classes,'' in \emph{International conference on the theory and applications of cryptographic techniques}.\hskip 1em plus 0.5em minus 0.4em\relax Springer, 1999, pp. 223--238.

\bibitem{damgaard2001generalisation}
I.~Damg{\aa}rd and M.~Jurik, ``A generalisation, a simplification and some applications of paillier's probabilistic public-key system,'' in \emph{International workshop on public key cryptography}.\hskip 1em plus 0.5em minus 0.4em\relax Springer, 2001, pp. 119--136.

\bibitem{kawachi2007multi}
A.~Kawachi, K.~Tanaka, and K.~Xagawa, ``Multi-bit cryptosystems based on lattice problems,'' in \emph{International Workshop on Public Key Cryptography}.\hskip 1em plus 0.5em minus 0.4em\relax Springer, 2007, pp. 315--329.

\bibitem{tackmannconstructing}
S.~C. U. M.~B. Tackmann, ``Constructing confidential channels from authenticated channels—public-key encryption revisited.''

\bibitem{boneh2005evaluating}
D.~Boneh, E.-J. Goh, and K.~Nissim, ``Evaluating 2-dnf formulas on ciphertexts,'' in \emph{Theory of cryptography conference}.\hskip 1em plus 0.5em minus 0.4em\relax Springer, 2005, pp. 325--341.

\bibitem{sander1999non}
T.~Sander, A.~Young, and M.~Yung, ``Non-interactive cryptocomputing for nc/sup 1,'' in \emph{40th Annual Symposium on Foundations of Computer Science (Cat. No. 99CB37039)}.\hskip 1em plus 0.5em minus 0.4em\relax IEEE, 1999, pp. 554--566.

\bibitem{lopez2012fly}
A.~L{\'o}pez-Alt, E.~Tromer, and V.~Vaikuntanathan, ``On-the-fly multiparty computation on the cloud via multikey fully homomorphic encryption,'' in \emph{Proceedings of the forty-fourth annual ACM symposium on Theory of computing}, 2012, pp. 1219--1234.

\bibitem{brakerski2011fully}
Z.~Brakerski and V.~Vaikuntanathan, ``Fully homomorphic encryption from ring-lwe and security for key dependent messages,'' in \emph{Annual cryptology conference}.\hskip 1em plus 0.5em minus 0.4em\relax Springer, 2011, pp. 505--524.

\bibitem{ames2015secure}
S.~Ames, M.~Venkitasubramaniam, A.~Page, O.~Kocabas, and T.~Soyata, ``Secure health monitoring in the cloud using homomorphic encryption: A branching-program formulation,'' in \emph{Enabling Real-Time Mobile Cloud Computing through Emerging Technologies}.

\bibitem{lyubashevsky2010ideal}
V.~Lyubashevsky, C.~Peikert, and O.~Regev, ``On ideal lattices and learning with errors over rings,'' in \emph{29th Intl. Conference on the Theory and Applications of Cryptographic Techniques}.\hskip 1em plus 0.5em minus 0.4em\relax Springer, 2010, pp. 1--23.

\bibitem{reagen2021cheetah}
B.~Reagen, W.-S. Choi, Y.~Ko, V.~T. Lee, H.-H.~S. Lee, G.-Y. Wei, and D.~Brooks, ``Cheetah: Optimizing and accelerating homomorphic encryption for private inference,'' in \emph{IEEE International Symposium on High-Performance Computer Architecture (HPCA)}.

\bibitem{nocker2023he}
M.~Nocker, D.~Drexel, M.~Rader, A.~Montuoro, and P.~Sch{\"o}ttle, ``He-man--homomorphically encrypted machine learning with onnx models,'' \emph{arXiv preprint arXiv:2302.08260}, 2023.

\bibitem{van2010fully}
M.~Van~Dijk, C.~Gentry, S.~Halevi, and V.~Vaikuntanathan, ``Fully homomorphic encryption over the integers,'' in \emph{Annual international conference on the theory and applications of cryptographic techniques}.\hskip 1em plus 0.5em minus 0.4em\relax Springer, 2010, pp. 24--43.

\bibitem{ishai2007evaluating}
Y.~Ishai and A.~Paskin, ``Evaluating branching programs on encrypted data,'' in \emph{Theory of Cryptography Conference}.\hskip 1em plus 0.5em minus 0.4em\relax Springer, 2007, pp. 575--594.

\bibitem{acar2018survey}
A.~Acar, H.~Aksu, A.~S. Uluagac, and M.~Conti, ``A survey on homomorphic encryption schemes: Theory and implementation,'' \emph{ACM Computing Surveys (Csur)}, vol.~51, no.~4, pp. 1--35, 2018.

\bibitem{ghobaei2018autonomic}
M.~Ghobaei-Arani, S.~Jabbehdari, and M.~A. Pourmina, ``An autonomic resource provisioning approach for service-based cloud applications: A hybrid approach,'' \emph{Future Generation Computer Systems}, vol.~78, pp. 191--210, 2018.

\bibitem{pancholi2016enhancement}
V.~R. Pancholi and B.~P. Patel, ``Enhancement of cloud computing security with secure data storage using aes,'' \emph{International Journal for Innovative Research in Science and Technology}, vol.~2, no.~9, pp. 18--21, 2016.

\bibitem{rajaraman2014cloud}
V.~Rajaraman, ``Cloud computing,'' \emph{Resonance}, vol.~19, no.~3, pp. 242--258, 2014.

\bibitem{power2018revenue}
B.~Power and J.~Weinman, ``Revenue growth is the primary benefit of the cloud,'' \emph{IEEE Cloud Computing}, vol.~5, no.~4, pp. 89--94, 2018.

\bibitem{becker2017cloudscale}
S.~Becker, G.~Brataas, M.~Cecowski, D.~Huljeni{\'c}, S.~Lehrig, and I.~Stupar, ``The cloudscale method for managers,'' in \emph{Engineering Scalable, Elastic, and Cost-Efficient Cloud Computing Applications}.\hskip 1em plus 0.5em minus 0.4em\relax Springer, 2017, pp. 149--165.

\bibitem{fawzi2022discovering}
A.~Fawzi, M.~Balog, A.~Huang, T.~Hubert, B.~Romera-Paredes, M.~Barekatain, A.~Novikov, F.~J. R~Ruiz, J.~Schrittwieser, G.~Swirszcz \emph{et~al.}, ``Discovering faster matrix multiplication algorithms with reinforcement learning,'' \emph{Nature}, vol. 610, no. 7930, pp. 47--53, 2022.

\bibitem{liu2011nist}
F.~Liu, J.~Tong, J.~Mao, R.~Bohn, J.~Messina, L.~Badger, D.~Leaf \emph{et~al.}, ``Nist cloud computing reference architecture,'' \emph{NIST special publication}, vol. 500, no. 2011, pp. 1--28, 2011.

\bibitem{jiang2020novel}
P.~Jiang, C.~Hong, and G.~Agrawal, ``A novel data transformation and execution strategy for accelerating sparse matrix multiplication on gpus,'' in \emph{Proceedings of the 25th ACM SIGPLAN symposium on principles and practice of parallel programming}, 2020, pp. 376--388.

\bibitem{8374488}
P.~Valero-Lara, I.~Martínez-Pérez, S.~Mateo, R.~Sirvent, V.~Beltran, X.~Martorell, and J.~Labarta, ``Variable batched dgemm,'' in \emph{2018 26th Euromicro International Conference on Parallel, Distributed and Network-based Processing (PDP)}, 2018, pp. 363--367.

\bibitem{masliah2019algorithms}
I.~Masliah, A.~Abdelfattah, A.~Haidar, S.~Tomov, M.~Baboulin, J.~Falcou, and J.~Dongarra, ``Algorithms and optimization techniques for high-performance matrix-matrix multiplications of very small matrices,'' \emph{Parallel Computing}, vol.~81, pp. 1--21, 2019.

\bibitem{liu2014efficient}
W.~Liu and B.~Vinter, ``An efficient gpu general sparse matrix-matrix multiplication for irregular data,'' in \emph{IEEE 28th international parallel and distributed processing symposium}.\hskip 1em plus 0.5em minus 0.4em\relax IEEE, 2014, pp. 370--381.

\bibitem{nagasaka2018high}
Y.~Nagasaka, S.~Matsuoka, A.~Azad, and A.~Bulu{\c{c}}, ``High-performance sparse matrix-matrix products on intel knl and multicore architectures,'' in \emph{Proceedings of the 47th International Conference on Parallel Processing Companion}, 2018, pp. 1--10.

\bibitem{zhang2020sparch}
Z.~Zhang, H.~Wang, S.~Han, and W.~J. Dally, ``Sparch: Efficient architecture for sparse matrix multiplication,'' in \emph{2020 IEEE International Symposium on High Performance Computer Architecture (HPCA)}.\hskip 1em plus 0.5em minus 0.4em\relax IEEE, 2020, pp. 261--274.

\bibitem{ran2022cryptogcn}
R.~Ran, N.~Xu, W.~Wang, Q.~Gang, J.~Yin, and W.~Wen, ``Cryptogcn: Fast and scalable homomorphically encrypted graph convolutional network inference,'' \emph{arXiv preprint arXiv:2209.11904}, 2022.

\bibitem{patra2021aby2}
A.~Patra, T.~Schneider, A.~Suresh, and H.~Yalame, ``$\{$ABY2. 0$\}$: Improved $\{$Mixed-Protocol$\}$ secure $\{$Two-Party$\}$ computation,'' in \emph{30th USENIX Security Symposium (USENIX Security 21)}, 2021, pp. 2165--2182.

\bibitem{choi2019hybrid}
J.~I. Choi, D.~Tian, G.~Hernandez, C.~Patton, B.~Mood, T.~Shrimpton, K.~R. Butler, and P.~Traynor, ``A hybrid approach to secure function evaluation using sgx,'' in \emph{Proceedings of the 2019 ACM Asia Conference on Computer and Communications Security}, 2019, pp. 100--113.

\bibitem{husted2013gpu}
N.~Husted, S.~Myers, A.~Shelat, and P.~Grubbs, ``Gpu and cpu parallelization of honest-but-curious secure two-party computation,'' in \emph{Proceedings of the 29th Annual Computer Security Applications Conference}, 2013, pp. 169--178.

\bibitem{zhang2013picco}
Y.~Zhang, A.~Steele, and M.~Blanton, ``Picco: a general-purpose compiler for private distributed computation,'' in \emph{Proceedings of the 2013 ACM SIGSAC conference on Computer \& communications security}, 2013, pp. 813--826.

\bibitem{vasiljeva2017cloud}
T.~Vasiljeva, S.~Shaikhulina, and K.~Kreslins, ``Cloud computing: Business perspectives, benefits and challenges for small and medium enterprises (case of latvia),'' \emph{Procedia Engineering}, vol. 178, pp. 443--451, 2017.

\bibitem{ibarrondo2021pyfhel}
A.~Ibarrondo and A.~Viand, ``Pyfhel: Python for homomorphic encryption libraries,'' in \emph{Proceedings of the 9th on Workshop on Encrypted Computing \& Applied Homomorphic Cryptography}, 2021, pp. 11--16.

\bibitem{Huang2023TDSC}
Z.~Huang, C.~Hong, C.~Weng, W.-j. Lu, and H.~Qu, ``More efficient secure matrix multiplication for unbalanced recommender systems,'' \emph{IEEE Transactions on Dependable and Secure Computing}, vol.~20, no.~1, pp. 551--562, 2023.

\bibitem{huang2022secure}
H.~Huang and H.~Zong, ``Secure matrix multiplication based on fully homomorphic encryption,'' \emph{Journal of Supercomputing}, pp. 1--22, 2022.

\bibitem{jiang2018secure}
X.~Jiang, M.~Kim, K.~Lauter, and Y.~Song, ``Secure outsourced matrix computation and application to neural networks,'' in \emph{Proceedings of the 2018 ACM SIGSAC conference on computer and communications security}, 2018, pp. 1209--1222.

\bibitem{gupta2018oversketch}
V.~Gupta, S.~Wang, T.~Courtade, and K.~Ramchandran, ``Oversketch: Approximate matrix multiplication for the cloud,'' in \emph{2018 IEEE International Conference on Big Data (Big Data)}.\hskip 1em plus 0.5em minus 0.4em\relax IEEE, 2018, pp. 298--304.

\bibitem{dwork2006calibrating}
C.~Dwork, F.~McSherry, K.~Nissim, and A.~Smith, ``Calibrating noise to sensitivity in private data analysis,'' in \emph{Theory of cryptography conference}.\hskip 1em plus 0.5em minus 0.4em\relax Springer, 2006, pp. 265--284.

\bibitem{dwork2014algorithmic}
C.~Dwork, A.~Roth \emph{et~al.}, ``The algorithmic foundations of differential privacy,'' \emph{Foundations and Trends{\textregistered} in Theoretical Computer Science}, vol.~9, no. 3--4, pp. 211--407, 2014.

\bibitem{dwork2011firm}
C.~Dwork, ``A firm foundation for private data analysis,'' \emph{Communications of the ACM}, vol.~54, no.~1, pp. 86--95, 2011.

\bibitem{halevi2014algorithms}
S.~Halevi and V.~Shoup, ``Algorithms in helib,'' in \emph{Annual Cryptology Conference}.\hskip 1em plus 0.5em minus 0.4em\relax Springer, 2014, pp. 554--571.

\bibitem{smart2014fully}
N.~P. Smart and F.~Vercauteren, ``Fully homomorphic simd operations,'' \emph{Designs, codes and cryptography}, vol.~71, no.~1, pp. 57--81, 2014.

\bibitem{rathee2018faster}
D.~Rathee, P.~K. Mishra, and M.~Yasuda, ``Faster pca and linear regression through hypercubes in helib,'' in \emph{Proceedings of the 2018 Workshop on Privacy in the Electronic Society}, 2018, pp. 42--53.

\bibitem{yao1982protocols}
A.~C. Yao, ``Protocols for secure computations,'' in \emph{23rd annual symposium on foundations of computer science (sfcs 1982)}.\hskip 1em plus 0.5em minus 0.4em\relax IEEE, 1982, pp. 160--164.

\bibitem{lu2016using}
W.-j. Lu, S.~Kawasaki, and J.~Sakuma, ``Using fully homomorphic encryption for statistical analysis of categorical, ordinal and numerical data,'' \emph{Cryptology ePrint Archive}, 2016.

\bibitem{yasuda2015new}
M.~Yasuda, T.~Shimoyama, J.~Kogure, K.~Yokoyama, and T.~Koshiba, ``New packing method in somewhat homomorphic encryption and its applications,'' \emph{Security and Communication Networks}, vol.~8, no.~13, pp. 2194--2213, 2015.

\bibitem{naehrig2011can}
M.~Naehrig, K.~Lauter, and V.~Vaikuntanathan, ``Can homomorphic encryption be practical?'' in \emph{Proceedings of the 3rd ACM workshop on Cloud computing security workshop}, 2011, pp. 113--124.

\bibitem{rivest1978data}
R.~L. Rivest, L.~Adleman, M.~L. Dertouzos \emph{et~al.}, ``On data banks and privacy homomorphisms,'' \emph{Foundations of secure computation}, vol.~4, no.~11, pp. 169--180, 1978.

\bibitem{gentry2009fully}
C.~Gentry, ``Fully homomorphic encryption using ideal lattices,'' in \emph{Proceedings of the forty-first annual ACM symposium on Theory of computing}, 2009, pp. 169--178.

\bibitem{fan2012BFV}
J.~Fan and F.~Vercauteren, ``Somewhat practical fully homomorphic encryption,'' \emph{Cryptology ePrint Archive}, 2012.

\bibitem{Bfv12}
Z.~Brakerski, ``Fully homomorphic encryption without modulus switching from classical gapsvp,'' in \emph{Annual Cryptology Conference}.\hskip 1em plus 0.5em minus 0.4em\relax Springer, 2012, pp. 868--886.

\bibitem{brakerski2014BGV}
Z.~Brakerski, C.~Gentry, and V.~Vaikuntanathan, ``(leveled) fully homomorphic encryption without bootstrapping,'' \emph{ACM Transactions on Computation Theory (TOCT)}, vol.~6, no.~3, pp. 1--36, 2014.

\bibitem{cheon2017CKKS}
J.~H. Cheon, A.~Kim, M.~Kim, and Y.~Song, ``Homomorphic encryption for arithmetic of approximate numbers,'' in \emph{International conference on the theory and application of cryptology and information security}.\hskip 1em plus 0.5em minus 0.4em\relax Springer, 2017, pp. 409--437.

\bibitem{albrecht2021homomorphic}
M.~Albrecht, M.~Chase, H.~Chen, J.~Ding, S.~Goldwasser, S.~Gorbunov, S.~Halevi, J.~Hoffstein, K.~Laine, K.~Lauter \emph{et~al.}, ``Homomorphic encryption standard,'' in \emph{Protecting Privacy through Homomorphic Encryption}.\hskip 1em plus 0.5em minus 0.4em\relax Springer, 2021, pp. 31--62.

\bibitem{fheschemesbfv}
Inferati, ``Introduction to the bfv encryption scheme,'' \url{https://inferati.com/blog/fhe-schemes-bfv}, accessed Oct 4, 2022.

\bibitem{enwiki:1112117357}
\BIBentryALTinterwordspacing
{Wikipedia contributors}, ``Single instruction, multiple data --- {Wikipedia}{,} the free encyclopedia,'' 2022, [Online; accessed 4-October-2022]. [Online]. Available: \url{https://en.wikipedia.org/w/index.php?title=Single_instruction,_multiple_data&oldid=1112117357}
\BIBentrySTDinterwordspacing

\bibitem{gentry2011implementing10}
C.~Gentry and S.~Halevi, ``Implementing gentry’s fully-homomorphic encryption scheme,'' in \emph{Annual international conference on the theory and applications of cryptographic techniques}.\hskip 1em plus 0.5em minus 0.4em\relax Springer, 2011, pp. 129--148.

\bibitem{smart2010fully16}
N.~P. Smart and F.~Vercauteren, ``Fully homomorphic encryption with relatively small key and ciphertext sizes,'' in \emph{International Workshop on Public Key Cryptography}.\hskip 1em plus 0.5em minus 0.4em\relax Springer, 2010, pp. 420--443.

\bibitem{halevi2021bootstrapping}
S.~Halevi and V.~Shoup, ``Bootstrapping for helib,'' \emph{Journal of Cryptology}, vol.~34, no.~1, pp. 1--44, 2021.

\end{thebibliography}

\newpage 
\appendices

\section{The proof for the theorem~\ref{thrm:sigma-diag}$\sim$theorem~\ref{thrm:omg-diag}}
\label{apx:proof}

\begin{theorem*}~{\rm \textbf{\ref{thrm:sigma-diag}}}
Let $\sigma(\mathcal{A}) = \textbf{U}^\sigma \mathcal{A}$ for $\mathcal{A}$ with a dimension of $m\times l$. There are at most $2\cdot \min(m,l)-1$ non-zero diagonals in $\textbf{U}^\sigma$ no matter if the matrix is flattened with a column-major or row-major order. 
\end{theorem*}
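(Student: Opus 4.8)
The plan is to show that every non-zero entry of $\textbf{U}^\sigma$ lies on one of a small, explicitly describable family of diagonals, and then simply to count that family. I would first treat the column-major case. Since $\sigma$ merely permutes the entries of $\mathcal{A}$, the matrix $\textbf{U}^\sigma$ is a $0/1$ matrix with exactly one $1$ per row, so each output index contributes exactly one non-zero diagonal and the number of non-zero diagonals equals the number of distinct diagonal offsets $z = h - r$ over all non-zero entries. Using Equation~(\ref{eqn:Usigma}) with output index $r = i + j\cdot m$ and the unique feeding input index $h = i + [i+j]_l\cdot m$, the offset factors as $z = ([i+j]_l - j)\cdot m$. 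Thus every diagonal is governed by the single integer $\delta := [i+j]_l - j$, and two non-zero entries share a diagonal precisely when they share the same $\delta$.

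The heart of the argument is to pin down the set of values $\delta$ realises as $i$ runs over $[0,m)$ and $j$ over $[0,l)$. Writing $q = \lfloor (i+j)/l\rfloor$ so that $[i+j]_l = i + j - q l$, one gets $\delta = i - q l$. For fixed $i$, as $j$ ranges over the block of $l$ consecutive integers $\{i,\dots,i+l-1\}$, the quotient $q$ takes the single value $\lfloor i/l\rfloor$ when $l \mid i$ and exactly the two values $\lfloor i/l\rfloor,\ \lfloor i/l\rfloor + 1$ otherwise. Hence $\delta$ depends only on the residue $[i]_l$: it equals $0$ when $[i]_l = 0$, and equals one of the two values $\{[i]_l,\ [i]_l - l\}$ when $[i]_l \neq 0$.

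I would then count distinct values of $\delta$. If $m \le l$ the residues realised are exactly $\{0,1,\dots,m-1\}$, contributing the single value $0$ together with the pairs $\{r,\ r-l\}$ for $r = 1,\dots,m-1$; these are pairwise distinct because the non-negative ones lie in $[0,m-1]$ while the negative ones lie in $[1-l,-1]$, giving $1 + 2(m-1) = 2m-1$. If $m > l$ the residues cover all of $\{0,\dots,l-1\}$ and the same bookkeeping yields $1 + 2(l-1) = 2l-1$. Both regimes collapse to $2\min(m,l)-1$, the claimed bound. Finally I would dispatch the row-major case: repeating the offset computation with the row-major flattening $r = i\cdot l + j$, $h = i\cdot l + [i+j]_l$ gives $z = [i+j]_l - j = \delta$ directly, the same controlling quantity merely without the factor $m$; since the count of distinct $\delta$ is independent of the flattening order, the bound transfers verbatim.

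The step that demands the most care is the middle one: justifying rigorously that $q$ assumes at most two values on each length-$l$ block and that $\delta$ therefore collapses to at most two residue-dependent values, and then verifying that the two regimes $m \le l$ and $m > l$ merge cleanly into $2\min(m,l)-1$ without double-counting the boundary residue $0$ (which, unlike the others, yields only a single $\delta$). Everything else is routine index arithmetic once this reduction to the invariant $\delta$ is in place.
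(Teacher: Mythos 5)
Your proof is correct and takes essentially the same route as the paper's: both reduce the problem to counting the distinct values of the diagonal offset, which for non-zero entries equals $\bigl(i - \left\lfloor (i+j)/l \right\rfloor \cdot l\bigr)\cdot m$ in the column-major case, observe that the quotient $\left\lfloor (i+j)/l \right\rfloor$ takes at most two values for each fixed $i$ (only one when $[i]_l = 0$), and conclude $2\min(m,l)-1$ by treating $m\le l$ and $m>l$ separately. A minor point in your favor: you carry out the row-major case explicitly (showing the offset is the same invariant $\delta$ without the factor $m$), whereas the paper omits it as ``similar.''
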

\begin{proof}
When applying $\sigma$ transformation on matrix $\mathcal{A}_{m\times l}$ in \textbf{column-major} order, $\textbf{U}^\sigma$ is formulated in Equation~(\ref{eqn:Usigma}). Note that  $\textbf{U}^\sigma_{i+j\cdot m,h} = 1$ when $h= i+ [i+j]_l\cdot m$ and, for all elements of $\textbf{U}^\sigma_{i+j\cdot m,h}$ that belong to the same diagonal, we have $h-(i+j\cdot m)$ as a constant. 

Considering all the non-zero elements in $\textbf{U}^\sigma_{i+j\cdot m,h}$, we have
\begin{eqnarray}
h-(i+j\cdot m) &=& i+ [i+j]_l\cdot m - (i+j\cdot m) \nonumber \\ 
&=& i+(i+j-\left \lfloor \frac{i+j}{l} \right \rfloor \cdot l) \cdot m - (i+j\cdot m) \nonumber \\ 
&=&(i-\left \lfloor \frac{i+j}{l} \right \rfloor \cdot l)\cdot m. \nonumber    
\end{eqnarray}
Since $\left \lfloor \frac{i}{l} \right \rfloor + \left \lfloor \frac{j}{l} \right \rfloor 
\le 
\left \lfloor \frac{i+j}{l} \right \rfloor 
\le 
\left \lfloor \frac{i}{l} \right \rfloor + \left \lfloor \frac{j}{l} \right \rfloor +1$ and $0\le j < l$,  we have $\left \lfloor \frac{i}{l} \right \rfloor 
\le 
\left \lfloor \frac{i+j}{l} \right \rfloor 
\le 
\left \lfloor \frac{i}{l} \right \rfloor+1$.

Now consider two different scenarios: 1) $m<l$; 2) $m \geq l$. When  $m<l$, for each $i=\{1,2,...,m-1\}$, $h-(i+j\cdot m)$ can at most take two constant values since $\left \lfloor \frac{i}{l} \right \rfloor =0$ and $0 
\le 
\left \lfloor \frac{i+j}{l} \right \rfloor 
\le 
1$. When $i=0$, $h-(i+j\cdot m)$ can only be zero since $\left \lfloor \frac{i+j}{l} \right \rfloor =0$. Therefore, $\textbf{U}^\sigma_{i+j\cdot m,h}$ has at most $2m-1$ non-zero diagonals under this case. 

When $m \geq l$, we have 
\begin{eqnarray}
h-(i+j\cdot m) &=&(i-\left \lfloor \frac{i+j}{l} \right \rfloor \cdot l)\cdot m \nonumber \\
&=& (\left \lfloor \frac{i}{l} \right \rfloor \cdot l + p -\left \lfloor \frac{i+j}{l} \right \rfloor \cdot l)\cdot m, \nonumber
\end{eqnarray}
with $0 \leq p < l$.  Since $-1 \leq (\left \lfloor \frac{i}{l} \right \rfloor - \left \lfloor \frac{i+j}{l} \right \rfloor) 
\le 
0$, $\textbf{U}^\sigma_{i+j\cdot m,h}$ has at most $2l-1$ non-zero diagonals under this case. 

Therefore, in summary, there are at most $2\cdot \min(m,l)-1$ non-zero diagonals in $\textbf{U}^\sigma$ when the matrix is flattened with a column-major. Similar proof can be obtained when the matrix is flattened with the row-major order. 
\end{proof}

\begin{theorem*}~{\rm \textbf{\ref{thrm:tau-diag}}}
Let $\tau(\mathcal{B}) = \textbf{U}^\tau \mathcal{B}$ for $\mathcal{B}$ with a dimension of $l\times n$. There are at most $2\cdot \min(n,l)-1$ non-zero diagonals in $\textbf{U}^\tau$ no matter if the matrix is flattened with a column-major or row-major order. 
\end{theorem*}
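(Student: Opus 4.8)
The cleanest route I would take exploits a duality between $\tau$ and the already-analyzed $\sigma$. Writing $\mathcal{B}^{T}$ for the $n\times l$ transpose of $\mathcal{B}$, a direct check of the definitions gives $\sigma(\mathcal{B}^{T})_{j,i}=\mathcal{B}^{T}_{j,[i+j]_l}=\mathcal{B}_{[i+j]_l,j}=\tau(\mathcal{B})_{i,j}$, so $\tau(\mathcal{B})=(\sigma(\mathcal{B}^{T}))^{T}$. Moreover, flattening $\mathcal{B}$ in row-major order places $\mathcal{B}_{i,j}$ at linear index $i\cdot n+j$, which is exactly where column-major flattening of $\mathcal{B}^{T}$ places the same entry (and symmetrically for the other order). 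Hence the transformation matrix $\textbf{U}^\tau$ for $\mathcal{B}$ is literally the matrix $\textbf{U}^\sigma$ for the $n\times l$ matrix $\mathcal{B}^{T}$, only with the two flattening orders interchanged. Since Theorem~\ref{thrm:sigma-diag} bounds the number of non-zero diagonals of $\textbf{U}^\sigma$ for an $n\times l$ matrix by $2\cdot\min(n,l)-1$ regardless of flattening order, the same bound transfers immediately to $\textbf{U}^\tau$ in both orders.

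If one instead prefers a self-contained computation that mirrors the proof of Theorem~\ref{thrm:sigma-diag}, I would read the diagonal offset directly off Equation~(\ref{eqn:Utau}). A nonzero entry $\textbf{U}^\tau_{i+j\cdot l,h}=1$ occurs exactly when $h=[i+j]_l+j\cdot l$, so along any fixed diagonal the quantity
\begin{equation}
\nonumber
h-(i+j\cdot l)=[i+j]_l-i=j-\left\lfloor\frac{i+j}{l}\right\rfloor\cdot l
\end{equation}
is constant. Using $0\le i<l$, the floor satisfies $\left\lfloor\frac{j}{l}\right\rfloor\le\left\lfloor\frac{i+j}{l}\right\rfloor\le\left\lfloor\frac{j}{l}\right\rfloor+1$, so for each fixed $j$ the offset is one of the two values $[j]_l$ or $[j]_l-l$. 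I would then split into the cases $n<l$ and $n\ge l$, exactly as in Theorem~\ref{thrm:sigma-diag}: when $n<l$ the nonnegative offsets fill $\{0,\dots,n-1\}$ and the negative ones fill $\{1-l,\dots,(n-1)-l\}$, two disjoint sets of sizes $n$ and $n-1$; when $n\ge l$ the offsets lie in $\{1-l,\dots,l-1\}$ with the single value $0$ reached at $[j]_l=0$. Either way the count is $2\cdot\min(n,l)-1$, and the row-major case follows by the same bookkeeping.

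The main obstacle is the same delicate endpoint count that appears in Theorem~\ref{thrm:sigma-diag}: a naive estimate would allow $2n$ (resp. $2l$) offsets, and one must observe that at the boundary index $j=0$ (resp. whenever $[j]_l=0$) the floor cannot increase, collapsing the two candidate offsets into the single value $0$ and lowering the bound to $2n-1$ (resp. $2l-1$). For the duality argument, the corresponding subtlety is verifying that the transpose-plus-flattening correspondence maps diagonals to diagonals bijectively, so that Theorem~\ref{thrm:sigma-diag} may be invoked verbatim without recomputation.
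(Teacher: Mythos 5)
Your proposal is correct, and it in fact contains two complete arguments. Your second, self-contained route is essentially the paper's own proof: the paper likewise reads the diagonal offset $h-(i+j\cdot l)=j-\left\lfloor (i+j)/l\right\rfloor\cdot l$ off Equation~(\ref{eqn:Utau}), traps the floor between $\left\lfloor j/l\right\rfloor$ and $\left\lfloor j/l\right\rfloor+1$ using $0\le i<l$, and splits into the cases $n<l$ and $n\ge l$; if anything, your handling of the endpoint (the offset $-l$ being unreachable because the floor cannot increase when $[j]_l=0$) is spelled out more explicitly than in the paper, which treats the $n\ge l$ case tersely. Your first route --- the identity $\tau(\mathcal{B})=(\sigma(\mathcal{B}^{T}))^{T}$ combined with the observation that row-major flattening of $\mathcal{B}$ coincides with column-major flattening of the $n\times l$ matrix $\mathcal{B}^{T}$, so that $\textbf{U}^{\tau}$ in one order is literally $\textbf{U}^{\sigma}$ of $\mathcal{B}^{T}$ in the other order --- is genuinely different from the paper, which proves Theorems~\ref{thrm:sigma-diag} and~\ref{thrm:tau-diag} by two parallel, independent computations. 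The duality is cleaner: it eliminates the duplicated computation, makes the $\min(n,l)$ symmetry and the order-independence of the count structurally transparent, and since the two transformation matrices coincide as matrices (the same linear map on the same flattened coordinates), rather than being merely conjugate, there is nothing left to verify about diagonals mapping to diagonals --- your stated worry on that point is moot. Its one real dependency is that it invokes Theorem~\ref{thrm:sigma-diag} in \emph{both} flattening orders, while the paper proves only the column-major case of that theorem in detail and dismisses the row-major case as similar; so, used as a standalone replacement, your duality argument inherits exactly that gap, whereas your direct computation (like the paper's) establishes the $\tau$ bound independently.
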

\begin{proof}
When applying $\tau$ transformation on matrix $\mathcal{B}_{l\times n}$ in \textbf{column-major} order, $\textbf{U}^\tau$ is formulated in Equation~(\ref{eqn:Utau}). Note that  $\textbf{U}^\tau_{i+j\cdot l,h} = 1$ when $h= [i+j]_l + j\cdot l$ and, for all elements of $\textbf{U}^\tau_{i+j\cdot l,h}$ that belong to the same diagonal, we have $h-(i+j\cdot l)$ as a constant. 

Considering all the non-zero elements in $\textbf{U}^\tau_{i+j\cdot l,h}$, we have
\begin{eqnarray}
h-(i+j\cdot l) &=& [i+j]_l + j\cdot l - (i+j\cdot m) \nonumber \\ 
&=& i+j-\left \lfloor \frac{i+j}{l} \right \rfloor \cdot l+j\cdot l - (i+j\cdot m) \nonumber \\ 
&=&j-\left \lfloor \frac{i+j}{l} \right \rfloor\cdot l. \nonumber    
\end{eqnarray}
Since $\left \lfloor \frac{i}{l} \right \rfloor + \left \lfloor \frac{j}{l} \right \rfloor 
\le 
\left \lfloor \frac{i+j}{l} \right \rfloor
\le 
\left \lfloor \frac{i}{l} \right \rfloor + \left \lfloor \frac{j}{l} \right \rfloor +1$ and $0\le i < l$,  we have $\left \lfloor \frac{j}{l} \right \rfloor 
\le 
\left \lfloor \frac{i+j}{l} \right \rfloor 
\le 
\left \lfloor \frac{j}{l} \right \rfloor+1$.

Now consider two different scenarios: 1) $n<l$; 2) $n \geq l$. When  $n<l$, for each $j=\{1,2,...,n-1\}$, $h-(i+j\cdot l)$ can at most take two constant values since $\left \lfloor \frac{j}{l} \right \rfloor =0$ and $0 
\le 
\left \lfloor \frac{i+j}{l} \right \rfloor 
\le 
1$. When $i=0$, $h-(i+j\cdot l)$ can only be zero since $\left \lfloor \frac{i+j}{l} \right \rfloor =0$. Therefore, $\textbf{U}^\tau_{i+j\cdot l,h}$ has at most $2n-1$ non-zero diagonals under this case. 

When $n \geq l$, we have 
\begin{eqnarray}
h-(i+j\cdot l) &=&j-\left \lfloor \frac{i+j}{l} \right \rfloor\cdot l \nonumber \\
&=& \left \lfloor \frac{j}{l} \right \rfloor \cdot l + p -\left \lfloor \frac{i+j}{l} \right \rfloor \cdot l, \nonumber
\end{eqnarray}
with $0 \leq p < l$.  Since $-1 \leq (\left \lfloor \frac{j}{l} \right \rfloor - \left \lfloor \frac{i+j}{l} \right \rfloor) 
\le 
0$, $\textbf{U}^\tau_{i+j\cdot l,h}$ has at most $2l-1$ non-zero diagonals under this case. 

Therefore, in summary, there are at most $2\cdot \min(n,l)-1$ non-zero diagonals in $\textbf{U}^\tau$ when the matrix is flattened with a column-major. Similar proof can be obtained when the matrix is flattened with the row-major order. 
\end{proof}

\begin{theorem*}~{\rm \textbf{\ref{thrm:eps-diag}}}
Let $\epsilon^k_{m\times n}(\mathcal{A}) =\textbf{U}^{\epsilon^k_{m\times n}} \mathcal{A}$ be the linear transformation $\epsilon_{m\times n}: \mathcal{R}_{m\times l} \rightarrow \mathcal{R}_{m\times n}$ with matrix $\mathcal{A}$ having a dimension of $m\times l$. 
There are at most $\left \lfloor \frac{n}{l} \right \rfloor +1$ non-zero diagonal vectors in $\textbf{U}^{\epsilon^k_{m\times n}}$ when the matrix is flattened with the \textbf{column-major} order; 
There are at most $(\left \lfloor \frac{n}{l} \right \rfloor +2)\cdot m$ non-zero diagonal vectors in $\textbf{U}^{\epsilon^k_{m\times n}}$ when matrix $\mathcal{A}$ is flattened with the \textbf{row-major} order. 
Specifically, when $n=l$, there are no more than 2 non-zero diagonals in $\textbf{U}^{\epsilon^k_{m\times n}}$, no matter if the matrix is flattened in column-major or row-major order.
\end{theorem*}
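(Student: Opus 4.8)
The plan is to reason from the explicit entries of $\textbf{U}^{\epsilon^k_{m\times n}}$ in Equation~(\ref{eqn:Uepsc}), but to count its diagonals \emph{cyclically} with respect to the ambient slot length $m\cdot\max(l,n)$ into which the operand is actually packed, rather than treating $\textbf{U}^{\epsilon^k_{m\times n}}$ as a rigid $mn\times ml$ array. The structural fact I would lean on is that $\epsilon^k$ reads the columns of $\mathcal{A}$ \emph{modulo} $l$: after $\sigma(\mathcal{A})$ is tiled to the working length, each stored value recurs with period $ml$, so every output slot admits several source slots differing by multiples of $ml$. This redundancy is exactly what lets several naive diagonals be realized by one rotation in Equation~(\ref{eqn:lintrans}), and it is why the cyclic count $\lfloor n/l\rfloor+1$ is smaller than the literal diagonal count of the $mn\times ml$ matrix (which can reach $\lfloor n/l\rfloor+2$).

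For the \textbf{column-major} layout I would write the output index as $i=i'+j'm$ with $0\le i'<m$ and $0\le j'<n$, and check that a rotation by $r$ brings the intended entry $\mathcal{A}_{i',[j'+k]_l}$ into slot $i$ precisely when $r\equiv km\pmod{ml}$ and the source slot $(i+r)\bmod m\max(l,n)$ falls on a tiled copy carrying that entry. The admissible rotations therefore lie in the single coset $\{\,km+s\cdot ml : s\in\mathbb{Z}\,\}$. Partitioning the $n$ output columns into the $\lceil n/l\rceil$ blocks of at most $l$ consecutive columns, I would show that one representative of this coset serves an entire block, because each block occupies $l$ consecutive output columns, matching one full period of the tiling; hence at most $\lceil n/l\rceil\le\lfloor n/l\rfloor+1$ distinct rotations, i.e.\ non-zero diagonals, occur. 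When $n=l$ the map degenerates to a square cyclic column shift whose only offsets are $km$ and $(k-l)m$, giving the ``at most $2$'' statement.

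For the \textbf{row-major} layout I would repeat the computation with $\mathcal{A}_{i',j'}$ sitting at slot $i'l+j'$ in the source and $i'n+j'$ in the target. Now the realignment shift for column $j'$ depends on the row through the mismatch between the source row width $l$ and the target row width $n$, so the periods no longer agree across rows and the block argument must be carried out independently inside each of the $m$ rows. This multiplies the count by $m$ and adds one extra boundary diagonal per row, which produces the claimed $(\lfloor n/l\rfloor+2)\cdot m$; for $n=l$ the per-row shifts again coincide and only two diagonals survive.

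The step I expect to be hardest is the cyclic bookkeeping at the tiling boundary, namely proving that the ``$+1$'' economy is genuine: when $l\nmid n$ the short final block must be served by a rotation whose offset already coincides, modulo $m\max(l,n)$, with that of a neighboring block, instead of contributing a fresh diagonal. Establishing this rigorously means tracking how $(i+r)\bmod m\max(l,n)$ reduces modulo $ml$ as $i$ sweeps each block and verifying that a uniform coset representative can be chosen there; the interplay of the two moduli through $\gcd(l,n)$ is the most delicate point and where I would concentrate the argument.
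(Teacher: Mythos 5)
Your opening observation is sharp and correct: counted as literal diagonals of the $mn\times ml$ array (which is exactly how the paper defines the vectors $\textit{\textbf{u}}_z$ that feed Equation~(\ref{eqn:lintrans})), $\textbf{U}^{\epsilon^k_{m\times n}}$ can indeed have $\lfloor n/l\rfloor+2$ non-zero diagonals. Concretely, for $m=1$, $l=3$, $n=5$, $k=2$, Equation~(\ref{eqn:Uepsc}) puts the non-zero entries at $j=[i+2]_3$, giving offsets $j-i\in\{2,-1,-4\}$, i.e.\ three diagonals, while $\lfloor n/l\rfloor+1=2$; in general the offsets are $km-s\cdot ml$, $s=0,1,\dots$, and the worst case $\lfloor n/l\rfloor+2$ occurs whenever $[n]_l\ge 2$ and $k$ is large. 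This actually exposes a flaw in the paper's own argument, whose key inequality $\frac{l-1+n}{l}\le\lfloor\frac{l-1}{l}\rfloor+\lfloor\frac{n}{l}\rfloor+1$ fails in precisely that regime. Note also that the paper's route is entirely different from yours: it performs a rigid count, writing $j-i=km-\lfloor\frac{km+i}{ml}\rfloor\cdot ml$ and bounding the number of values of the floor term, with no cyclic identification of any kind.

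The genuine gap is that your proposed repair does not work. Since all offsets lie in the single residue class $km+ml\,\mathbb{Z}$, the only modulus that merges them is $ml$ itself; reducing modulo your ambient length $m\max(l,n)$ generally merges nothing. In the example above, $\{2,-1,-4\}$ taken modulo $m\max(l,n)=5$ gives $\{2,4,1\}$ --- still three classes --- so the cyclic count you propose also exceeds $\lfloor n/l\rfloor+1$, and the ``$+1$ economy'' you plan to establish at the tiling boundary is simply false: no choice of coset representatives, and no amount of $\gcd(l,n)$ bookkeeping, can cover three distinct residues with two. Conversely, if you really pack the operand with period-$ml$ tiling (the only packing under which rotations by $km$ and $km-ml$ agree on the data), then \emph{all} the offsets merge and a single rotation suffices, so your cyclic count collapses to $1$ for every $k$, not $\lfloor n/l\rfloor+1$. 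Either way you are no longer bounding the quantity in the theorem: Equation~(\ref{eqn:Uepsc}) acts on the un-tiled flattened $m\times l$ operand, and HE-Rot wraps around the full slot count $N$, not around $m\max(l,n)$, so rotations differing by $ml$ are genuinely distinct operations in the setting the theorem addresses. In short, your diagnosis of the off-by-one is right, but the cyclic-counting framework conflates three inequivalent packings (rigid, truncated tiling, full tiling) whose correct counts are respectively $\lfloor n/l\rfloor+2$, $2$, and $1$ --- none of which is the $\lfloor n/l\rfloor+1$ you set out to prove.
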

\begin{proof}
When applying $\epsilon$ transformation on matrix $\mathcal{A}_{m\times l}$ in \textbf{column-major} order, $\textbf{U}^\epsilon$ is formulated in Equation~(\ref{eqn:Uepsc}). Note that  $\textbf{U}^{\epsilon^k_{m\times n}}_{i,j} = 1$ when $j=[k\cdot m+i]_{m\cdot l}$ and, for all elements of $\textbf{U}^{\epsilon^k_{m\times n}}_{i,j}$ that belong to the same diagonal, we have $j-i$ as a constant. 

Considering all the non-zero elements in $\textbf{U}^{\epsilon^k_{m\times n}}_{i,j}$, we have
\begin{eqnarray}
j-i &=& [k\cdot m+i]_{m\cdot l} - i \nonumber \\ 
&=& k\cdot m+i-\left \lfloor \frac{k\cdot m+i}{m\cdot l}\right \rfloor \cdot m\cdot l - i \nonumber \\ 
&=&k\cdot m-\left \lfloor \frac{k\cdot m+i}{m\cdot l}\right \rfloor \cdot m\cdot l \nonumber    
\end{eqnarray}

Since $\max(k)=l-1$ and $\max(i)=m\cdot n-1$, we have 
\begin{eqnarray}
\max(\frac{k\cdot m+i}{m\cdot l})
&<&\frac{l-1+n}{l} \nonumber \\ 
&\le& \left \lfloor \frac{l-1}{l} \right \rfloor + \left \lfloor \frac{n}{l} \right \rfloor +1 \nonumber \\ 
&=&\left \lfloor \frac{n}{l} \right \rfloor +1 \nonumber 
\end{eqnarray}

Therefore, we get $\left \lfloor \frac{k\cdot m+i}{m\cdot l}\right \rfloor \in \{0,1,...,\left \lfloor \frac{n}{l} \right \rfloor\}$. Then, $j-i=k\cdot m-\left \lfloor \frac{k\cdot m+i}{m\cdot l}\right \rfloor\cdot m\cdot l$. $k$, $m$ and $l$ are all constant number for one transformation. 
The set $\{0,1,...,\left \lfloor \frac{n}{l} \right \rfloor\}$ is of size $\left \lfloor \frac{n}{l} \right \rfloor +1$. In summary, $\textbf{U}^{\epsilon^k_{m\times n}}$ has at most $\left \lfloor \frac{n}{l} \right \rfloor +1$ constant values when $\mathcal{A}_{m\times l}$ in \textbf{column-major}.

Special circumstances is when $n=l$, $\left \lfloor \frac{n}{l} \right \rfloor=1$. Therefore, $\left \lfloor \frac{n}{l} \right \rfloor+1=2$ and this means $\textbf{U}^{\epsilon^k_{m\times n}}$ has only \textbf{2} non-zero diagonals when $n=l$..

When applying $\epsilon$ transformation on matrix $\mathcal{A}_{m\times l}$ in \textbf{row-major} order, we can formulate permutation matrix according to formula~(\ref{eqn:Uomgc}), but apply on $\mathcal{A}_{l\times m}$ instead of $\mathcal{A}_{l\times n}$. Note that  $\textbf{U}^{\epsilon^k_{m\times n}}_{i,j} = 1$ when $j=[k+[i]_n]_{l}+\left \lfloor i/n \right \rfloor \cdot l$ and, for all elements of $\textbf{U}^{\epsilon^k_{m\times n}}_{i,j}$ that belong to the same diagonal, we have $j-i$ as a constant. 

Considering all the non-zero elements in $\textbf{U}^{\epsilon^k_{m\times n}}_{i,j}$, we have
\begin{eqnarray}
j&=& k+[i]_n-\left \lfloor \frac{k+[i]_n}{l}\right \rfloor \cdot l+\left \lfloor \frac{i}{n} \right \rfloor\cdot l\nonumber \\ 
&=&k+[i]_n+(\left \lfloor \frac{i}{n} \right \rfloor - \left \lfloor \frac{k+[i]_n}{l}\right \rfloor)\cdot l \nonumber 
\end{eqnarray}

Since $i\in[0,mn)$, we split $i$ to $m$ circumstances that $i\in[pn,(p+1)n)$ where $p =\{0,1,2,...,m-1\}$. For for each circumstance that $i\in[pn,(p+1)n)$, we have
\begin{equation}
j= k+i-pn+(p - \left \lfloor \frac{k+[i]_n}{l}\right \rfloor)\cdot l \nonumber 
\end{equation}
and
\begin{equation}
j-i= k-pn+(p - \left \lfloor \frac{k+[i]_n}{l}\right \rfloor)\cdot l \nonumber 
\end{equation}
Note that we have
\begin{equation}
    \left \lfloor \frac{[pn]_n}{l}\right \rfloor
\le
\left \lfloor \frac{k+[i]_n}{l}\right \rfloor
<
\left \lfloor \frac{[pn]_n}{l}\right \rfloor +
\left \lfloor \frac{n}{l}\right \rfloor +1+1
\nonumber
\end{equation}
which has $2 + \left \lfloor \frac{n}{l} \right \rfloor$ constant values. And this means $j-i$ , which represents the number of non-zero diagonals in $\textbf{U}^{\epsilon^k_{m\times n}}$, has $(2 + \left \lfloor \frac{n}{l} \right \rfloor)\cdot m$ in total when $\mathcal{A}_{m\times l}$ in \textbf{row-major} because there are $m$ circumstances.

Special circumstances is when $n=l$, $j-i \in \{0,1\}$. The reason is that, since
\begin{equation}
    \left \lfloor \frac{k}{l}\right \rfloor + \left \lfloor \frac{[i]_n}{l}\right \rfloor \le \left \lfloor \frac{k+[i]_n}{l}\right \rfloor \le \left \lfloor \frac{k}{l}\right \rfloor + \left \lfloor \frac{[i]_n}{l}\right \rfloor +1 \nonumber
\end{equation}
and we also have $k<l$ and $ {[i]_n}<{l}$, thus
\begin{equation}
    0 \le \left \lfloor \frac{k+[i]_n}{l}\right \rfloor \le 1 \nonumber
\end{equation}
On the other hand, we have
\begin{equation}
j-i= k - \left \lfloor \frac{k+[i]_n}{l}\right \rfloor\cdot l \nonumber 
\end{equation}
for each $i\in[pn,(p+1)n)$. $j-i$ has the same constant value in each $i\in[pn,(p+1)n)$ and this means $\textbf{U}^{\epsilon^k_{m\times n}}$ has only \textbf{2} non-zero diagonals when $n=l$.
\end{proof}

\begin{theorem*}~{\rm \textbf{\ref{thrm:omg-diag}}}
Let $\omega^k_{m\times n}(\mathcal{B}) = \textbf{U}^{\omega^k_{m\times n}} \mathcal{B}$ be the linear transformation $\omega_{m\times n}: \mathcal{R}_{l\times n} \rightarrow \mathcal{R}_{m\times n}$ with matrix $\mathcal{B}$ having a dimension of $l\times n$. 
There are at most $(\left \lfloor \frac{m}{l} \right \rfloor+2)\cdot n$ non-zero diagonal vectors in $\textbf{U}^{\omega^k_{m\times n}}$ when the matrix is flattened with \textbf{column-major} order; There are
at most $\left \lfloor \frac{m}{l} \right \rfloor +1$ non-zero diagonal vectors in $\textbf{U}^{\omega^k_{m\times n}}$ when matrix $\mathcal{B}$ is flattened with \textbf{row-major} order. 
Specifically, when $m=l$, there are no more than 2 non-zero diagonals in $\textbf{U}^{\omega^k_{m\times n}}$, no matter if the matrix is flattened in column-major or row-major order.
 \end{theorem*}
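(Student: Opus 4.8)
The plan is to avoid a fresh index computation and instead derive Theorem~\ref{thrm:omg-diag} from the already-established Theorem~\ref{thrm:eps-diag} by exploiting a transpose duality between $\omega$ and $\epsilon$. The starting point is the identity
\[ \omega^k_{m\times n}(\mathcal{B}) = \bigl(\epsilon^k_{n\times m}(\mathcal{B}^T)\bigr)^T, \]
which follows immediately from the definitions: evaluating the right-hand side at entry $(i,j)$ gives $\epsilon^k_{n\times m}(\mathcal{B}^T)_{j,i} = (\mathcal{B}^T)_{j,[i+k]_l} = \mathcal{B}_{[i+k]_l,j} = \omega^k_{m\times n}(\mathcal{B})_{i,j}$, where the inner modulus is $l$ because $\mathcal{B}^T$ has $l$ columns. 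Thus shifting rows of $\mathcal{B}$ is, after transposition, exactly shifting columns of $\mathcal{B}^T$.

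The second ingredient is that transposition interchanges the two flattening orders: for any matrix $M$, the column-major flattening of $M^T$ equals, index-for-index, the row-major flattening of $M$, and vice versa. Combining this with the identity above, the transformation matrix $\textbf{U}^{\omega^k_{m\times n}}$ realized in \textbf{row-major} order coincides with the transformation matrix $\textbf{U}^{\epsilon^k_{n\times m}}$ realized in \textbf{column-major} order (acting on $\mathcal{B}^T$, which has shape $n\times l$); symmetrically, $\textbf{U}^{\omega^k_{m\times n}}$ in column-major order coincides with $\textbf{U}^{\epsilon^k_{n\times m}}$ in row-major order. Since the number of non-zero diagonals is intrinsic to the underlying matrix, the corresponding counts are identical.

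It then remains to read off the bounds from Theorem~\ref{thrm:eps-diag} applied to $\epsilon_{n\times m}:\mathcal{R}_{n\times l}\rightarrow\mathcal{R}_{n\times m}$, i.e.\ with the roles of $m$ and $n$ interchanged. The column-major bound of Theorem~\ref{thrm:eps-diag}, namely $\lfloor \tfrac{n}{l}\rfloor+1$, becomes $\lfloor \tfrac{m}{l}\rfloor+1$ and yields the \textbf{row-major} claim for $\omega$; the row-major bound $(\lfloor \tfrac{n}{l}\rfloor+2)\cdot m$ becomes $(\lfloor \tfrac{m}{l}\rfloor+2)\cdot n$ and yields the \textbf{column-major} claim for $\omega$. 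The special case $m=l$ for $\omega$ corresponds to the special case $n=l$ for $\epsilon$, where Theorem~\ref{thrm:eps-diag} gives at most $2$ non-zero diagonals in either order, settling the last assertion.

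I expect the only delicate point to be the bookkeeping in the second step: one must check that the transpose-flattening correspondence is a genuine equality of index-to-index maps (not merely a reindexing up to permutation), so that the diagonal structure—and hence the diagonal count—is exactly preserved, and that the parameter substitution $m\leftrightarrow n$ is tracked consistently through all four bounds. Should a fully self-contained argument be preferred, the same bounds follow by a direct computation mirroring the proof of Theorem~\ref{thrm:eps-diag}: one writes the fixed offset $j-i$ along a diagonal of $\textbf{U}^{\omega^k_{m\times n}}$ and counts the admissible values of the relevant floor term, interchanging the roles of $m$ and $n$ and of the two flattening orders throughout.
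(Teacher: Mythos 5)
Your proof is correct, and it takes a genuinely different route from the paper's. The paper proves Theorem~\ref{thrm:omg-diag} by direct computation, mirroring its proof of Theorem~\ref{thrm:eps-diag}: for the column-major case it starts from the explicit formula~(\ref{eqn:Uomgc}), writes the offset $j-i$ along a diagonal, splits the row index into $n$ blocks $i\in[pm,(p+1)m)$, and counts at most $2+\left\lfloor m/l\right\rfloor$ admissible values of the floor term per block; for the row-major case it observes---as you do---that the relevant transformation matrix is the one of~(\ref{eqn:Uepsc}) applied to the transposed shape $\mathcal{B}_{n\times l}$, but then it recounts the diagonals from scratch instead of invoking Theorem~\ref{thrm:eps-diag}. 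You promote that observation to an exact duality, $\omega^k_{m\times n}(\mathcal{B})=\bigl(\epsilon^k_{n\times m}(\mathcal{B}^T)\bigr)^T$, combine it with the fact that transposition interchanges row- and column-major flattening, conclude that $\textbf{U}^{\omega^k_{m\times n}}$ in each order is literally equal to $\textbf{U}^{\epsilon^k_{n\times m}}$ in the opposite order, and then read all four bounds (and the $m=l$ special case) off Theorem~\ref{thrm:eps-diag} with $m$ and $n$ interchanged. The two facts your reduction leans on---uniqueness of the matrix representing a linear map in fixed bases, and the diagonal count being intrinsic to that matrix---are both valid, and you flag them explicitly. What your route buys is economy and robustness: the symmetry is established once, no index manipulation is duplicated, and the bookkeeping slips that creep into the paper's repeated computation (e.g., its column-major conclusion is misstated as holding ``when $\mathcal{B}_{m\times l}$ in row-major'') are avoided by construction. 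What the paper's direct computation buys is self-containedness and explicit diagonal offsets $j-i$, which are what an implementation of the rotations in Equation~(\ref{eqn:lintrans}) actually needs.
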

\begin{proof}
When applying $\omega$ transformation on matrix $\mathcal{B}_{l\times n}$ in \textbf{column-major} order, $\textbf{U}^\omega$ is formulated in Equation~(\ref{eqn:Uomgc}). Note that  $\textbf{U}^{\omega^k_{m\times n}}_{i,j} = 1$ when $j=[k+[i]_m]_{l}+\left \lfloor i/m \right \rfloor \cdot l$ and, for all elements of $\textbf{U}^{\omega^k_{m\times n}}_{i,j}$ that belong to the same diagonal, we have $j-i$ as a constant. 

Considering all the non-zero elements in $\textbf{U}^{\omega^k_{m\times n}}_{i,j}$, we have
\begin{eqnarray}
j&=& k+[i]_m-\left \lfloor \frac{k+[i]_m}{l}\right \rfloor \cdot l+\left \lfloor \frac{i}{m} \right \rfloor\cdot l\nonumber \\ 
&=&k+[i]_m+(\left \lfloor \frac{i}{m} \right \rfloor - \left \lfloor \frac{k+[i]_m}{l}\right \rfloor)\cdot l \nonumber 
\end{eqnarray}

Since $i\in[0,mn)$, we split $i$ to $n$ circumstances that $i\in[pm,(p+1)m)$ where $p =\{0,1,2,...,n-1\}$. For each circumstance that $i\in[pm,(p+1)m)$, we have
\begin{equation}
j= k+i-pm+(p - \left \lfloor \frac{k+[i]_m}{l}\right \rfloor)\cdot l \nonumber 
\end{equation}
and
\begin{equation}
j-i= k-pm+(p - \left \lfloor \frac{k+[i]_m}{l}\right \rfloor)\cdot l \nonumber 
\end{equation}
Note that we have
\begin{equation}
    \left \lfloor \frac{[pm]_m}{l}\right \rfloor
\le
\left \lfloor \frac{k+[i]_m}{l}\right \rfloor
<
\left \lfloor \frac{[pm]_m}{l}\right \rfloor +
\left \lfloor \frac{m}{l}\right \rfloor +1+1
\nonumber
\end{equation}
which has $2 + \left \lfloor \frac{m}{l} \right \rfloor$ constant values. And this means $j-i$ , which represents the number of non-zero diagonals in $\textbf{U}^{\omega^k_{m\times n}}$, has $(2 + \left \lfloor \frac{m}{l} \right \rfloor)\cdot n$ in total when $\mathcal{B}_{m\times l}$ in \textbf{row-major} because there are $n$ circumstances.

Special circumstances is when $m=l$, $j-i \in \{0,1\}$. The reason is that, since
\begin{equation}
    \left \lfloor \frac{k}{l}\right \rfloor + \left \lfloor \frac{[i]_l}{l}\right \rfloor \le \left \lfloor \frac{k+[i]_l}{l}\right \rfloor \le \left \lfloor \frac{k}{l}\right \rfloor + \left \lfloor \frac{[i]_l}{l}\right \rfloor +1 \nonumber
\end{equation}
and we also have $k<l$ and $ {[i]_l}<{l}$, thus
\begin{equation}
    0 \le \left \lfloor \frac{k+[i]_l}{l}\right \rfloor \le 1 \nonumber
\end{equation}
On the other hand, we have
\begin{equation}
j-i= k - \left \lfloor \frac{k+[i]_l}{l}\right \rfloor\cdot l \nonumber 
\end{equation}
for each $i\in[pm,(p+1)m)$. $j-i$ has the same constant value in each $i\in[pm,(p+1)m)$ and this means $\textbf{U}^{\omega^k_{m\times n}}$ has only \textbf{2} non-zero diagonals when $m=l$.

When applying $\omega$ transformation on matrix $\mathcal{B}_{l\times n}$ in \textbf{row-major} order, we can formulate permutation matrix according to formula~(\ref{eqn:Uepsc}), but apply on $\mathcal{B}_{n\times l}$ instead of $\mathcal{B}_{m\times l}$. Note that  $\textbf{U}^{\omega^k_{m\times n}}_{i,j} = 1$ when $j=[k\cdot n+i]_{n\cdot l}$ and, for all elements of $\textbf{U}^{\omega^k_{m\times n}}_{i,j}$ that belong to the same diagonal, we have $j-i$ as a constant. 

Considering all the non-zero elements in $\textbf{U}^{\omega^k_{m\times n}}_{i,j}$, we have
\begin{eqnarray}
j-i &=& [k\cdot n+i]_{n\cdot l} - i \nonumber \\ 
&=& k\cdot n+i-\left \lfloor \frac{k\cdot n+i}{n\cdot l}\right \rfloor \cdot n\cdot l - i \nonumber \\ 
&=&k\cdot n-\left \lfloor \frac{k\cdot n+i}{n\cdot l}\right \rfloor \cdot n\cdot l \nonumber    
\end{eqnarray}

Since $\max(k)=l-1$ and $\max(i)=m\cdot n-1$, we have 
\begin{eqnarray}
\max(\frac{k\cdot n+i}{n\cdot l})
&<&\frac{l-1+m}{l} \nonumber \\ 
&\le& \left \lfloor \frac{l-1}{l} \right \rfloor + \left \lfloor \frac{m}{l} \right \rfloor +1 \nonumber \\ 
&=&\left \lfloor \frac{m}{l} \right \rfloor +1 \nonumber 
\end{eqnarray}

Therefore, we get $\left \lfloor \frac{k\cdot n+i}{n\cdot l}\right \rfloor \in \{0,1,...,\left \lfloor \frac{m}{l} \right \rfloor\}$. Then, $j-i=k\cdot n-\left \lfloor \frac{k\cdot n+i}{n\cdot l}\right \rfloor\cdot n\cdot l$. Here, $k$, $n$ and $l$ are all constant number for one transformation. 
The set $\{0,1,...,\left \lfloor \frac{m}{l} \right \rfloor\}$ is of size $\left \lfloor \frac{m}{l} \right \rfloor +1$. In summary, $\textbf{U}^{\omega^k_{m\times n}}$ has at most $\left \lfloor \frac{m}{l} \right \rfloor +1$ constant values when $\mathcal{B}_{m\times l}$ in \textbf{row-major}.

Special circumstances is when $m=l$, $\left \lfloor \frac{m}{l} \right \rfloor=1$. Therefore, $\left \lfloor \frac{m}{l} \right \rfloor+1=2$ and this means $\textbf{U}^{\omega^k_{m\times n}}$ has only \textbf{2} non-zero diagonals when $m=l$..
\end{proof}

\begin{theorem*}~{\rm \textbf{\ref{thrm:dupA}}}
Let $\mathcal{A}_{m\times l}$ and $\mathcal{B}_{l\times n}$ with $m < l$, and let $\bar{A}$ be matrix expanded with $t=\left \lceil \frac{l}{m} \right \rceil$ copies of $\mathcal{A}$ vertically, i.e., $\bar{\mathcal{A}} = \{\bar{A_0}; \bar{A_1}; ...; \bar{A}_{(t-1)}\}^T$ 
with $\bar{A_0}=\bar{A_1}=...= \bar{A}_{(t-1)}=\mathcal{A}_{m\times l}$. Then
\begin{itemize}
\item $\epsilon^k_{tm\times n} ( \sigma(\bar{\mathcal{A}}))\odot \omega^k_{tm\times n} ( \tau(\mathcal{B}))$ contains $t$ items of $\epsilon^p_{m\times n} ( \sigma(\mathcal{A}))\odot \omega^p_{m\times n} ( \tau(\mathcal{B}))$, with $p \in \{[k]_l,[k+m]_l,..., [k+(t-1)m]_l\}$.
\item $\epsilon^k_{tm\times n} ( \sigma(\bar{\mathcal{A}}))\odot \omega^k_{tm\times n} ( \tau(\mathcal{B}))$, $k=0,1,...,(m-1)$ contains all items of $\epsilon^p_{m\times n} (\sigma(\mathcal{A}))\odot \omega^p_{m\times n} (\tau(\mathcal{B}))$, with $p\in \{0, 1, ..., (l-1)\}$.
\end{itemize}
\end{theorem*}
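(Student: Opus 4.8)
The plan is to reduce the entire statement to the single index-level identity already established inside Equation~(\ref{form:proof}), namely that $(\epsilon^k_{m\times n}\circ\sigma(\mathcal{A}))_{i,j}\cdot(\omega^k_{m\times n}\circ\tau(\mathcal{B}))_{i,j}=\mathcal{A}_{i,[i+j+k]_l}\cdot\mathcal{B}_{[i+j+k]_l,j}$, and then to exploit the periodic row structure of $\bar{\mathcal{A}}$. First I would record the elementary fact that, since $\bar{\mathcal{A}}$ consists of $t$ vertically stacked copies of $\mathcal{A}$, its entries satisfy $\bar{\mathcal{A}}_{i,c}=\mathcal{A}_{[i]_m,c}$ for all $0\le i<tm$. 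Running the same chain of substitutions that produced Equation~(\ref{form:proof}), but now on the pair $(\bar{\mathcal{A}},\mathcal{B})$ over the enlarged row range $0\le i<tm$, yields $(\epsilon^k_{tm\times n}\circ\sigma(\bar{\mathcal{A}}))_{i,j}\cdot(\omega^k_{tm\times n}\circ\tau(\mathcal{B}))_{i,j}=\mathcal{A}_{[i]_m,[i+j+k]_l}\cdot\mathcal{B}_{[i+j+k]_l,j}$, so every entry of the enlarged partial product is once again a single $\mathcal{A}$-entry times a single $\mathcal{B}$-entry drawn from the original matrices.

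Next I would decompose the enlarged row index as $i=qm+r$ with $q=\lfloor i/m\rfloor\in\{0,\dots,t-1\}$ and $r=[i]_m\in\{0,\dots,m-1\}$, and set $p=[qm+k]_l$. The crucial reduction is that the common column shift collapses modulo $l$: $[i+j+k]_l=[qm+r+j+k]_l=[r+j+p]_l$, because $qm+k\equiv p\pmod l$. Substituting this back and reading Equation~(\ref{form:proof}) in reverse — this time with row index $r$ and shift $p$ on the \emph{unexpanded} matrices — shows that the block of rows $\{qm,\dots,qm+m-1\}$ of $\epsilon^k_{tm\times n}(\sigma(\bar{\mathcal{A}}))\odot\omega^k_{tm\times n}(\tau(\mathcal{B}))$ coincides exactly with the full $m\times n$ partial product $\epsilon^p_{m\times n}(\sigma(\mathcal{A}))\odot\omega^p_{m\times n}(\tau(\mathcal{B}))$. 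Letting $q$ range over $\{0,\dots,t-1\}$ makes $p$ range over $\{[k]_l,[k+m]_l,\dots,[k+(t-1)m]_l\}$, which settles the first bullet.

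For the second bullet I would invoke a short counting argument. As $(q,k)$ ranges over $\{0,\dots,t-1\}\times\{0,\dots,m-1\}$, the integer $qm+k$ ranges bijectively over $\{0,1,\dots,tm-1\}$. Since $t=\lceil l/m\rceil$ forces $tm\ge l$, the initial segment $\{0,1,\dots,l-1\}$ already sits inside this range, so reducing modulo $l$ the family $\{[qm+k]_l\}$ attains every residue in $\{0,1,\dots,(l-1)\}$. Therefore collecting the blocks over $k=0,\dots,(m-1)$ produces every partial product $\epsilon^p_{m\times n}(\sigma(\mathcal{A}))\odot\omega^p_{m\times n}(\tau(\mathcal{B}))$ for $p\in\{0,\dots,(l-1)\}$, with repetitions whenever $tm>l$, which is exactly the source of the redundant copies flagged for removal in the algorithm.

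The step I expect to be the main obstacle is the simultaneous index bookkeeping: the row collapse $\bar{\mathcal{A}}_{i,c}=\mathcal{A}_{[i]_m,c}$ is immediate, but one must verify that pulling the multiple $qm$ out of $[i+j+k]_l$ to form the shift $p=[qm+k]_l$ is \emph{consistent} with simultaneously reducing the row index to $r=[i]_m$, so that no mismatch between the row block and the shift is introduced. Leaning on Equation~(\ref{form:proof}) rather than re-deriving everything from the transformation-matrix definitions (\ref{eqn:Usigma})--(\ref{eqn:Uomgc}) compresses this arithmetic to essentially one modular congruence and sidesteps the main pitfall.
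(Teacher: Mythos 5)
Your proof is correct and takes essentially the same route as the paper's: both decompose the expanded product into $t$ row-blocks of height $m$, identify the block indexed by $q$ (the paper's $h$) with the partial product of shift $p=[qm+k]_l$ via the same modular index computation, and prove the second bullet by observing that $tm\ge l$ lets every $p\in\{0,\dots,l-1\}$ be written as $[qm+k]_l$ with $0\le k<m$, $0\le q<t$. The only cosmetic difference is that you verify equality at the level of element-wise products by leaning on Equation~(\ref{form:proof}), whereas the paper checks the $\epsilon\circ\sigma$ and $\omega\circ\tau$ factors separately.
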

\begin{proof}
Consider a sub matrix of $(\epsilon^k_{tm\times n} \circ \sigma({\bar{\mathcal{A}})})$ with dimension of $m\times n$, i.e., 
$ (\epsilon^k_{tm\times n} \circ \sigma({\bar{\mathcal{A}})})_{hm+i,j}$, where 
$0 \leq i < m, 0 \leq j < n$. $h$ is a constant with $0 \leq h < t$. Based on equation (\ref{def:sigma}) and (\ref{def:eps}), we have 
\begin{eqnarray}
    (\epsilon^k_{tm\times n} \circ \sigma({\bar{\mathcal{A}})})_{hm+i,j} &=&     \sigma({\bar{\mathcal{A}}})_{hm+i,[j+k]_l} \nonumber \\
    & = & {\bar{\mathcal{A}}}_{hm+i,[hm+i+j+k]_l} \nonumber \\
    &=& \mathcal{A}_{i,[hm+i+j+k]_l} \label{eqn:prf5-1}
\end{eqnarray}

On the other hand, let $p = [k+hm]_l$, for $0 \leq i < m, 0 \leq j < n$, we have 
\begin{eqnarray}
 (\epsilon^p_{m\times n} \circ \sigma({\mathcal{A}}))_{i,j}
 &=& \sigma({\mathcal{A})_{i,[j+p]_l}} \nonumber\\
 &=& \mathcal{A}_{i,[i+j+k+hm]_l}
\end{eqnarray}

Similarly, consider the sub matrix of $(\omega^k_{tm\times n} \circ \tau({\mathcal{B}}))$ with dimension of $m\times n$, i.e., 
$(\omega^k_{tm\times n} \circ \tau({\mathcal{B}}))_{hm+i,j}$, with $0 \leq i < m, 0 \leq j < n$. Based on equation (\ref{def:tau}) and (\ref{def:omega}), we have 
\begin{eqnarray}
    (\omega^k_{tm\times n} \circ \tau({\mathcal{B}}))_{hm+i,j}
    &=& \tau({\mathcal{B}})_{[hm+i+k]_l,j} \nonumber \\
   &=&{\mathcal{B}}_{[hm+i+j+k]_l,j}
\end{eqnarray}
If we let $p = [k+hm]_l$, for $0 \leq i < m, 0 \leq j < n$, and $0 \leq h < t$, we have 
\begin{eqnarray}
 \omega^p_{m\times n} \circ \tau({\mathcal{B}})_{i,j}
 &=& \tau({\mathcal{B}})_{[i+p]_l,j} \nonumber \\
 &=& {\mathcal{B}}_{[i+k+hm+j]_l,j} \label{eqn:prf5-4}
\end{eqnarray}

Since $0 \leq h < t$, there are total $t$ sub matrices in $\epsilon^k_{tm\times n} ( \sigma(\bar{\mathcal{A}}))$ and $\omega^k_{tm\times n}(\tau(\mathcal{B}))$, the conclusion for the first part of the theorem follows naturally from equation (\ref{eqn:prf5-1}) to (\ref{eqn:prf5-4}).

To prove the second part of the theorem, we only need to note that since $t=\left \lceil \frac{l}{m} \right \rceil$, we have $tm \ge l$. Therefore, for any $p\in \{0, 1, ..., (l-1)\}$, we must be able to find at least one set of $k$ and $h$, with $0 \leq k < m$, $0 \leq h <t$, and $p = [k+hm]_l$. Together with equation (\ref{eqn:prf5-1}) to (\ref{eqn:prf5-4}), we thus prove the theorem.

\end{proof}

\begin{theorem*}~{\rm \textbf{\ref{thrm:dupB}}}
Let $\mathcal{A}_{m\times l}$ and $\mathcal{B}_{l\times n}$ with $n < l$, and let $\bar{\mathcal{B}}$ be matrix expanded with $t=\left \lceil \frac{l}{n} \right \rceil$ copies of $\mathcal{B}$ horizontally, i.e., $\bar{\mathcal{B}} = \{\mathcal{B}; \mathcal{B}; ...; \mathcal{B}\}$. Then 

\begin{itemize}
\item {$\epsilon^k_{m\times tn}( \sigma(\mathcal{A}))\odot \omega^k_{m\times tn}( \tau(\bar{\mathcal{B}}))$ contains $t$ items of $\epsilon^p_{m\times n}( \sigma(\mathcal{A}))\odot \omega^p_{m\times n}( \tau(\mathcal{B}))$, with $p=[k]_l, [k+n]_l,..., [k+(t-1)n]_l$;}

\item {$\epsilon^k_{m\times tn}( \sigma(\mathcal{A}))\odot \omega^k_{m\times tn}( \tau(\bar{\mathcal{B}})$, $k=0,1,...,(n-1)$ contains all items of $\epsilon^p_{m\times n}( \sigma(\mathcal{A}))\odot \omega^p_{m\times n}( \tau(\mathcal{B}))$, with $p=0, 1, ..., (l-1)$.}
\end{itemize}
\end{theorem*}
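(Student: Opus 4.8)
The plan is to mirror the proof of Theorem~\ref{thrm:dupA}, exchanging the roles of rows and columns: since $\bar{\mathcal{B}}$ is obtained by duplicating $\mathcal{B}$ \emph{horizontally}, the natural decomposition of the $m\times tn$ product matrix is into $t$ column-blocks of width $n$, indexed by $h\in\{0,1,\dots,t-1\}$, rather than the row-blocks used for $\bar{\mathcal{A}}$. First I would fix such a block and compute, for $0\le i<m$ and $0\le j<n$, the entry of $(\epsilon^k_{m\times tn}\circ\sigma(\mathcal{A}))$ at column $hn+j$. Using the definitions~(\ref{def:sigma}) and~(\ref{def:eps}),
\begin{equation*}
(\epsilon^k_{m\times tn}\circ\sigma(\mathcal{A}))_{i,hn+j}=\sigma(\mathcal{A})_{i,[hn+j+k]_l}=\mathcal{A}_{i,[i+hn+j+k]_l},
\end{equation*}
and setting $p=[k+hn]_l$ one checks that this coincides with $(\epsilon^p_{m\times n}\circ\sigma(\mathcal{A}))_{i,j}$.

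Next I would treat the $\omega$-factor, where the only real subtlety lies. Because $\bar{\mathcal{B}}$ repeats the columns of $\mathcal{B}$ with period $n$, we have $\bar{\mathcal{B}}_{i,c}=\mathcal{B}_{i,[c]_n}$, so from~(\ref{def:tau}) and~(\ref{def:omega}),
\begin{equation*}
(\omega^k_{m\times tn}\circ\tau(\bar{\mathcal{B}}))_{i,hn+j}=\mathcal{B}_{[i+k+hn+j]_l,[hn+j]_n}=\mathcal{B}_{[i+k+hn+j]_l,j},
\end{equation*}
where the column index collapses via $[hn+j]_n=j$ for $0\le j<n$; with the same $p=[k+hn]_l$ this matches $(\omega^p_{m\times n}\circ\tau(\mathcal{B}))_{i,j}$. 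Taking the element-wise product of the two factors block by block then shows that the $h$-th column-block of $\epsilon^k_{m\times tn}(\sigma(\mathcal{A}))\odot\omega^k_{m\times tn}(\tau(\bar{\mathcal{B}}))$ is exactly $\epsilon^p_{m\times n}(\sigma(\mathcal{A}))\odot\omega^p_{m\times n}(\tau(\mathcal{B}))$ with $p=[k+hn]_l$. Letting $h$ run over $\{0,\dots,t-1\}$ yields the $t$ claimed items and establishes the first bullet.

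For the second bullet I would argue by coverage. As $k$ ranges over $\{0,\dots,n-1\}$ and $h$ over $\{0,\dots,t-1\}$, the integers $k+hn$ are precisely $\{0,1,\dots,tn-1\}$, a block of $tn$ consecutive values starting at $0$. Since $t=\left\lceil l/n\right\rceil$ forces $tn\ge l$, reducing these modulo $l$ sweeps through every residue in $\{0,1,\dots,l-1\}$ at least once, so every $p\in\{0,\dots,l-1\}$ is realized, which is the second bullet.

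The main obstacle I anticipate is bookkeeping rather than conceptual: I must keep the index conventions straight when the duplication acts on columns instead of rows, in particular verifying that the modular column reduction $[hn+j]_n=j$ interacts correctly with the $\tau$ and $\omega$ operators so that the repeated copies of $\mathcal{B}$ remain aligned with the shifted copies of $\sigma(\mathcal{A})$. Once this alignment is confirmed, the argument closes exactly as in Theorem~\ref{thrm:dupA}.
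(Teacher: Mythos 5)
Your proposal is correct and follows essentially the same route as the paper's own proof: the same decomposition of the $m\times tn$ product into $t$ column-blocks of width $n$, the same entry-wise computations showing the $h$-th block equals $\epsilon^p_{m\times n}(\sigma(\mathcal{A}))\odot\omega^p_{m\times n}(\tau(\mathcal{B}))$ with $p=[k+hn]_l$, and the same coverage argument from $tn\ge l$ for the second bullet. Your explicit use of $\bar{\mathcal{B}}_{i,c}=\mathcal{B}_{i,[c]_n}$ and your observation that $\{k+hn\}$ enumerates $\{0,\dots,tn-1\}$ are just slightly more explicit renderings of steps the paper states in passing (and you correctly use $0\le k<n$ where the paper's appendix has a typo saying $0\le k<m$).
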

\begin{proof}
Consider a sub matrix of $(\epsilon^k_{m\times tn} \circ \sigma({\mathcal{A})})$ with dimension of $m\times n$, i.e., 
$ (\epsilon^k_{m\times tn} \circ \sigma({\mathcal{A})})_{i,hn+j}$, where 
$0 \leq i < m, 0 \leq j < n$. $h$ is a constant with $0 \leq h < t$. Based on equation (\ref{def:sigma}) and (\ref{def:eps}), we have 
\begin{eqnarray}
    (\epsilon^k_{m\times tn} \circ \sigma({\mathcal{A})})_{i,hn+j} &=&     \sigma({\mathcal{A}})_{i,[hn+j+k]_l} \nonumber \\
    & = & {\mathcal{A}}_{i,[i+hn+j+k]_l} \label{eqn:prf6-1}
\end{eqnarray}

On the other hand, let $p = [k+hn]_l$, for $0 \leq i < m, 0 \leq j < n$, we have 
\begin{eqnarray}
 (\epsilon^p_{m\times n} \circ \sigma({\mathcal{A}}))_{i,j}
 &=& \sigma({\mathcal{A})_{i,[j+p]_l}} \nonumber\\
 &=& \mathcal{A}_{i,[i+j+k+hn]_l}.
\end{eqnarray}

Similarly, consider the sub matrix of $(\omega^k_{m\times tn} \circ \tau(\bar{\mathcal{B}}))$ with dimension of $m\times n$, i.e., 
$(\omega^k_{m\times tn} \circ \tau(\bar{\mathcal{B}}))_{i,hn+j}$, with $0 \leq i < m, 0 \leq j < n$. Based on equation (\ref{def:tau}) and (\ref{def:omega}), we have 
\begin{eqnarray}
    (\omega^k_{m\times tn} \circ \tau(\bar{\mathcal{B}}))_{i,hn+j}
    &=& \tau(\bar{\mathcal{B}})_{[i+k]_l,hn+j} \nonumber \\
   &=&\bar{\mathcal{B}}_{[hn+i+j+k]_l,hn+j} \nonumber \\
   &=&\mathcal{B}_{[hn+i+j+k]_l,j}
\end{eqnarray}
If we let $p = [k+hn]_l$, for $0 \leq i < m, 0 \leq j < n$, and $0 \leq h < t$, we have 
\begin{eqnarray}
 \omega^p_{m\times n} \circ \tau({\mathcal{B}})_{i,j}
 &=& \tau({\mathcal{B}})_{[i+p]_l,j} \nonumber \\
 &=& {\mathcal{B}}_{[i+k+hn+j]_l,j} \label{eqn:prf6-4}
\end{eqnarray}

Since $0 \leq h < t$, there are total $t$ sub matrices in $\epsilon^k_{m\times tn} ( \sigma(\mathcal{A}))$ and $\omega^k_{m\times tn}(\tau(\mathcal{B}))$, the conclusion for the first part of the theorem follows naturally from equation (\ref{eqn:prf6-1}) to (\ref{eqn:prf6-4}).

To prove the second part of the theorem, we only need to note that since $t=\left \lceil \frac{l}{n} \right \rceil$, we have $tn \ge l$. Therefore, for any $p\in \{0, 1, ..., (l-1)\}$, we must be able to find at least one set of $k$ and $h$, with $0 \leq k < m$, $0 \leq h <t$, and $p = [k+hn]_l$. Together with equation (\ref{eqn:prf6-1}) to (\ref{eqn:prf6-4}), we thus prove the theorem.

\end{proof}

\section{meaning of symbolize}
% Table generated by Excel2LaTeX from sheet 'Sheet1'
\begin{table}[htbp]
  \centering
  \caption{Meaning of symbolize}
    \begin{tabular}{c|l}
    \multicolumn{1}{c|}{\textbf{Symbolize}} & \multicolumn{1}{c}{\textbf{Meaning}}\\
    \hline
    $\mathcal{A}$ & left matrix for matrix multiplicaiton \\
    \hline
    $\mathcal{B}$ & right matrix for matrix multiplicaiton \\
    \hline
    $m$  & the number of row of matrix $\mathcal{A}$ \\
    \hline
    \multirow{2}{*}{$l$} & the number of column of matrix $\mathcal{A}$ \\
\cline{2-2}         & \multicolumn{1}{p{20.5em}}{the number of row of matrix $\mathcal{B}$} \\
    \hline
    $n$  & the number of column of matrix $\mathcal{B}$ \\
    \hline
    $\sigma$ & The transformation that permute each row  \\
    \hline
    $\tau$ & The transformation that permute each column  \\
    \hline
    $\epsilon$ & The transformation that permute mutiple columns \\
    \hline
    $\omega$ & The transformation that permute mutiple rows \\
    \hline
    $ct$ & the prefix  of ciphertext \\
    \hline
    $\textbf{U}$ & permutation matrix \\
    \hline
    $\odot$ & elemenwise multiplication\\
    \end{tabular}%
  \label{tab:addlabel}%
\end{table}%

\end{document}